\documentclass[11pt,showkeys]{article}
\usepackage[utf8x]{inputenc}
\usepackage{amssymb,amsfonts,amssymb,bbm}
\usepackage{tikz}
\usepackage{hyperref}

\def\Rset{\mathbb{R}} 
\def\Sset{\mathbb{S}} 
\def\un{\mathbbm{1}} 
\def\Real{\Re e} 
\def\;{\, ; \,} 
\def\opA{\mathfrak{A}} 
\def\opE{\mathfrak{E}} 
\def\opT{\mathfrak{T}} 
\def\Beta{\mathfrak{B}} 
\def\G{\mathfrak{G}} 
\def\Aset{\mathcal{A}} 
\def\Bset{\mathcal{B}} 
\def\Dset{\mathcal{D}} 
\def\Lset{\mathcal{L}} 
\def\Rad{\mathcal{R}} 
\def\Mad{\mathcal{M}} 
\def\Hyp{\mathcal{Y}} 
\def\Lag{\mathcal{L}} 
\def\Geg{\mathcal{G}} 
\def\vr{\vec{r}} 
\def\vk{\vec{k}} 
\def\tr{\tilde{r}} 
\def\tk{\tilde{k}} 
\def\hyd{{(\mathrm{h})}} 
\def\osc{{(\mathrm{o})}} 
\def\d{d} 
\newtheorem{Definition}{Definition}
\newtheorem{Proposition}{Proposition}
\newtheorem{proof}{proof}
\newtheorem{Property}{Property}

\title{On Generalized Stam Inequalities and Fisher--R\'enyi Complexity Measures}
\author{\large Steeve Zozor\footnote{steeve.zozor@gipsa-lab.grenoble-inp.fr},\vspace{0.3cm}\\\footnotesize GIPSA-Lab, Universit\'e Grenoble Alpes,\\\footnotesize  11 rue des Math\'ematiques, Grenoble 38420, France\vspace{0.3cm}\\\large  David Puertas-Centeno\footnote{puertascenteno@gmail.com}\, and Jes\'us S. Dehesa\footnote{dehesa@ugr.es}\vspace{0.3cm}\\\footnotesize Departamento  de   F\'isica  At\'omica,  Molecular   y  Nuclear,\\\footnotesize   Universidad de Granada, Granada 18071, Spain and\\\footnotesize  Instituto  Carlos   I  de  F\'isica  Te\'orica  y  Computacional,\\\footnotesize   Universidad  de   Granada,  Granada  18071,   Spain}

\date{12 September 2017}
\pdfinfo{%
  /Title    ()
  /Author   ()
  /Creator  ()
  /Producer ()
  /Subject  ()
  /Keywords ()
}

\begin{document}
\maketitle
\small{Information-theoretic inequalities play a fundamental role in numerous
  scientific  and  technological   areas  (e.g.,  estimation  and  communication
  theories, signal and information  processing, quantum physics, \ldots) as they
  generally express the impossibility to have a complete description of a system
  via a finite number of information measures.  In particular, they gave rise to
  the  design of various  quantifiers (statistical  complexity measures)  of the
  internal  complexity of a  (quantum) system.   In this  paper, we  introduce a
  three-parametric           Fisher--R\'enyi          complexity,          named
  $(p,\beta,\lambda)$-Fisher--R\'enyi complexity, based  on both a two-parametic
  extension of the Fisher information and the R\'enyi entropies of a probability
  density function $\rho$ characteristic of the system.  This complexity measure
  quantifies the combined balance of  the spreading and the gradient contents of
  $\rho$, and  has the  three main properties  of a statistical  complexity: the
  invariance  under translation  and  scaling transformations,  and a  universal
  bounding  from  below.   The  latter   is  proved  by  generalizing  the  Stam
  inequality, which lowerbounds the product of the Shannon entropy power and the
  Fisher information  of a probability  density function.  An extension  of this
  inequality was  already proposed by Bercher  and Lutwak, a  particular case of
  the general one, where the  three parameters are linked, allowing to determine
  the  sharp lower  bound and  the associated  probability density  with minimal
  complexity.  Using the notion  of differential-escort deformation, we are able
  to determine  the sharp bound  of the complexity  measure even when  the three
  parameters  are decoupled  (in a  certain range).   We determine  as  well the
  distribution that  saturates the inequality:  the $(p,\beta,\lambda)$-Gaussian
  distribution, which  involves an  inverse incomplete beta  function.  Finally,
  the complexity measure is  calculated for various quantum-mechanical states of
  the  harmonic and hydrogenic  systems, which  are the  two main  prototypes of
  physical systems subject to a central potential.}

\vspace{0.3cm}
\textit{Keywords: }{$(p,\beta,\lambda)$-Fisher--R\'enyi  complexity;  extended  sharp  Stam
  inequality;   $(p,\beta,\lambda)$-Gaussian   distributions;   application   to
  $d$-dimensional central potential quantum systems}

\section{Introduction}
\normalsize
The  definition of  complexity measures  to  quantify the  internal disorder  of
physical  systems is  an important  and challenging  task in  science, basically
because  of the  many facets  of the  notion of  disorder~\cite{Sen11, LopMan95,
  Lop05,   ChaMou05,   SenPan07,   BiaRud10,   DehLop10:12,   Hua13,   EbeMol00,
  AngAnt08:04, RosOsp06,  TorSan17}.  It  seems clear that  a unique  measure is
unable  to  capture  the  essence  of  such  a  vague  notion.   In  the  scalar
continuous-state  context we consider  in this  paper, many  complexity measures
based on the probability distribution  describing a system have been proposed in
the literature, attempting to  capture simultaneously the spreading (global) and
the  oscillatory  (local)   behaviors  of  such  a  distribution~\cite{AngRom00,
  DehLop10:07,  LopEsq10,  RomSan06,  SanZoz11,  ZozPor11,  MarPla00,  PorPla96,
  MasPan98,  GueSan11,  DehSan06,  AntAng09,  AngAnt08,  RomDeh04,  AngAnt08:04,
  TorSan17,  PueTor17,  SobPue17,  PueTor17:preprint}.   They mostly  depend  on
entropy-like  quantities such  as the  Shannon entropy~\cite{Sha48},  the Fisher
information~\cite{Fis22} and their  generalizations.  The measures of complexity
of a probability density $\rho$ proposed up until now, say $C[\rho]$, making use
of   two  information-theoretic  properties,   share  several   properties  (see
e.g.,~\cite{RudTor16}),  such as  e.g., the  invariance by  translation or  by a
scaling  factor  (i.e.,  for  any  $x_0   \in  \Rset$  and  $\sigma  >  0$,  for
$\widetilde{\rho}(x) = \frac{1}{\sigma} \rho\left(\frac{x-x_0}{\sigma} \right),$
they satisfy  $C[\widetilde{\rho}] = C[\rho]$).  For instance,  the disorder may
be  invariant  from  a move  of  a  (referential  independent) center  of  mass.
Moreover,  all the  proposed measures  are also  lowerbounded, which  means that
there exists in  a certain sense a distribution of  minimal complexity, which is
the probability density that reaches the lower~bound.

In this paper,  we generalize the complexity measures  of global-local character
published in  the literature  (see e.g.,~\cite{DehSan06,  AngAnt08:04, AntAng09,
  RomDeh04, TorSan17, PueTor17, PueTor17:preprint})  to grasp both the spreading
and the  fluctuations of a probability  density $\rho$ by the  introduction of a
three-parametric   Fisher--R\'enyi  complexity,   which  involves   the  R\'enyi
entropy~\cite{Ren61} and generalized Fisher information~\cite{LutYan05, LutLv12,
  Ber12:06_1}.   The  products of  these  two generalized  information-theoretic
tools, which are translation and  scaling invariant as well as lowerbounded, can
be used as generalized complexity measures of $\rho$.

Historically, the first inequality involving  the Shannon entropy and the Fisher
information was proved by Stam~\cite{Sta59} under the form
\begin{equation}
F[\rho] N[\rho] \ge 2 \pi e,
\label{eq:Stam}
\end{equation}
where $F$ and  $N$ are, respectively, the (nonparametric)  Fisher information of
$\rho$,
\begin{equation}
F[\rho] = \int_\Rset \left( \frac{d}{dx} \log[\rho(x)] \right)^2 \, \rho(x) \,
dx
\end{equation}
and the  Shannon entropy power  of $\rho$, i.e.,  an exponential of  the Shannon
entropy $H$,
\begin{equation}
N[\rho] = \exp\left( 2 H[\rho] \right) \qquad \mbox{where} \qquad H[\rho] = -
\int_\Rset \rho(x) \, \log[\rho(x)] \, dx.
\end{equation}
In fact, the  Fisher information concerns a density  parametrized by a parameter
$\theta$ and  the derivative is vs  $\theta$. When this parameter  is a position
parameter, this  leads to the nonparametric Fisher  information.  Concerning the
entropy power, more rigorously, a factor $\frac{1}{2 \pi e}$ affects $N$ and the
bound  in the  Stam  inequality is  then  unity.  This  factor  does not  change
anything for our purpose, hence, for  sake of simplicity, we omit it.  The lower
bound in  Inequality~ \ref{eq:Stam} is  achieved for the  Gaussian distribution
$\rho(x)  \propto  \exp\left(-\frac12 x^2\right)$  up  to  a  translation and  a
scaling factor (where $\propto$ means ``proportional to'').  In other words, the
so-called {\em Fisher--Shannon complexity} $C[\rho] = F[\rho] N[\rho]$, which is
translation  and scale  invariant, is  always higher  than $2  \pi e$  (and thus
cannot  be zero)  and the  distribution of  lowest complexity  is  the Gaussian,
exhibiting (also) through this measure its fundamental aspect. The proof of this
inequality lies in  the entropy power inequality and on  the de Bruijn identity,
two  information theoretic  inequalities,  both being  reached  in the  Gaussian
context~\cite{Sta59, CovTho06}.
Although  introduced   respectively  in  the  estimation   context  through  the
Cram\'er--Rao  bound~\cite{Kay93, LehCas98, Fis22}  and in  communication theory
through the  coding theorem of Shannon~\cite{Sha48,  CovTho06}, these quantities
found applications in  physics as previously mentioned (and  also in the earlier
papers~\cite{Bou58, Lei59} and that of  Stam).  In particular, the analysis of a
signal with  these measures was  proposed by Vignat  and Bercher~\cite{VigBer03}
and the Fisher--Shannon complexity $C[\rho] = F[\rho] N[\rho]$ is widely applied
in atomic  physics or quantum mechanics  for instance~\cite{RomDeh04,
  AngAnt08, SanLop08, DehLop09, LopSan12, Man12}.

Recently, the Stam  inequality was extended by substituting  the Shannon entropy
by the  R\'enyi entropies (a family  of entropies characterizing  by a parameter
playing  a  role  of  focus~\cite{Ren61}),  and  the  Fisher  information  by  a
generalized   two-parametric  family  of   the  Fisher   information  introduced
by~\cite{LutYan05, Ber12:06_1, LutLv12}.  As we will see later on, this extended
inequality involves, however, two free  parameters because one of the two Fisher
parameters is  linked to the R\'enyi one.   This constraint is imposed  so as to
determine the sharp bound of the  inequality and the minimizers in the framework
of  the (stretched)  Tsallis distributions~\cite{GelTsa04,  Tsa09}.   Thus, this
extended inequality allows  to define again a complexity  measure, based on this
generalized Fisher information and the R\'enyi entropy power~\cite{PueTor17}.

In this  paper, we study  the full three-parametric  Fisher--R\'enyi complexity,
disconnecting the two parameters tuning  the extended Fisher information and the
parameter tuning the R\'enyi entropy.  Like Bercher, we use an approach based on
the  Gagliardo--Nirenberg inequality.   This inequality  allows for  proving the
existence of a lower bound of  the complexity when the parameters are decoupled,
in a certain range.  The minimizers are thus implicitly known as a solution of a
nonlinear  equation  (or  through  a  complicated  series  of  integrations  and
inversion of nonlinear functions).  Moreover,  the sharp bound of the associated
extended  Stam inequality  is explicitly  known, once  the minimizers  have been
determined.  We propose here an indirect  approach allowing (i) to extend a step
further the domain  where the Stam inequality holds (or  where the complexity is
non trivially  lower-bounded); (ii) to determine explicitly  the minimizers; and
(iii) to find the sharp bound, regardless of the knowledge of the minimizers.

The structure of the paper is the following. In Section~\ref{sec:Complexity}, we
introduce   both  the   $\lambda$-dependent  R\'enyi   entropy  power   and  the
$(p,\beta)$-Fisher information, so generalizing the usual (i.e., translationally
invariant) Fisher information.   Then, we propose a complexity  measure based on
these   two  information  quantities,   the  $(p,\beta,\lambda)$-Fisher--R\'enyi
complexity,  and we study  its fundamental  properties regarding  the invariance
under  translation and  scaling transformations  and, above  all,  the universal
bounding from  below.  In  particular, we  come back briefly  to the  results of
Lutwak~\cite{LutYan05} or of  Bercher~\cite{Ber12:06_1} concerning the sharpness
of the bound  and the minimizers, derived only when  the three parameters belong
to a two-dimensional manifold, finding that our results remain indeed valid in a
domain slightly wider than  theirs.  In Section~\ref{sec:ExtendedStam}, the core
of  the paper,  we  come  back to  the  lower bound  (or  to  the extended  Stam
inequality)  dealing with  a wide  three-dimensional domain.   In  this extended
domain,  which includes  that of  the previous  section, we  are able  to derive
explicitly the minimizers and the sharp lower bound, regardless the knowledge of
the  minimizers.   In  order  to  do  this, we  introduce  a  special  nonlinear
stretching  of   the  state,   leading  to  the   so-called  differential-escort
distribution~\cite{PueRud17}.  This  geometrical deformation allows  us to start
from the Bercher--Lutwak  inequality and to introduce a  supplementary degree of
freedom so as to decouple the  parameters (in a certain range). This approach is
the key point for the determination  of the extended domain where the complexity
is bounded from  below (the generalized Stam inequality).   Moreover, we provide
an  explicit  expression  for  the  densities which  minimize  this  complexity,
expression    involving   the    inverse   incomplete    beta    function.    In
Section~\ref{sec:Applications}, we  apply the previous results  to some relevant
multidimensional  physical  systems  subject   to  a  central  potential,  whose
quantum-mechanically allowed  stationary states are described  by wave functions
that factorize  into a  potential-dependent radial part  and a  common spherical
part.  Focusing on the radial part, we calculate the three-parametric complexity
of the  two main prototypes  of $\d$-dimensional physical systems,  the harmonic
(i.e., oscillator-like)  and hydrogenic systems,  for various quantum-mechanical
states and  dimensionalities.  Finally,  three appendices containing  details of
the proofs of various propositions of the paper are reported.


\section{$(p,\beta,\lambda)$-Fisher--R\'enyi  Complexity and  the  Extended Stam
  Inequality}
\label{sec:Complexity}

In this section, we firstly review the extension of the Stam inequality based on
the efforts  of Lutwak et al. and  Bercher~\cite{LutYan05, LutLv12, Ber12:06_1},
or more generally, based on that  of Agueh~\cite{Agu06, Agu08}.  To this aim, we
introduce a three-parametric Fisher--R\'enyi complexity, showing its scaling and
translation invariance and  non-trivial bounding from below.  We  then come back
to the  results of Lutwak or  Bercher concerning the determination  of the sharp
bound and the minimizers of its associated complexity, where a constraint on the
parameters was  imposed.  Indeed,  the constraint they  imposed can  be slightly
relaxed, as we will see in this section.


\subsection{R\'enyi  Entropy, Extended  Fisher  Information and  R\'enyi--Fisher
  Complexity}

Let  us  begin  with  the  definitions of  the  following  information-theoretic
quantities  of  the  probability  density  $\rho$:  the  R\'enyi  entropy  power
$N_\lambda[\rho]$,  the $(p,\beta)$-Fisher information  $F_{p,\beta}[\rho]$, and
the $(p,\beta,\lambda)$-Fisher--R\'enyi complexity $C_{p,\beta,\lambda}[\rho]$.
\begin{Definition}[R\'enyi entropy power~\cite{Ren61}]\label{def:RenyiPower}
  Let $\lambda \in  \Rset_+^*$.  Provided that the integral  exists, the R\'enyi
  entropy power of  index $\lambda$ of a probability  density function $\rho$ is
  given by
  \begin{equation}
  N_\lambda[\rho] = \exp\left( 2 H_\lambda[\rho] \right) \qquad
  \mbox{where} \qquad H_\lambda[\rho] = \frac{1}{1-\lambda}
  \log\int_\Rset[\rho(x)]^\lambda \, dx,
  \label{eq:RenyiPower}
  \end{equation}
  where  the limiting  case  $\lambda \to  1$  gives the  Shannon entropy  power
  $\displaystyle   N[\rho]    =   N_1[\rho]   \equiv    \lim_{\lambda   \to   1}
  N_\lambda[\rho]$.
\end{Definition}
The  entropy  $H_\lambda$  was   introduced  by  R\'enyi  in~\cite{Ren61}  as  a
generalization of the Shannon entropy.  In this expression, through the exponent
$\lambda$  applied  to  the distribution,  more  weight  is  given to  the  tail
($\lambda    <   1$)   or    to   the    head   ($\lambda    >   1$)    of   the
distribution~\cite{LutYan05, CosHer03,  JohVig07, NanMai07}. This  measure found
many    applications    in    numerous    fields   such    as    e.g.,    signal
processing~\cite{PanDit51,  Llo82,   GerGra92,  Cam65,  Hum81,   Bae06,  Ber09},
information theory to  reformulate the entropy power inequality~\cite{BobChi15},
statistical     inference~\cite{Par06},    multifractal    analysis~\cite{Har01,
  JizAri04}, chaotic systems~\cite{BecSch93}, or  in physics as mentioned in the
introduction (see ref.   above).  For instance, the R\'enyi  entropies were used
to reformulate the  Heisenberg uncertainty principle (see~\cite{Bia06, ZozVig07,
  ZozVig07:09,  ZozPor08,  ZozBos14}   or~\cite{JizDun15,  JizMa16}  where  this
formulation also appears and is applied in quantum physics).

Whereas  the power  applied to  the probability  density $\rho$  in  the R\'enyi
entropy aims at  making a focus on  heads or tails of the  distribution, one may
wish to act similarly dealing with  the Fisher information.  In this case, since
both the density and its derivative  are involved, one may wish to stress either
some  parts of  the distribution,  or  some of  its variations  (small or  large
fluctuations).   Thus,  two  different  power  parameters  for  $\rho$  and  its
derivative, respectively,  can be considered  leading with our notations  to the
following definition of the bi-parametric Fisher information.
\begin{Definition}[$(p,\beta)$-Fisher    information~\cite{LutYan05,    LutLv12,
    Ber12:06_1}]\label{def:biparametricFisher}
  For   any  $p  \in   (1,\infty  )$   and  any   $\beta  \in   \Rset_+^*$,  the
  $(p,\beta)$-Fisher information of a continuously differentiable density $\rho$
  is defined by
  \begin{equation}
  F_{p,\beta}[\rho] = \left( \int_\Rset \Big| [\rho(x)]^{\beta-1} \, \frac{d}{dx}
  \log[\rho(x)] \Big|^p \, \rho(x) \, dx \right)^\frac{2}{p \beta},
  \label{eq:biparametricFisher}
  \end{equation}
  provided  that this integral  exists. When  $\rho$ is  strictly positive  on a
  bounded support, the integration is to be understood over this support, but it
  must be differentiable on the closure of this support.
\end{Definition}
It is  straightforward to  see that $F_{2,1}$  is the usual  Fisher information.
When     it     exists,     $\displaystyle     \lim_{p     \to     +     \infty}
[F_{p,\beta}]^{\frac{\beta}{2}}$   is   the   essential  supremum   of   $\left|
  \rho^{\beta-1} \frac{d}{dx} \log[\rho] \right|$.  Conversely, $\frac{1}{\beta}
[F_{1,\beta}]^{\frac{\beta}{2}}$ is the total variation of $\rho^\beta$.  For $p
= 2$, this extended Fisher information is closely related to the $\alpha$-Fisher
information  introduced  by  Hammad  in   1978  when  dealing  with  a  position
parameter~\cite{Ham78}.   Note also  that a variety of generalized  Fisher
information was applied  especially in  non-extensive physics~\cite{PenPla98,
  ChiPen00, CasChi02, PenPla07}.

From the R\'enyi entropy power and the $(p,\beta)$-Fisher information, we define
a  $(p,\beta,\lambda)$-Fisher--R\'enyi  complexity   by  the  product  of  these
quantities, up to a given power.
\begin{Definition}[$(p,\beta,\lambda)$-Fisher--R\'enyi
  complexity]\label{def:triparametricComplexity}
  We define the  $(p,\beta,\lambda)$-Fisher--R\'enyi complexity of a probability
  density $\rho$ by
  \begin{equation}
  C_{p,\beta,\lambda}[\rho] = \Big( F_{p,\beta}[\rho] \, N_\lambda[\rho]
  \Big)^\beta,
 \label{eq:triparametricComplexity}
  \end{equation}
  provided that the involved  quantities exist.
\end{Definition}
We choose to elevate the product  of the entropy power and Fisher information to
the power  $\beta > 0$ for  simplification reasons.  Indeed, it  does not change
the spirit of this measure of  complexity, whereas it allows to express symmetry
properties in a more elegant manner, as we will see later on.

This quantity has the minimal  properties expected for a complexity measure (see
e.g.,~\cite{RudTor16}), as stated in the next subsection.


\subsection{Shift  and  Scale Invariance, Bounding  from below and  Minimizing
  Distributions}

The first property of the proposed complexity $C_{p,\beta,\lambda}[\rho]$ is the
invariance under the basic translation and scaling transformations.
\begin{Proposition}\label{prop:Invariance}
  The $(p,\beta,\lambda)$-Fisher--R\'enyi complexity  of the probability density
  $\rho$ is invariant  under any translation $x_0 \in  \Rset$ and scaling factor
  $\sigma   >  0$   applied  to   $\rho$;  i.e.,   for   $\widetilde{\rho}(x)  =
  \frac{1}{\sigma}    \,     \rho\left(    \frac{x-x_0}{\sigma}    \right),    \
  C_{p,\beta,\lambda}[\widetilde{\rho}] = C_{p,\beta,\lambda}[\rho]$.
\end{Proposition}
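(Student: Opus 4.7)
The plan is to verify both invariances directly from the definitions by performing the change of variable $y = (x-x_0)/\sigma$ in all the integrals that enter $C_{p,\beta,\lambda}$. The translation part is essentially immediate, since a shift $x\mapsto x-x_0$ leaves both the Lebesgue measure $dx$ and the derivative $d/dx$ unchanged; every integrand depends on $\rho$ only through values of $\widetilde\rho$ at shifted points, so $x_0$ drops out after the substitution. The actual work lies in tracking the single parameter $\sigma$.

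Under the change of variable I would use $dx = \sigma\, dy$ and $d/dx = (1/\sigma)\, d/dy$, together with $\widetilde\rho(x) = \rho(y)/\sigma$. For the R\'enyi entropy power I would first compute
\[
\int_\Rset [\widetilde\rho(x)]^\lambda\, dx = \sigma^{1-\lambda} \int_\Rset [\rho(y)]^\lambda\, dy,
\]
so that Definition~\ref{def:RenyiPower} yields $H_\lambda[\widetilde\rho] = H_\lambda[\rho] + \log\sigma$ and hence $N_\lambda[\widetilde\rho] = \sigma^{2}\, N_\lambda[\rho]$, the limiting Shannon case $\lambda\to 1$ following by continuity.

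For the generalized Fisher information I would separately extract the $\sigma$-dependence of each factor in the integrand of Definition~\ref{def:biparametricFisher}: the prefactor $[\widetilde\rho]^{\beta-1}$ contributes $\sigma^{-(\beta-1)}$, the logarithmic derivative $\frac{d}{dx}\log\widetilde\rho$ contributes a factor $1/\sigma$ via the chain rule, while the density-weighted measure $\widetilde\rho(x)\, dx$ is scale-invariant. Collecting these factors under the $p$-th power and then raising the whole integral to the exponent $2/(p\beta)$ gives
\[
F_{p,\beta}[\widetilde\rho] = \sigma^{-2}\, F_{p,\beta}[\rho].
\]
The particular choice of the exponent $\frac{2}{p\beta}$ in Definition~\ref{def:biparametricFisher} is precisely what makes this bookkeeping collapse to the clean exponent $-2$, matching the $+2$ above.

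Combining the two scalings, every power of $\sigma$ cancels in the product $F_{p,\beta}[\widetilde\rho]\, N_\lambda[\widetilde\rho]$, and raising to the power $\beta$ yields $C_{p,\beta,\lambda}[\widetilde\rho] = C_{p,\beta,\lambda}[\rho]$. There is no genuine obstacle here; the only care required is tracking the exponents of $\sigma$ carefully, and observing that the cancellation is exactly what motivates the particular scalings built into the definitions of $N_\lambda$ and $F_{p,\beta}$.
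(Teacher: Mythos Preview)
Your proof is correct and follows exactly the approach the paper uses: a change of variables showing $N_\lambda[\widetilde\rho]=\sigma^2 N_\lambda[\rho]$ and $F_{p,\beta}[\widetilde\rho]=\sigma^{-2}F_{p,\beta}[\rho]$, whence the product is invariant. You simply supply more of the intermediate bookkeeping than the paper does.
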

\begin{proof}
  This  is a  direct consequence  of  a change  of variables  in the  integrals,
  showing   that   $N_\lambda[\widetilde{\rho}]   =  \sigma^2   N_\lambda[\rho]$
  (justifying  the  term   of  entropy  power)  for  any   $\lambda$,  and  that
  $F_{p,\beta}[\widetilde{\rho}]  =   \sigma^{-2}  F_{p,\beta}[\rho]$,  whatever
  $(p,\beta)$.
\end{proof}

From now,  due to these properties,  all the definitions  related to probability
density functions will be given up to a translation and scaling factor. In other
words, when evoking  a density $\rho$, except when specified,  we will deal with
the family  $\frac{1}{\sigma} \, \rho\left(\frac{x-x_0}{\sigma}\right)$  for any
$x_0 \in \Rset$ and $\sigma > 0$.

More important, the complexity has  a universal, non-trivial bounding from below
so that  the distribution corresponding to  this minimal complexity  can thus be
viewed as the less complex one.

\begin{Proposition}[Extended Stam inequality]\label{prop:StamDp}
  For any $p > 1$,
  \begin{equation}
  (\beta,\lambda) \in \Dset_p = \left\{ (\beta,\lambda) \in \Rset_+^{* \, 2}:
  \quad \beta \in \left( \frac{1}{p^*} \; \frac{1}{p^*} + \min(1,\lambda)
  \right] \right\},
  \label{eq:Dset}
  \end{equation}
  with  $p^*  = \frac{p}{p-1}$  the  Holder conjugate  of  $p$,  their exists  a
  universal  optimal positive constant  $K_{p,\beta,\lambda}$, that  bounds from
  below  the  $(p,\beta,\lambda)$-Fisher--R\'enyi   complexity  of  any  density
  $\rho$, i.e.,
  \begin{equation}
  \forall \, \rho, \quad C_{p,\beta,\lambda}[\rho] \ge K_{p,\beta,\lambda}.
  \label{eq:StamDp}
  \end{equation}
  The optimal bound is achieved when, up to a shift and a scaling factor,
  \begin{equation}
    \rho_{p,\beta,\lambda} = u^\vartheta \qquad
    \mbox{with}  \qquad \vartheta = \frac{p^*}{\beta p^* - 1},
  \label{eq:vartheta}
  \end{equation}
  and where $u$ is a solution of the differential equation
  \begin{equation}
  - \frac{d}{dx} \left( \left| \frac{d}{dx} u \right|^{p-2} \frac{d}{dx} u
  \right) + \frac{\gamma}{\vartheta} \: \frac{u^{\lambda \vartheta - 1} -
  u^{\vartheta-1}}{1-\lambda} = 0,
  \label{eq:EDOMin}
  \end{equation}
  with $\gamma$  determined a  posteriori to impose  that $u^\vartheta$  sums to
  unity.   When  $\lambda  \to  1$,  the  limit has  to  be  taken,  leading  to
  $\frac{\gamma}{\vartheta}    \:   \frac{u^{\lambda    \vartheta    -   1}    -
    u^{\vartheta-1}}{1-\lambda} \to \gamma u^{\vartheta-1} \log u$.
\end{Proposition}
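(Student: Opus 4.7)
The plan is to reduce the bound to the sharp one-dimensional Gagliardo--Nirenberg (GN) inequality via the nonlinear substitution $\rho = u^{\vartheta}$, with $\vartheta = p^*/(\beta p^* - 1)$ chosen so that the powers of $u$ multiplying $|u'|^p$ inside the Fisher integrand cancel. A direct change of variables gives
\[
|\rho^{\beta-1}(d/dx)\log\rho|^p\,\rho = \vartheta^p\,|u'|^p\,u^{p\vartheta(\beta-1)-p+\vartheta},
\]
and the last exponent vanishes exactly at $\vartheta = p^*/(\beta p^* - 1)$. With this choice, $F_{p,\beta}[\rho] = \vartheta^{2/\beta}\,\|u'\|_p^{2/\beta}$ and $N_\lambda[\rho] = \|u\|_{\lambda\vartheta}^{2\lambda\vartheta/(1-\lambda)}$, while the normalization $\int\rho\,dx = 1$ becomes $\|u\|_\vartheta = 1$, yielding
\[
C_{p,\beta,\lambda}[\rho] = \vartheta^{2}\,\|u'\|_p^{2}\,\|u\|_{\lambda\vartheta}^{2\beta\lambda\vartheta/(1-\lambda)}, \qquad \|u\|_\vartheta = 1.
\]

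I would then invoke the sharp 1D GN inequality
\[
\|u\|_{\lambda\vartheta} \le K\,\|u'\|_p^{\theta}\,\|u\|_\vartheta^{1-\theta}
\]
when $\lambda > 1$, and its reverse-interpolation analogue $\|u\|_\vartheta \le K'\,\|u'\|_p^{\theta'}\,\|u\|_{\lambda\vartheta}^{1-\theta'}$ when $\lambda < 1$, with the interpolation exponents fixed by homogeneity in $u$. A short calculation shows the exponents combine, upon substitution into the formula for $C_{p,\beta,\lambda}[\rho]$, so that all powers of $\|u'\|_p$ cancel; this matching condition is precisely $\vartheta = p^*/(\beta p^* - 1)$, which is automatic by our choice. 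Using $\|u\|_\vartheta = 1$ then collapses the right-hand side to a constant $K_{p,\beta,\lambda}$ built from $\vartheta$ and the sharp GN constant, yielding $C_{p,\beta,\lambda}[\rho] \ge K_{p,\beta,\lambda}$, with sharpness inherited from sharpness of GN.

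To identify the minimizer explicitly, I would compute the Euler--Lagrange equation for the variational problem of minimizing $\|u'\|_p^2\,\|u\|_{\lambda\vartheta}^{2\beta\lambda\vartheta/(1-\lambda)}$ subject to $\|u\|_\vartheta = 1$. Taking logarithmic variations and integrating by parts produces
\[
-\frac{d}{dx}\bigl(|u'|^{p-2}u'\bigr) = A\,u^{\vartheta-1} + B\,u^{\lambda\vartheta-1},
\]
with coefficients $A, B$ determined by $\|u'\|_p^p$, $\|u\|_{\lambda\vartheta}^{\lambda\vartheta}$ and the Lagrange multiplier. Imposing extremality forces $A = -B$, collapsing the equation into the symmetric form (5) with a single parameter $\gamma = \vartheta(1-\lambda)A$, which is fixed a posteriori by the normalization $\int u^\vartheta\,dx = 1$. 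The $\lambda \to 1$ limit follows from L'H\^opital applied to $(u^{\lambda\vartheta-1} - u^{\vartheta-1})/(1-\lambda)$.

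The main obstacle will be pinning down the admissible domain $\Dset_p$ and verifying existence of the GN extremizers. The lower bound $\beta > 1/p^*$ is immediate from $\vartheta > 0$, whereas the upper bound $\beta \le 1/p^* + \min(1,\lambda)$ is equivalent to $\vartheta \ge \max(1,1/\lambda)$, which is precisely the range in which the GN extremal profile (compactly supported when $\lambda > 1$, algebraically decaying when $\lambda < 1$) is integrable in both $L^\vartheta$ and $L^{\lambda\vartheta}$. Existence of sharp 1D GN extremizers in the subcritical regime is classical (Bliss, Del Pino--Dolbeault), so the delicate technical step is to check that the integrability threshold of the extremal profile matches precisely with the boundary of $\Dset_p$.
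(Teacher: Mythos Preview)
Your approach is essentially the same as the paper's: both reduce the inequality to the sharp one-dimensional Gagliardo--Nirenberg inequality via the substitution $\rho = u^{\vartheta}$ with $\vartheta = p^*/(\beta p^*-1)$, identify the GN exponents $(q,s)$ with $(\vartheta,\lambda\vartheta)$ (swapping roles according to whether $\lambda\lessgtr 1$), and read off both the bound and the Euler--Lagrange equation from the GN extremality condition. Your bookkeeping is correct, including the cancellation of the $\|u'\|_p$ powers, which in the paper's language is exactly the observation that the Fisher exponent and the R\'enyi exponent coincide as $\theta\beta/2$.

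One point to tighten: your sentence ``Imposing extremality forces $A=-B$'' is not quite the right mechanism. The Euler--Lagrange equation genuinely produces two independent multipliers; what collapses them into the symmetric form is the one-parameter scaling freedom $u\mapsto \delta u$ among minimizers (the GN functional is homogeneous), which lets you renormalize the two coefficients to be equal and opposite. The paper makes this explicit. Similarly, your justification of the upper boundary of $\Dset_p$ via ``integrability of the extremal profile'' is heuristically fine, but the clean argument is simply that GN requires $s>q\ge 1$, which translates directly into $\vartheta\ge\max(1,1/\lambda)$, i.e.\ $\beta\le 1/p^*+\min(1,\lambda)$.
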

\begin{proof}
  The   proof    is   mainly    based   on   the    sharp   Gagliardo--Nirenberg
  inequality~\cite{Agu08},      as      explained      with      details      in
  Appendix~\ref{app:StamDp}.
\end{proof}

Finally,  the minimizers  of the  $(p,\beta,\lambda)$-Fisher--R\'enyi complexity
and  the  tight bound  satisfy  a remarkable  property  of  symmetry, as  stated
hereafter.

\begin{Proposition}\label{prop:Symmetries}
  Let us consider the involutary transform
  \begin{equation}
  \opT_p: \left( \beta , \lambda \right) \mapsto \left( \frac{\beta p^* +
  \lambda - 1}{\lambda p^*} , \frac{1}{\lambda}\right).
  \label{eq:Involution}
  \end{equation}
  The minimizers of the complexity satisfy the relation
  \begin{equation}
  \rho_{p,\opT_p(\beta,\lambda)} \propto \Big[ \rho_{p,\beta,\lambda} \Big]^\lambda,
  \label{eq:RhoInvolution}
  \end{equation}
  and the optimal bounds satisfy the relation
  \begin{equation}
  K_{p,\opT_p(\beta,\lambda)} = \lambda^2 \, K_{p,\beta,\lambda}.
  \label{eq:KInvolution}
  \end{equation}
\end{Proposition}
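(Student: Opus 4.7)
The plan is to first establish the relation between the minimizers, which follows almost immediately from Proposition~\ref{prop:StamDp} after observing that the relevant ODE is invariant under $\opT_p$; then to transport this relation into a direct computation of $K$ using the explicit transformations of $N_\lambda$ and $F_{p,\beta}$ under the map $\rho \mapsto \rho^\lambda$.

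First, write $(\beta',\lambda') = \opT_p(\beta,\lambda)$ and note that a short algebraic check shows $\opT_p\circ\opT_p = \mathrm{id}$. The key computation is to track $\vartheta = \frac{p^*}{\beta p^* - 1}$ under $\opT_p$: substituting $\beta' = \frac{\beta p^* + \lambda - 1}{\lambda p^*}$ gives $\vartheta' = \lambda \vartheta$, whence $\lambda'\vartheta' = \vartheta$. I would then substitute $\vartheta'$ and $\lambda'$ into the differential equation~\ref{eq:EDOMin} written for $(\beta',\lambda')$; after reorganising the non-linear term using $\lambda'\vartheta'-1 = \vartheta-1$, $\vartheta'-1 = \lambda\vartheta - 1$, and $1-\lambda' = (\lambda-1)/\lambda$, it reduces term-for-term to the ODE for $(\beta,\lambda)$ (with a relabelled $\gamma$). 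Hence if $u$ solves the ODE for $(\beta,\lambda)$ it also solves the one for $(\beta',\lambda')$, and the minimizer for the primed parameters is $u^{\vartheta'} = u^{\lambda\vartheta} = (u^\vartheta)^\lambda$, i.e., exactly $\rho_{p,\beta,\lambda}^\lambda$ up to the shift/scale freedom and the normalization constant, which proves~\ref{eq:RhoInvolution}.

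For~\ref{eq:KInvolution} I would fix a unit-mass representative $\rho := \rho_{p,\beta,\lambda}$, set $M = \int \rho^\lambda$, and take $\tilde\rho = \rho^\lambda/M$ as the specific representative of $\rho_{p,\beta',\lambda'}$. Two clean identities then do the job. The entropy power is a direct computation: $\int \tilde\rho^{\lambda'} = M^{-1/\lambda}\int \rho = M^{-1/\lambda}$, so $H_{\lambda'}[\tilde\rho] = \frac{\log M}{1-\lambda} = H_\lambda[\rho]$, yielding $N_{\lambda'}[\tilde\rho] = N_\lambda[\rho]$. For the Fisher term, $(\log\tilde\rho)' = \lambda(\log\rho)'$, and I would compute the exponent of $\rho$ inside the integrand $\bigl|\tilde\rho^{\beta'-1}(\log\tilde\rho)'\bigr|^p \tilde\rho$: it equals $p\lambda(\beta'-1) + \lambda$, which simplifies via $\lambda\beta' p = \beta p + (p-1)(\lambda-1)$ to $p(\beta-1)+1$, reconstructing exactly the integrand of $F_{p,\beta}[\rho]^{p\beta/2}$. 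Hence
\begin{equation}
F_{p,\beta'}[\tilde\rho]^{p\beta'/2} = \lambda^p\, M^{p(1-\beta')-1}\, F_{p,\beta}[\rho]^{p\beta/2}.
\end{equation}

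Combining the two identities in $C_{p,\beta',\lambda'}[\tilde\rho] = (F_{p,\beta'}[\tilde\rho]\,N_{\lambda'}[\tilde\rho])^{\beta'}$ gives
\begin{equation}
C_{p,\beta',\lambda'}[\tilde\rho] = \lambda^2\, C_{p,\beta,\lambda}[\rho]\, M^{2\bigl[(p(1-\beta')-1)/p + (\beta'-\beta)/(1-\lambda)\bigr]}.
\end{equation}
The main obstacle, and the only delicate point, is to verify that the exponent of $M$ vanishes, which is where the precise form of $\opT_p$ is used: both $(p(1-\beta')-1)/p$ and $-(\beta'-\beta)/(1-\lambda)$ reduce to $(\beta p^* - 1)/(\lambda p^*)$, so their sum is zero. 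This gives $K_{p,\beta',\lambda'} = \lambda^2 K_{p,\beta,\lambda}$, as claimed, and also serves as a consistency check: the cancellation is precisely what enforces the scale-invariance of the complexity across the two equivalence classes of minimizers.
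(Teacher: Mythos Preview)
Your proof is correct. The first half, on the minimizers, is essentially the paper's argument: the paper also reduces to the observation that the same function $u$ solves the Euler--Lagrange equation for both parameter triples and that $\vartheta' = \lambda\vartheta$ (phrased there through the $(q,s)$ parametrization of the Gagliardo--Nirenberg inequality, but the content is identical).

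For the bound relation~\ref{eq:KInvolution}, however, you take a genuinely different route. The paper's proof is a one-liner leaning on Appendix~\ref{app:StamDp}: in the two sub-cases $\lambda = q/s$ and $\overline\lambda = s/q$ it has already expressed $K_{p,\beta,\lambda} = s^2 K^{-2/\theta}$ and $K_{p,\overline\beta,\overline\lambda} = q^2 K^{-2/\theta}$ with the \emph{same} Gagliardo--Nirenberg constant $K$ and exponent $\theta$, so the ratio is $q^2/s^2 = \lambda^2$. Your argument instead computes directly how $N_\lambda$ and $F_{p,\beta}$ transform under $\rho \mapsto \rho^\lambda/M$, and shows the $M$-dependence cancels. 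This is more self-contained (it does not rely on the explicit representation of $K$ in terms of the Gagliardo--Nirenberg constant), and in fact your computation proves the stronger identity $C_{p,\opT_p(\beta,\lambda)}\!\bigl[\rho^\lambda/\!\int\rho^\lambda\bigr] = \lambda^2\, C_{p,\beta,\lambda}[\rho]$ for \emph{any} admissible $\rho$, from which both~\ref{eq:RhoInvolution} and~\ref{eq:KInvolution} follow at once without even invoking the ODE. One cosmetic slip: in your final cancellation both $(p(1-\beta')-1)/p$ and $-(\beta'-\beta)/(1-\lambda)$ actually equal $-(\beta p^*-1)/(\lambda p^*)$, not $+(\beta p^*-1)/(\lambda p^*)$; the sign is immaterial since the conclusion (sum zero) is unchanged.
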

\begin{proof}
  See Appendix~\ref{app:Symmetries}.
\end{proof}

A difficulty  to determine  the sharp bound  and the  minimizer is to  solve the
nonlinear   differential   equation~\ref{eq:EDOMin}.     One   can   find   in
Corollary~3.2  in~\cite{Agu08}  a  series  of  explicit  equations  allowing  to
determine    the    solution   and    thus    the    optimal    bound   of    in
Equation~\ref{eq:GagliardoNirenberg},  but in  general the  expression  of $u$
remains on an integral form.   Agueh, however, exhibits several situations where
the  solution is  known explicitly  (and  thus the  optimal bound  as well),  as
summarized in the next subsection.


\subsection{Some Explicitly Known Minimizing Distributions}
\label{sec:StamExplicit}

The  particular  cases  are  issued  of  special  cases  of  saturation  of  the
Gagliardo--Nirenberg,  some of them  being studied  by Bercher~\cite{Ber12:06_1,
  Ber12:06_2, Ber13:08} or Lutwak~\cite{LutYan05}.  All these cases are restated
hereafter, with the  notations of the paper. Let us  first recall the definition
of  the  stretched  deformed  Gaussian,  studied  by  Lutwak~\cite{LutYan05}  or
Bercher~\cite{Ber12:06_1,  Ber12:06_2, Ber13:08},  for instance,  also  known as
stretched $q$-Gaussian or stretched Tsallis distributions~\cite{GelTsa04, Tsa09}
and intensively studied in non-extensive physics.

\begin{Definition}[Streched                   deformed                  Gaussian
  distribution]\label{def:StretchedDeformedGaussian}
  Let  $p >  1$ and  $\lambda  > 1-p^*$.   The $(p,\lambda)$-stretched  deformed
  Gaussian distribution is defined by
  \begin{equation}
  \label{eq:StretchedDeformedGaussian}
  g_{p,\lambda}(x) \propto \left\{\begin{array}{lll} \displaystyle \left( 1 +
  (1-\lambda) |x|^{p^*} \right)_+^{\frac{1}{\lambda-1}}, & \mbox{for} & \lambda \ne
  1,\\[4mm]
  \displaystyle \exp\left( - |x|^{p^*} \right), & \mbox{for} &
  \lambda = 1,
  \end{array}\right.
  \end{equation}
  where $(\cdot)_+ =  \max(\cdot,0)$ (the case $\lambda =  1$ is indeed obtained
  taking the limit).
\end{Definition}

This distribution plays  a fundamental role in the  extended Stam inequality, as
we will see in the next subsections and in the next section.


\subsubsection{The Case $\beta = \lambda$}
\label{sec:Bercher}

For any $p > 1$, and for
\begin{equation}
(\beta , \lambda) \in \Bset_p = \left\{
(\beta,\lambda) \in \Dset_p: \quad \beta = \lambda \right\},
\label{eq:Ap}
\end{equation}
one     obtains     that      the     minimizing     distribution     of     the
$(p,\beta,\lambda)$-Fisher--R\'enyi  complexity  is the  $(p,\lambda)$-stretched
deformed Gaussian distribution,
\begin{equation}
\rho_{p,\lambda,\lambda}  \, = \, g_{p,\lambda}
\label{eq:RhoLp}
\end{equation}
(see Corollary  3.4 in~\cite{Agu08}, (i) where  $\lambda = q/s$;  and (ii) where
$\lambda =  s/q$, respectively; the  case $\lambda =  1$ is obtained  taking the
limit  $\lambda  \to  1$  (resp.   lower   and  upper  limit)  or  by  a  direct
computation).   This situation  is nothing  more  than that  studied by  Bercher
in~\cite{Ber12:06_1}  or  Lutwak  in~\cite{LutYan05}.   Remarkably,  by  a  mass
transport approach, Lutwak proved in~\cite{LutYan05} that this relation is valid
for $\lambda > \frac{1}{1+p^*}$, i.e., for
%
\begin{equation}
(\beta , \lambda) \in \Lset_p = \left\{ (\beta,\lambda) \in \Rset_+^{* \, 2}:
\quad \beta = \lambda > \frac{1}{1+p^*} \right\}.
\label{eq:Lp}
\end{equation}
Note that  the exponent of the  Lutwak expression is  not the same as  ours, but
$\beta > 0$ allowing to take the  Lutwak relation to the adequate exponent so as
to obtain our formulation.


\subsubsection{Stretched Deformed Gaussian: The Symmetric Case}
\label{sec:BercherSymmetric}

Immediately, from the  relation Equation~\ref{eq:RhoInvolution} induced by the
involution $\opT_p$,  one obtains,  after a re-parametrization  $\lambda \mapsto
\frac{1}{\lambda}$ and an adequate scaling, for any $p > 1$ and
\begin{equation}
(\beta , \lambda) \in \overline{\Bset}_p =
\left\{ (\beta,\lambda) \in \Dset_p: \quad \beta = \frac{p^*+1-\lambda}{p^*}
\right\}
\label{eq:BpBar}
\end{equation}
that the minimizing distribution is again  a stretched deformed Gaussian,
\begin{equation}
 \rho_{p,\frac{p^*+1-\lambda}{p^*},\lambda} \, = \,  g_{p,2-\lambda}.
\label{eq:RhoLpBar}
\end{equation}
Again, starting from the Lutwak result, the validity of this result extends to
\begin{equation}
(\beta , \lambda) \in \overline{\Lset}_p = \left\{ (\beta,\lambda) \in
\Rset_+^{* \, 2}: \quad 0 < \beta = \frac{p^* + 1 - \lambda}{p^*} <
1+\frac{1}{p^*} \right\},
\label{eq:LpBar}
\end{equation}
and the symmetry of the bound given by Proposition~\ref{prop:Symmetries} remains
valid. Indeed, the minimizers  in $\Lset_p$ satisfying the differential equation
of the Gagliardo--Nirenberg as given in Appendix~\ref{app:StamDp}, the reasoning
of this appendix and of the Appendix~\ref{app:Symmetries} holds.


\subsubsection{Dealing with the Usual Fisher Information}
\label{sec:Agueh}

This situation corresponds to $p = 2$ and $\beta = 1$. Then, for
\begin{equation}
(\beta,\lambda) \in \Aset_2 = \left\{ (\beta,\lambda) \in \Dset_2: \quad \beta =
1 \right\},
\label{eq:A2}
\end{equation}
one obtains the minimizing distribution for $\lambda \ne 1$,
\begin{equation}
\rho_{2,1,\lambda}(x) \propto \left[ \cos\left( \sqrt{1-\lambda} \, |x| \right)
\right]^{\frac{2}{1-\lambda}} \, \un_{\left[ 0 \; \frac{\pi}{2 \,
\Real\{\sqrt{1-\lambda}\}} \right)}(|x|),
\label{eq:RhoA2}
\end{equation}
where $\un_A$  denotes the indicator function  of set $A$,  $\sqrt{-1} = \imath$
(remember  that $\cos(\imath  x)  = \cosh(x)$),  $\Real$  is the  real part  and
$\frac10$ is  to be understood  as $+\infty$ (see Corollary~3.3  in~\cite{Agu06}
with $\lambda  = s/q$  and Corollary~3.4 in~\cite{Agu06}  with $\lambda  = q/s,$
respectively). The  case $\lambda =  1$ is again  obtained by taking  the limit,
leading to  the Gaussian distribution $\rho_{2,1,1}$. (See  previous cases, with
$p = 2$, that corresponds also to the usual Stam inequality.)


\subsubsection{The Symmetrical of  the Usual Fisher Information}
\label{sec:AguehSymmetric}

From  the relation Equation~\ref{eq:RhoInvolution}  induced by  the involution
$\opT_p$, after a re-parametrization  $\lambda \mapsto \frac{1}{\lambda}$ and an
adequate scaling, for $p = 2$ and
\begin{equation}
(\beta,\lambda) \in \overline{\Aset}_2 = \left\{ (\beta,\lambda) \in \Dset_2:
\quad \beta = \frac{\lambda+1}{2} \right\},
\label{eq:A2Bar}
\end{equation}
the minimizing distribution for $\lambda \ne 1$ takes the form
\begin{equation}
\rho_{2,\frac{\lambda+1}{2},\lambda}(x) \propto \left[ \cos\left(
\sqrt{\lambda-1} \, |x| \right) \right]^{\frac{2}{\lambda-1}} \, \un_{\left[ 0
\; \frac{\pi}{2 \, \Real\{\sqrt{\lambda-1}\}} \right)}(|x|)
\label{eq:RhoA2Bar}
\end{equation}
(with, again, the Gaussian as the limit when $\lambda \to 1$).

The graphs  in Figure~\ref{fig:Dp}  describe the domain  $\Dset_p$ (for  a given
$p$).    Therein,   we  also   represent   the   particular  domains   $\Lset_p$
(Bercher--Lutwak situation), $\overline{\Lset}_p$ (transformation of $\Lset_p$),
$\Aset_2$  and  $\overline{\Aset}_2$,  where  the explicit  expressions  of  the
minimizing  distributions  are  known  from  the  works  of~\cite{Agu06,  Agu08,
  LutYan05, Ber12:06_1}.
\begin{figure}[ht!]
\begin{center}
\begin{tikzpicture}
\pgfmathsetmacro{\lM}{2.25}
%
\begin{scope}[scale=1.75]
\pgfmathsetmacro{\p}{5}
\pgfmathsetmacro{\ps}{\p/(\p-1)}
\pgfmathsetmacro{\bM}{1+1/\ps+.5}
\draw[>=stealth,->]  (-.25,0)--({\bM+.2},0)  node[right]{\small  $\beta$};
\draw[>=stealth,->] (0,-.25)--(0,{\lM+.2}) node[above]{\small $\lambda$};
%
%
%
%
\fill[color=black!60!white,opacity=.2]   ({1/\ps},\lM)    --   ({1/\ps},0)   --
({1+1/\ps},1) -- ({1+1/\ps},\lM); 
\draw ({1/\ps},\lM)    --   ({1/\ps},0); 
\draw[thick]  ({1/\ps},0)   -- ({1+1/\ps},1) -- ({1+1/\ps},\lM); 
\draw ({1/\ps+.15},{\lM-.15}) node{\footnotesize $\Dset_p$};
%
%
\draw[dashed,thick] ({1/(1+\ps)},{1/(1+\ps)}) -- ({min(\lM,\bM)},{min(\lM,\bM)});
%
\draw ({min(\lM,\bM)-.125},{min(\lM,\bM)-.3}) node{\footnotesize $\Lset_p$};
%
%
\draw[dashed,thick]  (0,{1+\ps}) -- ({(\ps+1)/\ps},0);
%
\draw ({1+1/\ps},.2) node{\footnotesize $\overline{\Lset}_p$};
%
\draw ({1/(1+\ps)},0) -- ({1/(1+\ps)},-.1) node[below]{\footnotesize $\frac{1}{1+p^*}$};
\draw ({1/\ps},0) -- ({1/\ps},-.1) node[below]{\footnotesize $\frac{1}{p^*}$};
\draw (1,0) -- (1,-.1) node[below]{\footnotesize $1$};
\draw ({1+1/\ps},0) -- ({1+1/\ps},-.1) node[below]{\footnotesize $1+\frac{1}{p^*}$};
\draw (0,{1/(1+\ps)}) -- (-.1,{1/(1+\ps)}) node[left]{\footnotesize $\frac{1}{1+p^*}$};
\draw (0,1) -- (-.1,1) node[left]{\footnotesize $1$};
\draw (0,{1+\ps}) -- (-.1,{1+\ps}) node[left]{\footnotesize $1+p^*$};
\draw ({\bM/2+.2},-.6) node{\small (a)};
\end{scope}
%
%
\begin{scope}[xshift=7.5cm,scale=1.75]
\pgfmathsetmacro{\p}{2}
\pgfmathsetmacro{\ps}{\p/(\p-1)}
\pgfmathsetmacro{\bM}{1+1/\ps+.5}
\draw[>=stealth,->]  (-.25,0)--({\bM+.2},0)  node[right]{\small  $\beta$};
\draw[>=stealth,->] (0,-.25)--(0,{\lM+.2}) node[above]{\small $\lambda$};
%
%
%
%
\fill[color=black!60!white,opacity=.2]   ({1/\ps},\lM)    --   ({1/\ps},0)   --
({1+1/\ps},1) -- ({1+1/\ps},\lM); 
\draw ({1/\ps},\lM)    --   ({1/\ps},0); 
\draw[thick]  ({1/\ps},0)   -- ({1+1/\ps},1) -- ({1+1/\ps},\lM); 
\draw ({1/\ps+.15},{\lM-.15}) node{\footnotesize $\Dset_2$};
%
%
\draw[dashed,thick] (1,.5) -- (1,{\lM});
\draw (1.14,{\lM-.15}) node{\footnotesize $\Aset_2$};
%
\draw[dashed,thick] (.5,0) -- (1.5,2);
\draw (1.37,1.53) node{\footnotesize $\overline{\Aset}_2$};
%
%
%
\draw ({1/\ps},0) -- ({1/\ps},-.1) node[below]{\footnotesize $\frac12$};
\draw (1,0) -- (1,-.1) node[below]{\footnotesize $1$};
\draw ({1+1/\ps},0) -- ({1+1/\ps},-.1) node[below]{\footnotesize $\frac32$};
\draw (0,1) -- (-.1,1) node[left]{\footnotesize $1$};
\draw ({\bM/2+.2},-.6) node{\small (b)};
\end{scope}
\end{tikzpicture}
\end{center}
\caption{(\textbf{a}) the domain $\Dset_p$ for a given $p$ is represented by the
  gray area  (here $p > 2$).  The  thick line belongs to  $\Dset_p$.  The dashed
  line  represents   $\Lset_p$,  corresponding   to  the  Lutwak   situation  of
  Section~\ref{sec:Bercher},  where the  relation holds  and the  minimizers are
  explicitly   known  (stretched   deformed  Gaussian   distributions),  whereas
  $\overline{\Lset}_p$    corresponds    to   Section~\ref{sec:BercherSymmetric}
  ($\Bset_p$  and  $\overline{\Bset}_p$  obtained  by  the  Gagliardo--Nirenberg
  inequality are  their restrictions to $\Dset_p$);  (\textbf{b}) same situation
  for  $p =  2$, with  the  domains $\Aset_2$  and $\overline{\Aset}_2$  (dashed
  lines)  that  correspond to  the  situations  of Sections~\ref{sec:Agueh}  and
  ~\ref{sec:AguehSymmetric},  respectively, ($\Lset_2$  and $\overline{\Lset}_2$
  are not represented for the clarity of the figure).}
\label{fig:Dp}
\end{figure}
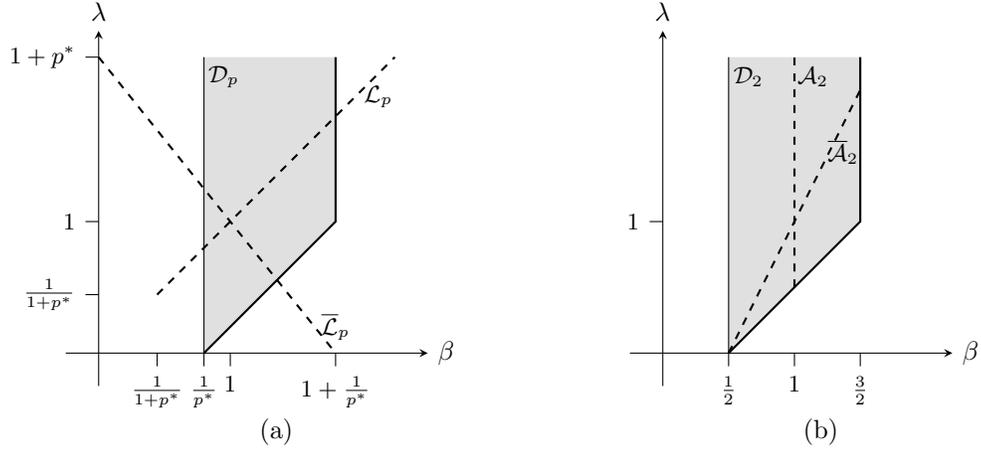


\section{Extended  Optimal Stam  Inequality: A Step Further}
\label{sec:ExtendedStam}

In  this section,  we further  extend the  previous Stam  inequality,  namely by
largely  widening  the domain  for  the  parameters  and disentangling  the  two
connected  parameters.    For  this,  we   use  the  \textit{differential-escort
  deformation} introduced in~\cite{PueRud17}, which is the key tool allowing for
introducing a  new degree of freedom.   Afterwards, we will  give the minimizing
distribution that results in a new deformation of the Gaussian family intimately
linked with the inverse incomplete beta functions.


\subsection{Differential-Escort Distribution: A Brief Overview}

We  have  already realized  the  crucial  role that  the  power  operation of  a
probability density  function $\rho$ plays.  The  subsequent escort distribution
duly normalized,  $\displaystyle \frac{\rho(x)^\alpha}{\int_\Rset \rho(x)^\alpha
  dx}$,   is    a   simple    monoparametric   deformation   of    $\rho$   (see
e.g.,~\cite{Ber10}).  Notice  that the parameter  $\alpha$ allows us  to explore
different  regions of  $\rho$, so  that,  for $\alpha  > 1$,  the more  singular
regions  are amplified  and, for  $\alpha <  1$, the  less singular  regions are
magnified.  A  careful look  at the minimizing  distributions of the  usual Stam
inequality  shows that the  $x$-axis is  stretched via  a power  operation. This
makes us guess that  a certain nonlinear stretching may also play  a key role in
the saturation (i.e., equality) of the extended Stam inequality.

These ideas led us to  the definition of the differential-escort distribution of
a probability  distribution $\rho$ (see also~\cite{PueRud17}),  motivated by the
following principle.  The power operation  provokes a two-fold stretching in the
density  itself  and  in  the  differential  interval  so  as  to  conserve  the
probability in the differential intervals: $\rho_\alpha(y) dy = \rho(x) dx$ with
$\rho_\alpha(y) = \rho(x(y))^\alpha$.

\begin{Definition}[Differential-escort
  distributions]\label{def:differential-escort_def}
  Given  a probability  distribution $\rho(x)$  and given  an index  $\alpha \in
  \Rset$, the  differential-escort distribution of  $\rho$ of order  $\alpha$ is
  defined as
  \begin{equation}
  \opE_\alpha[\rho](y) = \Big[ \rho\big( x(y) \big) \Big]^\alpha,
  \end{equation}
  where  $y(x)$ is  a bijection  satisfying  \ $\frac{dy}{dx}  = \left[  \rho(x)
  \right]^{1-\alpha}$ \ and $y(0) = 0$.
\end{Definition}

The differential-escort transformation $\opE_\alpha$ exhibits various properties
studied in  detail in~\cite{PueRud17}.  We  present here the key  ones, allowing
the extension of the Stam inequality in a wider domain than that of the previous
section.

\begin{Property}\label{prop:PropertiesStretchedEscort}
  The differential-escort transformation satisfies the composition relation
  \begin{equation}
  \opE_\alpha \circ \opE_{\alpha'} = \opE_{\alpha'} \circ \opE_\alpha =
  \opE_{\alpha \alpha'}
  \end{equation}
  where $\circ$  is the composition  operator.  Moreover, since $\opE_1$  is the
  identity, for any $\alpha \ne 0$, $\opE_\alpha$ is invertible and,
  \begin{equation}
  \opE_\alpha^{-1} = \opE_{\alpha^{-1}}.
  \end{equation}
\end{Property}

In  addition  to   the  trivial  case  $\alpha  =   1$,  keeping  invariant  the
distribution, a remarkable case is given by $\alpha = 0$, leading to the uniform
distribution. This case is non surprising since then $x(y)$ is nothing more than
the  inverse of  the cumulative  density function,  well known  to  uniformize a
random~vector~\cite{Dev86}.

In the  sequel, we focus  on the differential-escort distributions  obtained for
$\alpha > 0$.  Under this condition, when $\rho$ is continuously differentiable,
its differential-escort is also  continuously differentiable.  This is important
to   be    able   to   define   its    $(p,\lambda)$-Fisher   information   (see
Definition~\ref{def:biparametricFisher}).     Under    this    condition,    the
differential-escort transformation  induces a scaling  property on the  index of
the R\'enyi entropy power (for this quantity it remains true for any $\alpha \in
\Rset$),  the  $(p,\beta)$-Fisher  information,   and  thus  on  the  subsequent
complexity as stated in the following proposition.
\begin{Proposition}\label{prop:ComplexityStretchedEscort}
  Let a  probability distribution $\rho$  and an index  $\alpha > 0$.  Then, the
  R\'enyi  entropy powers  of  $\rho$ and  its differential-escort  distribution
  $\opE_\alpha[\rho]$ satisfy that
  \begin{equation}
  N_\lambda\Big[\opE_\alpha[\rho]\Big] = \Big( N_{1+\alpha(\lambda-1)}[\rho]
  \Big)^\alpha
  \label{eq:AffineRenyi}
  \end{equation}
  for  any  $\lambda  \in  \Rset_+^*$.   Moreover,  if  the  density  $\rho$  is
  continuously differentiable,  then the extended Fisher  information of $\rho$
  and its differential-escort distribution $\opE_\alpha[\rho]$ satisfy that
  \begin{equation}
  F_{p,\beta}\Big[\opE_\alpha[\rho]\Big] = \alpha^{\frac{2}{\beta}} \Big(
  F_{p,\alpha\beta}[\rho] \Big)^\alpha
  \label{eq:AffineFisher}
  \end{equation}
  for any $p > 1, \beta \in \Rset_+^*$.

  Consequently, the $(p,\beta,\lambda)$-Fisher--R\'enyi complexity of $\rho$ and
  of $\opE_\alpha[\rho]$ satisfy the relation
  \begin{equation}
  C_{p,\beta,\lambda}\Big[\opE_\alpha[\rho] \Big] = \alpha^2 \, C_{p ,
  \opA_\alpha(\beta , \lambda)}[\rho].
  \label{eq:AffineComplexity}
  \end{equation}
\end{Proposition}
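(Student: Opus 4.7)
The plan is to exploit the change of variables $y \leftrightarrow x$ intrinsic to the differential-escort construction, namely $\frac{dy}{dx}=\rho(x)^{1-\alpha}$ (with $y(0)=0$), and to propagate it through each of the two information-theoretic integrals separately before combining them. Throughout, write $\tilde\rho := \opE_\alpha[\rho]$, so that $\tilde\rho(y)=\rho(x(y))^\alpha$ and $dy=\rho(x)^{1-\alpha}dx$. Since $\alpha>0$, the map $x\mapsto y$ is a $C^1$-diffeomorphism onto its image and differentiability of $\tilde\rho$ follows from that of $\rho$, which legitimates all the manipulations below.

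First I would treat the R\'enyi entropy power. The change of variables gives directly
\begin{equation*}
\int_{\Rset}[\tilde\rho(y)]^{\lambda}\,dy=\int_{\Rset}\rho(x)^{\alpha\lambda}\,\rho(x)^{1-\alpha}\,dx=\int_{\Rset}\rho(x)^{\,1+\alpha(\lambda-1)}\,dx,
\end{equation*}
so plugging this into Definition~\ref{def:RenyiPower} and using the algebraic identity $\frac{1}{1-\lambda}=\frac{\alpha}{1-(1+\alpha(\lambda-1))}$ yields $H_\lambda[\tilde\rho]=\alpha\,H_{1+\alpha(\lambda-1)}[\rho]$, which, after exponentiation, is precisely \eqref{eq:AffineRenyi}. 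The limiting case $1+\alpha(\lambda-1)=1$ is handled by the standard Shannon limit.

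Next I would compute the $(p,\beta)$-Fisher information of $\tilde\rho$. Chain-rule differentiation yields $\tilde\rho'(y)=\alpha\,\rho(x)^{2(\alpha-1)}\rho'(x)$, and therefore
\begin{equation*}
\tilde\rho(y)^{\beta-1}\,\frac{d}{dy}\log\tilde\rho(y)=\tilde\rho(y)^{\beta-2}\,\tilde\rho'(y)=\alpha\,\rho(x)^{\alpha\beta-2}\rho'(x).
\end{equation*}
Raising to the $p$-th power, multiplying by $\tilde\rho(y)\,dy=\rho(x)\cdot\rho(x)^{1-\alpha}dx=\rho(x)^{2-\alpha}dx$, and collecting exponents of $\rho$ gives
\begin{equation*}
\int_{\Rset}\Bigl|\tilde\rho^{\beta-1}\,\tfrac{d}{dy}\log\tilde\rho\Bigr|^{p}\tilde\rho\,dy=\alpha^{p}\int_{\Rset}\bigl|\rho^{\alpha\beta-1}\,\tfrac{d}{dx}\log\rho\bigr|^{p}\rho\,dx,
\end{equation*}
after checking that the total $\rho$-exponent is $p(\alpha\beta-1)+(2-\alpha)-p=p(\alpha\beta-2)+1$, matching the analogous expansion on the right. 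Raising to the power $\frac{2}{p\beta}$ then gives \eqref{eq:AffineFisher}, the factor $\alpha^{2/\beta}$ coming out of $\alpha^{p\cdot 2/(p\beta)}$ and the exponent $\alpha$ coming out of $\frac{2\cdot(\alpha\beta)}{p\beta}\cdot\frac{p}{2\cdot\alpha\beta}\cdot\alpha$ applied to $F_{p,\alpha\beta}[\rho]$.

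Finally, \eqref{eq:AffineComplexity} follows by combining the two previous identities via Definition~\ref{def:triparametricComplexity}: with $\opA_\alpha(\beta,\lambda)=(\alpha\beta,\,1+\alpha(\lambda-1))$, one has
\begin{equation*}
C_{p,\beta,\lambda}[\tilde\rho]=\bigl(\alpha^{2/\beta}F_{p,\alpha\beta}[\rho]^{\alpha}\,N_{1+\alpha(\lambda-1)}[\rho]^{\alpha}\bigr)^{\beta}=\alpha^{2}\,C_{p,\opA_\alpha(\beta,\lambda)}[\rho].
\end{equation*}
No step poses a conceptual difficulty; the only real care is bookkeeping the exponents of $\rho$ in the Fisher calculation, where the Jacobian $\rho^{1-\alpha}$, the weight $\tilde\rho$ in the Fisher functional, and the power $p(\beta-2)$ must be consolidated correctly. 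The positivity of $\alpha$ is crucial to ensure both the bijectivity of the substitution and the differentiability of $\tilde\rho$, as needed for Definition~\ref{def:biparametricFisher}; the R\'enyi part alone extends to any $\alpha\in\Rset^{*}$.
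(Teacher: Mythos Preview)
Your approach is essentially identical to the paper's: the same change of variables $dy=\rho(x)^{1-\alpha}dx$ is pushed through the R\'enyi and Fisher integrals separately, and the complexity relation is then read off. There is, however, a bookkeeping slip in your Fisher computation: since $\tilde\rho(y)=\rho(x)^{\alpha}$ (not $\rho(x)$), one has $\tilde\rho(y)\,dy=\rho(x)^{\alpha}\cdot\rho(x)^{1-\alpha}\,dx=\rho(x)\,dx$, not $\rho(x)^{2-\alpha}\,dx$; consequently your exponent check $p(\alpha\beta-1)+(2-\alpha)-p$ actually equals $p(\alpha\beta-2)+2-\alpha$, not $p(\alpha\beta-2)+1$, so the ``check'' as written does not close. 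With the correct weight $\rho(x)\,dx$ the exponent is immediately $p(\alpha\beta-2)+1$ and your displayed identity (which is correct) follows cleanly.
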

\begin{proof}
  It is straightforward to note that
  \begin{eqnarray*}
  \Big( N_\lambda\Big[\opE_\alpha[\rho]\Big]
  \Big)^{\frac{1-\lambda}{2}}
  & = & \int_\Rset  \left[  \opE_\alpha[\rho](y) \right]^\lambda  dy\\[1mm]
  & = &\int_\Rset \left[ \opE_\alpha[\rho]\big( y(x) \big) \right]^\lambda
  \frac{dy}{dx} \, dx\\[1mm]
  & = &\int_\Rset \left[ \rho(x) \right]^{\alpha \lambda + 1 - \alpha} dx\\[1mm]
  & = &\left(  N_{1+\alpha(\lambda-1)}[\rho]
  \right)^{\frac{\alpha(1-\lambda)}{2}},
  \end{eqnarray*}
  leading to Equation~\ref{eq:AffineRenyi}.

  Similarly,
  \begin{eqnarray*}
  \Big(  F_{p,\beta}\Big[\opE_\alpha[\rho]\Big]  \Big)^{\frac{p \beta}{2}} & = &
  \int_\Rset \Big| \left[ \opE_\alpha[\rho](y) \right]^{\beta-2} \frac{d}{dy}
  \left[ \opE_\alpha[\rho](y) \right] \Big|^p \opE_\alpha[\rho](y) \, dy\\[1mm]
  & = & \int_\Rset \Big| \left[ \opE_\alpha[\rho](y(x)) \right]^{\beta-2}
  \frac{d}{dx} \left[ \opE_\alpha[\rho](y(x)) \right] \frac{dx}{dy} \Big|^p
  \opE_\alpha[\rho](y(x)) \, \frac{dy}{dx} \, dx\\[1mm]
  & = & \int_\Rset \Big| \left[ \rho(x) \right]^{\alpha (\beta-2)} \frac{d}{dx}
  \left[ \left( \rho(x) \right)^\alpha \right] \left[ \rho(x) \right]^{\alpha-1}
  \Big|^p \rho(x) \, dx\\[1mm]
  & = & \int_\Rset \Big| \alpha \left[ \rho(x) \right]^{\alpha \beta-2}
  \frac{d}{dx} \left[ \rho(x)\right] \Big|^p \rho(x) \, dx,
  \end{eqnarray*}
  leading to Equation~\ref{eq:AffineFisher}.

  Relation~\ref{eq:AffineComplexity}       is      a       consequence      of
  Equations~\ref{eq:AffineRenyi}   and  ~\ref{eq:AffineFisher}   together  with
  Definition~\ref{def:triparametricComplexity} of  the complexity.
\end{proof}
One may mention~\cite{Kor17} where the  author studies the effect of a rescaling
of the Tsallis  non-additive parameter, equivalent to the  entropic parameter of
the     R\'enyi     entropy,     and     that     is     exactly     that     of
Equation~\ref{eq:AffineRenyi}. In particular, this  rescaling has an effect on
the maximum entropy distribution in such  a way that it is equivalent to elevate
this particular distribution to a power.  Here, the spirit is slightly different
since we start from a given distribution and the nonlinear stretching is made on
the  state ($x$-axis)  of  any probability  density in  such  a way  that it  is
elevated  to  an   exponent.   The  stretching  is  intimately   linked  to  the
distribution, being  of maximum entropy  or not, and  the scaling effect  on the
R\'enyi is a  consequence of this nonlinear stretching.  The  study of the links
between the present result and that of~\cite{Kor17} goes beyond the scope of our
work and remains as a perspective.


\subsection{Enlarging the Validity Domain of the Extended Stam Inequality}
\label{sec:EnlargedStam}

We have now  all the ingredients to  enlarge the domain of validity  of the Stam
inequality.  Moreover,  we are able to  determine an explicit  expression of the
minimizer by the mean of a special function, i.e., more simple to determine than
as in Proposition~\ref{prop:StamDp}, and of the tight bound as well.

To this aim, let us consider the following affine transform $\opA_a$ and the set
of transformation for $a \in \Rset_+^*$,
\begin{equation}
\opA_a : (\beta,\lambda) \mapsto (a \beta , 1 + a (\lambda-1)) \quad \mbox{and}
\quad \opA(\beta,\lambda) = \left\{ \opA_a(\beta,\lambda): \: a \in \Rset_+^*
\right\} \, \cap \, \Rset_+^{* \, 2}.
\end{equation}
Then,    for    any   strictly    positive    real    $a$,    one   can    apply
Proposition~\ref{prop:StamDp}  to   $\opE_a[\rho]$,  that  is,  for   $p  >  1$,
$(\beta,\lambda) \in \Dset_p$,  $C_{p,\beta,\lambda}\big[ \opE_a[\rho] \big] \ge
K_{p,\beta,\lambda}$.                          Thus,                        from
Proposition~\ref{prop:ComplexityStretchedEscort},   one  immediately   has  that
$C_{p,\opA_a(\beta,\lambda)}[\rho]   \ge   a^{-2}   K_{p,\beta,\lambda}   \equiv
K_{p,\opA_a(\beta,\lambda)}$.  Moreover,  this inequality  is sharp since  it is
achieved   for  $\opE_a[\rho]   =   \rho_{p,\beta,\lambda}$,  \   i.e.,  \   for
$\rho_{p,\opA_a(\beta,\lambda)} = \opE_{a^{-1}}[\rho_{p,\beta,\lambda}]$

As a conclusion, the existence of a universal optimal positive constant bounding
the  complexity (see  Proposition~\ref{prop:StamDp}) extends  from  $\Dset_p$ to
$\opA(\Dset_p)$.   Note that $\opA(\beta,\lambda)$  is the  overlap of  the line
defined by  the point  $(0,1)$ and $(\beta,\lambda)$  itself (achieved for  $a =
1$), and $\Rset_+^{* \, 2}$,  as depicted Figure~\ref{fig:DpTilda}.  Then, it is
straightforward to see that $ \widetilde{\Dset}_p \equiv \opA(\Dset_p) = \left\{
  (\beta,\lambda) \in  \Rset_+^{* \, 2}: \: \lambda  > 1 - \beta  p^* \right\} $
(see Figure~\ref{fig:DpTilda}a).  The approach is thus the following:
\begin{itemize}
\item  Consider a point  $(\beta,\lambda) \in  \widetilde{\Dset}_p$ and  find an
  index  $\alpha  \in   \Rset_+^*$  such  that  $\opA_\alpha(\beta,\lambda)  \in
  \Dset_p$, which is  a point of the intersection between  $\Dset_p$ and the line
  joining $(0,1)$ and $(\beta,\lambda)$.
\item      Apply      Proposition~\ref{prop:StamDp}      for      the      point
  $(p,\opA_\alpha(\beta,\lambda))$,   leading  to  the   minimizing  distribution
  $\rho_{p,\opA_\alpha(\beta,\lambda)}$ and its corresponding bound.
\item       Then,      remarking       that       $\opA_{\alpha^{-1}}      \circ
  \opA_{\alpha}(\beta,\lambda) = (\beta,\lambda)$, the minimizer of the extended
  complexity     writes    $\rho_{p,\beta,\lambda}    =     \opE_\alpha    \Big[
  \rho_{p,\opA_\alpha(\beta,\lambda)} \Big]$ and  the corresponding bound can be
  computed from  this minimizer or  noting that $K_{p,\beta,\lambda}  = \alpha^2
  K_{p,\opA_\alpha(\beta,\lambda)}$.
\end{itemize}

The same  procedure obviously applies  dealing with $\Lset_p$:  $\opA(\Lset_p) =
\left\{ (\beta,\lambda) \in \Rset_+^{* \, 2}: \: 1-\beta p^* < \lambda < \beta+1
\right\}$   appears    to   be   a   subset    of   $\widetilde{\Dset}_p$   (see
Figure~\ref{fig:DpTilda}b).     Similarly,   one    can   also    deal   with
$\overline{\Lset}_p$:  $\opA(\overline{\Lset}_p) =  \left\{  (\beta,\lambda) \in
  \Rset_+^{*  \, 2}: \:  \lambda >  1 -  \frac{p^* \beta}{p^*+1}  \right\}$ also
appears     to     be     a     subset     of     $\widetilde{\Dset}_p$     (see
Figure~\ref{fig:DpTilda}c).  Remarkably,  $\opA(\Dset_p) = \opA(\Lset_p) \cup
\opA(\overline{\Lset}_p)$.   Moreover,  we  have  explicit expressions  for  the
minimizers  in both $\Lset_p$  and $\overline{\Lset}_p$,  which greatly  eases
determining the minimizers in $\widetilde{\Dset}_p$ (including $\Dset_p$ itself).

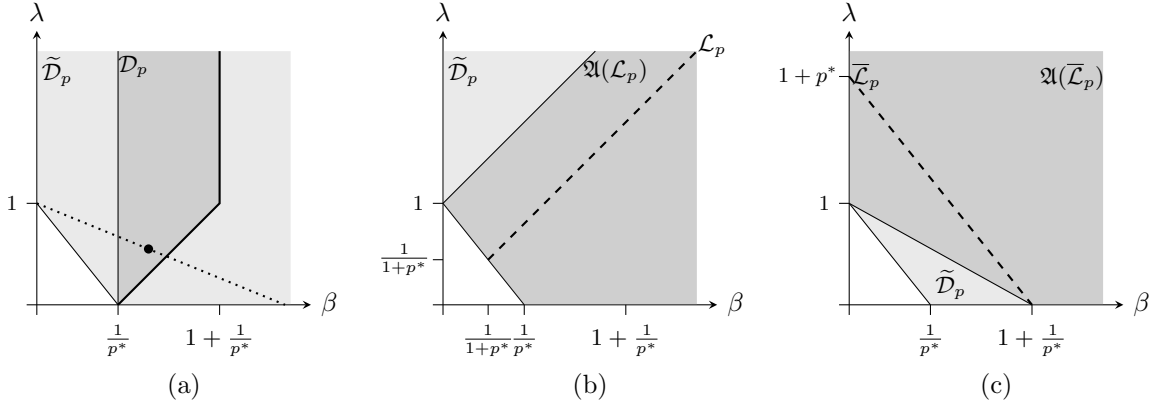
\begin{figure}[ht!]
\begin{center}
\begin{tikzpicture}
%
%
\pgfmathsetmacro{\p}{5}
\pgfmathsetmacro{\ps}{\p/(\p-1)}
\pgfmathsetmacro{\lM}{2.5}
\pgfmathsetmacro{\bM}{2.5}
\pgfmathsetmacro{\bz}{1.1} 
\pgfmathsetmacro{\lz}{.55} 
%
%
\begin{scope}[scale=1.35]
%
%
\draw[>=stealth,->] (-.1,0) -- ({\bM+.2},0) node[right]{\small $\beta$};
\draw[>=stealth,->] (0,-.1) -- (0,{\lM+.2}) node[above]{\small $\lambda$};
%
\fill[color=black!40!white,opacity=.2]
(0,\lM) -- (0,1) -- ({1/\ps},0) -- (\bM,0) -- (\bM,\lM); 
\draw (0,1) -- ({1/\ps},0); 
\draw (+.2,{\lM-.2}) node {\footnotesize $\widetilde{\Dset}_p$};
%
%
\fill[color=black!60!white,opacity=.2]
( {1/\ps} , \lM ) -- ( {1/\ps} , 0 ) -- ( {1+1/\ps} , 1 ) -- ( {1+1/\ps} , \lM );
\draw ({1/\ps},\lM)    --   ({1/\ps},0); 
\draw[thick]  ({1/\ps},0)   -- ({1+1/\ps},1) -- ({1+1/\ps},\lM); 
\draw ({1/\ps+.15},{\lM-.15}) node {\footnotesize $\Dset_p$};
%
\draw (\bz,\lz) node[scale=3]{.};
\draw[dotted,thick] (0,1) -- ({min(\bz/(1-\lz),\bM)},{max(0,1+(\lz-1)*\bM/\bz)});
%
\draw ({1/\ps},0) -- ({1/\ps},-.1) node[below]{\footnotesize $\frac{1}{p^*}$};
\draw ({1+1/\ps},0) -- ({1+1/\ps},-.1) node[below]{\footnotesize $1+\frac{1}{p^*}$};
\draw (0,1) -- (-.1,1) node[left,scale=.9]{\footnotesize $1$};
\draw ({\bM/2+.2},-.8) node{\small (a)};
\end{scope}
%
%
%
\begin{scope}[scale=1.35,xshift=4cm]
\draw[>=stealth,->] (-.1,0) -- ({\bM+.2},0) node[right]{\small $\beta$};
\draw[>=stealth,->] (0,-.1) -- (0,{\lM+.2}) node[above]{\small $\lambda$};
%
\fill[color=black!40!white,opacity=.2]
(0,\lM) -- (0,1) -- ({1/\ps},0) -- (\bM,0) -- (\bM,\lM); 
\draw (0,1) -- ({1/\ps},0); 
\draw (.2,{\lM-.2}) node {\footnotesize $\widetilde{\Dset}_p$};
%
%
\fill[color=black!60!white,opacity=.2]
(0,1) -- ({1/\ps},0) -- (\bM,0) --
(\bM,{min(1+\bM,\lM)}) --
({min(\lM-1,\bM)},{min(\lM,\bM+1)}); 
\draw (0,1)--({min(\lM-1,\bM)},{min(\lM,\bM+1)}); 
\draw ({min(\lM-1,\bM)+.2},{min(\lM,\bM+1)-.2}) node {\footnotesize $\opA(\Lset_p)$};
%
%
\draw[dashed,thick] ({1/(\ps+1)},{1/(\ps+1)})--({min(\lM,\bM)},{min(\lM,\bM)});
\draw ({min(\lM,\bM)+.15},{min(\lM,\bM)+.05}) node {\footnotesize $\Lset_p$};
\draw ({1/(1+\ps)},0) -- ({1/(1+\ps)},-.1) node[below]{\footnotesize $\frac{1}{1+p^*}$};
\draw ({1/\ps},0) -- ({1/\ps},-.1) node[below]{\footnotesize $\frac{1}{p^*}$};
\draw ({1+1/\ps},0) -- ({1+1/\ps},-.1) node[below]{\footnotesize $1+\frac{1}{p^*}$};
\draw (0,{1/(1+\ps)}) -- (-.1,{1/(1+\ps)});
\draw (-.025,{1/(1+\ps)}) node[left]{\footnotesize $\frac{1}{1+p^*}$};
\draw (0,1) -- (-.1,1) node[left,scale=.9]{\footnotesize $1$};
\draw ({\bM/2+.2},-.8) node{\small (b)};
\end{scope}
%
%
%
\begin{scope}[scale=1.35,xshift=8cm]
\pgfmathsetmacro{\lm}{-1}
\pgfmathsetmacro{\bm}{-1.5}
\draw[>=stealth,->] (-.1,0) -- ({\bM+.2},0) node[right]{\small $\beta$};
\draw[>=stealth,->] (0,-.1) -- (0,{\lM+.2}) node[above]{\small $\lambda$};
%
\fill[color=black!40!white,opacity=.2]
(0,\lM) -- (0,1) -- ({1/\ps},0) -- (\bM,0) -- (\bM,\lM); 
\draw (0,1) -- ({1/\ps},0); 
\draw ({1/\ps+.2},.2) node {\footnotesize $\widetilde{\Dset}_p$};
%
%
\fill[color=black!60!white,opacity=.2]
(0,1) -- ({1+1/\ps},0) -- (\bM,0) -- (\bM,\lM) -- (0,\lM); 
\draw (0,1) -- ({1+1/\ps},0);
\draw ({\bM-.3},{\lM-.25}) node {\footnotesize $\opA(\overline{\Lset}_p)$};
%
%
\draw[dashed,thick]  (0,{1+\ps}) -- ({(\ps+1)/\ps},0);
\draw (.17,{1+\ps}) node{\footnotesize $\overline{\Lset}_p$};
\draw ({1/\ps},0) -- ({1/\ps},-.1) node[below]{\footnotesize $\frac{1}{p^*}$};
\draw ({1+1/\ps},0) -- ({1+1/\ps},-.1) node[below]{\footnotesize $1+\frac{1}{p^*}$};
\draw (0,1) -- (-.1,1) node[left,scale=.9]{\footnotesize $1$};
\draw (0,{1+\ps}) -- (-.1,{1+\ps});
\draw (-.025,{1+\ps}) node[left,scale=.9]{\footnotesize $1+p^*$};
\draw ({\bM/2+.2},-.8) node{\small (c)};
\end{scope}
\end{tikzpicture}
\end{center}
\caption{Given a $p$, the domain in gray represents $\widetilde{\Dset}_p$, where
  we know  that the $(p,\beta,\lambda)$-Fisher--R\'enyi  complexity is optimally
  lower   bounded    and   where   the   minimizers   can    be   deduced   from
  proposition~\ref{prop:StamDp}.   (\textbf{a})~the   domain  in  dark   gray  represents
  $\Dset_p$, which is obviously included in $\widetilde{\Dset}_p$; the  dot is a
  particular point $(\beta,\lambda) \in  \Dset_p$ and the dotted line represents
  its  transform   by  $\opA$;   (\textbf{b})  the  domain   in  dark   gray  represents
  $\opA(\Lset_p) \subset \widetilde{\Dset}_p$, which obviously contains $\Lset_p$
  represented by the dashed line;  (\textbf{c}) same as~(\textbf{b}) with $\overline{\Lset}_p$ and
  $\opA(\overline{\Lset}_p) \subset \widetilde{\Dset}_p$.  This illustrates that
  $\widetilde{\Dset}_p = \opA(\Lset_p) \cup \opA(\overline{\Lset}_p)$. }
\label{fig:DpTilda}
\end{figure}

These remarks, together with both  the knowledge of the minimizing distributions
and  the  bound on  $\Lset_p  \cup  \overline{\Lset}_p$,  lead to  the  following
definition and proposition.

\begin{Definition}[$(p,\beta,\lambda)$-Gaussian
  distribution]\label{def:triparametricGaussian}
  For any  $p >  1$ and $(\beta,\lambda)  \in \Rset_+^{*  \, 2}$, we  define the
  $(p,\beta,\lambda)$-Gaussian distribution as
 \begin{equation}
  g_{p,\beta,\lambda}(x) \propto \left\{\begin{array}{lcl} \left[ 1 - \Beta^{-1}
  \! \left( \frac{1}{p^*} , q_{p,\beta,\lambda} ; \frac{p^*
  |x|}{|1-\lambda|^{\frac{1}{p^*}}} \right)\right]^{\frac{1}{|1-\lambda|}}
  \un_{\left[ 0 \; B\left( \frac{1}{p^*} , q_{p,\beta,\lambda} \right)
  \right]} \! \left( \frac{p^* |x|}{|1-\lambda|^{\frac{1}{p^*}}} \right),
  & \mbox{if} & \lambda \ne 1,\\[7.5mm]
  \exp\left( - \frac{\G^{-1} \left( \frac{1}{p^*} \; \left(
 \frac{\beta-1}{\beta} \right)^{\frac{1}{p^*}} p^* |x| \right)}{\beta-1} \right) \,
 \un_{\left[ 0 \; \frac{\Gamma( 1/p^*)}{\un_{(0 \; 1)}(\beta)}
 \right]}\big( p^* |x| \big),
  & \mbox{if} & \begin{array}{l}\lambda = 1,\\ \beta \ne 1, \end{array}\\[7.5mm]
  \exp\left( - |x|^{p^*} \right),
  & \mbox{if} & \beta = \lambda = 1,
  \end{array}\right.
 \end{equation}
 with
  %
  \begin{equation}
  q_{p,\beta,\lambda} = \frac{\beta-1}{|1-\lambda|} +
  \frac{\un_{\Rset_+}(1-\lambda)}{p}.
  \end{equation}
  $\opT_p$ is  the involutary transform  defined Equation~\ref{eq:Involution}.
  $\displaystyle  \Beta(a,b,x)  =  \int_0^x   t^{a-1}  (1-t)^{b-1}  dt$  is  the
  incomplete  beta function,  defined when  $a >  0$ and  for $x  \in [0  \; 1)$
  (see~\cite{OlvLoz10}),   and   $\displaystyle   B(a,b)   =  \lim_{x   \to   1}
  \Beta(a,b,x)$, that  is the  standard beta  function if $b  > 0$  and infinite
  otherwise.     $\Beta^{-1}$   is    thus   the    inverse    incomplete   beta
  function. Finally,  $\displaystyle \G(a,x) = \int_0^x t^{a-1}  \exp(-t) \, dt$
  is  the incomplete  gamma  function,  defined when  $a  > 0$  and  for $x  \in
  \Rset$~\cite{OlvLoz10}, and $\Gamma(a) =  \lim_{x \to +\infty} \G(a,x)$ is the
  gamma  function.   By  definition,  $z^\alpha =  |z|^\alpha  e^{\imath  \alpha
    \textit{Arg}(t)}$ where  $0 \le  \textit{Arg}(t) <  2 \pi$.  Finally,  by convention  $1/0 =
  +\infty$.
\end{Definition}
Note that, when $b > 0$, the  inverse incomplete beta function is well known and
tabulated in the usual mathematical softwares since it is the inverse cumulative
function   of  the  beta   distributions~\cite{AbrSte70}.   Otherwise,   as  the
incomplete     beta     function     writes    through     an     hypergeometric
function~\cite{GraRyz07}  (see also~\cite{OlvLoz10,AbrSte70}),  also  well known
and  tabulated,  $\Beta^{-1}$  can   be  at  least  numerically  computed.   The
incomplete   beta  function   contains   many  special   cases  for   particular
parameters~\cite{GraRyz07,PruBry90:v3}.  For instance,  when $a+b$ is a negative
integer, they express as elementary functions~\cite{GraRyz07}.

Similarly, when its argument is  positive, the incomplete gamma function and its
inverse are well  known and tabulated because they are  linked to the cumulative
distribution of  gamma laws~\cite{AbrSte70}.   Even for negative  arguments, the
incomplete  gamma function  is very  often tabulated  in  mathematical software.
Otherwise,    one   can    write   it    using   a    confluent   hypergeometric
function~\cite{OlvLoz10}    (see    also~\cite{GraRyz07,AbrSte70}),    generally
tabulated. Thus, it can be  inverted at least numerically.  The incomplete gamma
function also  contains special cases for particular  parameters.  For instance,
$\G\left(\frac12,x^2\right) = \textrm{erf}(x),$ where $\textrm{erf}$
is the  error function~\cite{OlvLoz10}.  Hence, for  $p = 2$ and  $\lambda = 1$,
the $(p,\beta,\lambda)$-Gaussian writes in terms of the inverse error function.

Now, from the procedure previously described, we obtain the Stam inequality with
the widest possible  domain, together with the minimizing  distributions and the
explicit tight lower bound.

\begin{Proposition}[Stam inequality in a wider domain]\label{prop:StamDpTilda}
  The  $(p,\beta,\lambda)$-Fisher--R\'enyi  complexity  is non  trivially  lower
  bounded as follows:
  \begin{equation}
  \forall \, p > 1, \quad (\beta,\lambda) \in \widetilde{\Dset}_p = \left\{
  (\beta,\lambda) \in \Rset_+^{* \, 2}: \: \lambda > 1 - \beta p^* \right\},
  \qquad C_{p,\beta,\lambda}[\rho] \ge K_{p,\beta,\lambda}.
  \end{equation}
  The minimizers are explicitly given by
  \begin{equation}
  \textrm{argmin}_\rho  C_{p,\beta,\lambda}[\rho]   =  g_{p,\beta,\lambda},
  \end{equation}
  the                       $(p,\beta,\lambda)$-Gaussian                      of
  Definition~\ref{def:triparametricGaussian}.  Proposition~\ref{prop:Symmetries}
  remains valid in $\widetilde{\Dset}_p$. Moreover, the tight bound is
  \begin{equation}
  K_{p,\beta,\lambda} = \left\{\begin{array}{lll}
  \left( \frac{2}{p^* \zeta_{p,\beta,\lambda}} \left( \frac{p^*
  \zeta_{p,\beta,\lambda}}{|1-\lambda|} \right)^{\frac{1}{p^*}} \left( \frac{p^*
  \zeta_{p,\beta,\lambda}}{p^* \zeta_{p,\beta,\lambda} -|1-\lambda|}
  \right)^{\frac{\zeta_{p,\beta,\lambda}}{|1-\lambda|}+\frac{1}{p}} \, B\left(
  \frac{1}{p^*} , \frac{\zeta_{p,\beta,\lambda}}{|1-\lambda|} + \frac{1}{p}
  \right) \right)^2, & \mbox{if} & \lambda \ne 1,\\[7.5mm]
  \left( \frac{2 \, e^{\frac{1}{p^*}} \, \Gamma\left( \frac{1}{p^*}
  \right)}{\beta p^{* \, \frac{1}{p}}} \right)^2, & \mbox{if} & \lambda = 1,
  \end{array}\right.
  \end{equation}
  with
  \begin{equation}
  \zeta_{p,\beta,\lambda} = \beta + \frac{(\lambda-1)_+}{p^*}.
  \label{eq:zeta}
  \end{equation}
\end{Proposition}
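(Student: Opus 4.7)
The argument formalises the three-step procedure outlined immediately before the Proposition. Given $(\beta,\lambda) \in \widetilde{\Dset}_p$, since $\widetilde{\Dset}_p = \opA(\Lset_p) \cup \opA(\overline{\Lset}_p)$, one can select $\alpha > 0$ for which $\opA_\alpha(\beta,\lambda)$ lies in $\Lset_p \cup \overline{\Lset}_p$: if $1 - p^*\beta < \lambda < \beta + 1$ take $\alpha = 1/(\beta + 1 - \lambda)$, landing on the Bercher--Lutwak line $\beta' = \lambda'$ in $\Lset_p$; if instead $\lambda > 1 - p^*\beta/(p^* + 1)$ take $\alpha = p^*/(p^*\beta + \lambda - 1)$, landing on the symmetric line $\beta' = (p^* + 1 - \lambda')/p^*$ in $\overline{\Lset}_p$. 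The two ranges overlap and together cover all of $\widetilde{\Dset}_p$.

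\textbf{Transfer of the inequality and symmetry.} On $\Lset_p \cup \overline{\Lset}_p$ the results recalled in Sections~\ref{sec:Bercher}--\ref{sec:BercherSymmetric} yield the sharp bound $K_{p,\opA_\alpha(\beta,\lambda)}$, saturated by a stretched deformed Gaussian $g_{p,\lambda''}$ (with $\lambda'' = 1 + \alpha(\lambda - 1)$ in the $\Lset_p$ branch, $\lambda'' = 2 - (1 + \alpha(\lambda-1))$ in the $\overline{\Lset}_p$ branch). Applying this inequality to $\opE_{\alpha^{-1}}[\rho]$ and invoking the scaling formula \eqref{eq:AffineComplexity} gives
\begin{equation}
C_{p,\beta,\lambda}[\rho] = \alpha^2 \, C_{p,\opA_\alpha(\beta,\lambda)}\Big[ \opE_{\alpha^{-1}}[\rho] \Big] \ge \alpha^2 K_{p,\opA_\alpha(\beta,\lambda)} \equiv K_{p,\beta,\lambda},
\end{equation}
with equality precisely when $\rho = \opE_\alpha[g_{p,\lambda''}]$. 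The involution symmetry under $\opT_p$ carries over from Proposition~\ref{prop:Symmetries} because $\opT_p$ stabilises $\widetilde{\Dset}_p$ and the reasoning of Appendix~\ref{app:Symmetries} transfers verbatim.

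\textbf{Identification of the minimiser and main obstacle.} It remains to verify that $\opE_\alpha[g_{p,\lambda''}]$ is precisely the $(p,\beta,\lambda)$-Gaussian of Definition~\ref{def:triparametricGaussian}. Starting from $g_{p,\lambda''}(t) \propto (1 + (1-\lambda'')|t|^{p^*})_+^{1/(\lambda''-1)}$ and writing $y(x) = \int_0^x g_{p,\lambda''}(t)^{1-\alpha}\, dt$, the substitution $w = (1-\lambda'')|t|^{p^*}/(1+(1-\lambda'')|t|^{p^*})$ reduces the integral to an incomplete beta function $\Beta(1/p^*, q_{p,\beta,\lambda}; w)$, the matching of exponents being ensured by $\lambda'' - 1 = \alpha(\lambda - 1)$. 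Inverting and reinjecting into $[g_{p,\lambda''}(x(y))]^\alpha$ yields the stated $[1 - \Beta^{-1}(\cdot)]^{1/|1-\lambda|}$ form, up to the shift/scaling freedom granted by Proposition~\ref{prop:Invariance}. The limit $\lambda = 1$ is treated analogously starting from $\exp(-|t|^{p^*})$, the beta function being replaced by an incomplete gamma, and $\beta = \lambda = 1$ trivially recovers $\exp(-|x|^{p^*})$. The explicit constant then follows from $K_{p,\beta,\lambda} = \alpha^2 K_{p,\opA_\alpha(\beta,\lambda)}$ using the known Bercher--Lutwak value; the appearance of $(\lambda - 1)_+$ in $\zeta_{p,\beta,\lambda}$ reflects the change of branch ($\Lset_p$ versus $\overline{\Lset}_p$) across $\lambda = 1$. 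The main technical obstacle is tracking normalisation constants through the change of variables, in particular to recover the exact argument $p^*|x|/|1-\lambda|^{1/p^*}$ of $\Beta^{-1}$ and the precise value of $q_{p,\beta,\lambda}$; everything else is bookkeeping.
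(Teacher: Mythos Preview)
Your proof is correct and follows essentially the same approach as the paper's Appendix~\ref{app:StamDpTilda}: pick $\alpha$ so that $\opA_\alpha(\beta,\lambda)$ lands on $\Lset_p$ or $\overline{\Lset}_p$, transfer the sharp Bercher--Lutwak inequality through Equation~\eqref{eq:AffineComplexity}, then compute $\opE_\alpha[g_{p,\lambda''}]$ explicitly via the incomplete beta (resp.\ gamma) substitution and read off the bound from Lutwak's formula for $K_{p,l,l}$. The only cosmetic difference is that the paper partitions $\widetilde{\Dset}_p$ into the disjoint pieces $\{\lambda<1\}$, $\{\lambda>1\}$, $\{\lambda=1\}$ rather than using the overlapping regions $\opA(\Lset_p)$ and $\opA(\overline{\Lset}_p)$ as you do, but the computations are identical.
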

\begin{proof}
  See Appendix~\ref{app:StamDpTilda}.
\end{proof}


\section{Applications to Quantum Physics}
\label{sec:Applications}

Let  us now  apply the  $(p,\beta,\lambda)$-Fisher--R\'enyi complexity  for some
specific values of the parameters to  the analysis of the two main prototypes of
$\d$-dimensional  quantum  systems  subject  to  a  central  (i.e.,  spherically
symmetric)   potential;    namely,   the   hydrogenic    and   harmonic   (i.e.,
oscillator-like) systems.  The wave functions  of the bound stationary states of
these systems  have the same  angular part, so  that we concentrate here  on the
radial distribution in both position and momentum spaces.


\subsection{Brief Review on the Quantum Systems with Radial Potential}

The  time-independent Schr\"odinger equation  of a  single-particle system  in a
central potential $V(r)$ can be written as
\begin{equation}
\left( - \frac12 \vec{\nabla}^2_\d + V(r) \right) \Psi \left( \vr \right) = E_n
\, \Psi \left( \vr \right),
\label{eq:schrodinger_central}
\end{equation}
(atomic units are  used from here onwards), where  $\vec{\nabla}_\d$ denotes the
$\d$-dimensional  gradient operator  and the  position vector  \ $\vr  =  (x_1 ,
\ldots , x_\d)$ \  in hyperspherical units is given by \  $\left( r , \theta_1 ,
  \theta_2  , \ldots  , \theta_{\d-1}  \right) \equiv  \left( r  , \Omega_{\d-1}
\right)$, \ $\Omega_{\d-1}\in \Sset^{\d-1}$  \ the unit $\d$-dimensional sphere,
where \ $r \equiv \left| \vr \right| = \sqrt{\sum_{i=1}^\d x_i^2} \in \Rset_+$ \
and \ $\displaystyle x_i = r \left( \prod_{k=1}^{i-1} \sin \theta_k \right) \cos
\theta_i$ \ for \ $1 \le i \le \d$  \ and with \ $\theta_i \in [0 \; \pi)$ \ for
$i < \d-1$, \ $\theta_{\d-1} \equiv \phi \in  [0 \; 2 \pi)$ \ and \ $\theta_\d =
0$ \  by convention.  The  physical wave functions  are known to  factorize (see
e.g.,~\cite{Nie79, YanVan94, Ave02}) as
\begin{equation}
\Psi_{n,l, \left\lbrace \mu \right\rbrace }(\vr) = \Rad_{n,l}(r) \,
{\cal{Y}}_{l,\{\mu\}}(\Omega_{\d-1}),
\label{eq:Psi_RY}
\end{equation}
where $\Rad_{n,l}(r)$ and  $\Hyp_{l,\{\mu\}}\left( \Omega_{\d-1} \right)$ denote
the radial  and the angular part,  respectively, being $\left(  l , \left\lbrace
    \mu \right\rbrace  \right) \equiv \left( l  \equiv \mu_1 , \mu_2  , \ldots ,
  \mu_{\d-1}  \right)$  the  hyperquantum  numbers  associated  to  the  angular
variables   $\Omega_{\d-1}\equiv  \left(   \theta_1  ,   \theta_2  ,   \ldots  ,
  \theta_{\d-1}  \right)$,  which  may  take  all  values  consistent  with  the
inequalities  $l \equiv  \mu_1 \geq  \mu_2  \geq \ldots  \geq \left|  \mu_{\d-1}
\right| \equiv \left| m \right| \geq 0$.

As already  stated, the  angular part $\Hyp_{l,\{\mu\}}$  is independent  of the
potential  $V$  and  its  expression is  detailed  in~\cite{Ave02,  DehLop10:07,
  YanVan99,  SanZoz11}, for  instance.   Only the  radial  part $\Rad_{n,l}$  is
dependent  on $V$ (and  also on  the energy  level $n$  and the  angular quantum
number $l$), being the solution of the radial differential equation
\begin{equation}
\left( - \frac12 \frac{d^2}{dr^2} - \frac{\d-1}{2 \, r} \frac{d}{dr} + \frac{l
(l+\d-2)}{2 \, r^2} + V(r) \right) \Rad_{n,l}(r) = E_n \Rad_{n,l}(r)
\label{eq:EDO_Rad}
\end{equation}
(see e.g.,~\cite{DehLop10:07,  YanVan99, SanZoz11} for  further details).  Then,
the associated radial probability density $\rho(r)$ is given by
\begin{equation}
\rho_{n,l}(r) dr = \int_{\Sset^{\d-1}} \left| \Psi(\vr) \right|^2 d\vr \, = \,
\left[ \Rad_{n,l}(r) \right]^2 \, r^{\d-1} dr,
\label{eq:RadialPartP}
\end{equation}
where  we have taken  into account  the volume  element $d\vr  = r^{\d-1}  dr \,
d\Omega_{\d-1}$   and  the   normalization  of   the   hyperspherical  harmonics
$\Hyp_{l,\{\mu\}}\left( \Omega_{\d-1} \right)$ to unity.

Then, the wavefunction associated to the  momentum of the system is given by the
Fourier  transform  $\widetilde{\Psi}$ of  $\Psi$.   It  is  known that,  again,
$\widetilde{\Psi}$ writes as the product of a radial and angular part
\begin{equation}
\widetilde{\Psi}_{n,l, \left\lbrace \mu \right\rbrace }(\vk) = \Mad_{n,l}(k) \,
{\cal{Y}}_{l,\{\mu\}}(\Omega_{\d-1}),
\label{eq:PsiTilda_RY}
\end{equation}
with the  the radial part being the modified Hankel transform of $\Rad_{n,l}$,
\begin{equation}
\Mad_{n,l}(k) = (-\imath)^l k^{1-\frac{\d}{2}} \int_{\Rset_+} r^{\frac{\d}{2}}
\Rad_{n,l}(r) \, J_{l+\frac{\d}{2}-1}(k r) \, dr,
\label{eq:Hankel}
\end{equation}
with  $J_\nu$  the Bessel  function  of  the first  king  and  order $\nu$  (see
e.g.,~\cite{Ave02, DehLop10:07,  YanVan99, SanZoz11}).   Again, it leads  to the
radial probability density function
\begin{equation}
\gamma_{n,l}(k) \, = \, \left[ \Mad_{n,l}(k) \right]^2 \, k^{\d-1}.
\label{eq:RadialPartM}
\end{equation}
In  the  following, we  will  focus  on the  $(p,\beta,\lambda)$-Fisher--R\'enyi
complexity of the radial  densities $\rho_{n,l}(r)$ and $\gamma_{n,l}(k)$ of the
$\d$-dimensional harmonic and hydrogenic systems.


\subsection{$(p,\beta,\lambda)$-Fisher--R\'enyi  Complexity  and the  Hydrogenic
  System}

The  bound states  of  a $\d$-dimensional  hydrogenic  system, where  $V(r) =  -
\frac{Z}{r}$ \  ($Z$ denotes the nuclear  charge) are the  physical solutions of
Equation~\ref{eq:EDO_Rad}, which correspond to the known energies
\begin{equation}
E_n^\hyd= -\frac{Z^2}{2\eta^2} \qquad \mbox{where} \qquad \eta = n +
\frac{\d-3}{2}; \qquad n = 1, 2,\ldots
\label{eq:energyH}
\end{equation}
(see~\cite{Nie79, YanVan94, DehLop10:07}).   The radial eigenfunctions are given
by
\begin{equation}
\Rad_{n,l}^\hyd(r) = \sqrt{R_{n,l}} \, {\textstyle \left( \frac{2 Z}{\eta}
\right)^{\!\frac{\d-1}{2}}} \, \tr^{\, l} e^{-\frac{\tr}{2}} \,
\Lag_{\eta-L-1}^{(2 L + 1)}(\tr).
\label{eq:R_nl_H}
\end{equation}
$L$ is  the \textit{grand} orbital angular  momentum quantum number,  $\tr$ is a
dimensionless parameter,  and the normalization coefficient  $R_{n,l}$ are given
by
\begin{equation}
L = l + \frac{\d-3}{2}, \quad l = 0, 1,\ldots, n-1; \qquad \tr = \frac{2 Z}{\eta}
\, r \qquad \mbox{and} \qquad R_{n,l} = \frac{Z \, \Gamma(\eta-L)}{\eta^2 \,
\Gamma(\eta+L+L)},
\label{eq:GrandOrbital_tilder_Rnl_Hydrogen}
\end{equation}
respectively,        with       $\Lag_n^{(\alpha)}(x)$        the       Laguerre
polynomials~\cite{OlvLoz10,  GraRyz07}.  Then,  the  radial probability  density
(\ref{eq:RadialPartP})  of   a  $\d$-dimensional  hydrogenic   stationary  state
$(n,l,\{\mu\})$ is given in position space by
\begin{equation}
\rho_{n,l}^\hyd(r) = R_{n,l} \, \tr^{\, 2 L + 2} \, e^{-\tr} \left[
\Lag_{\eta-L-1}^{(2 L + 1)}(\tr) \right]^2.
\label{eq:RadensityPH}
\end{equation}
Furthermore,    using     8.971    in~\cite{GraRyz07},    one     can    compute
$\frac{d\rho_{n,l}^\hyd}{dr} = \frac{2 Z}{\eta} \frac{d\rho_{n,l}^\hyd}{d\tr}$.

On   the   other  hand,   the   modified   Hankel   transform  of   $\Rad_{n,l}$
Equation~\ref{eq:Hankel}  gives the  radial part  of the  wavefunction  in the
conjugated momentum space as~\cite{Nie79, YanVan94, DehLop10:07}
\begin{equation}
\Mad_{n,l}(k) = \sqrt{M_{n,l}} {\textstyle \left( \frac{\eta}{Z}
\right)^{\frac{\d-1}{2}}} \frac{\tk^{\, l}}{\left( 1 + \tk^{\, 2} \right)^{L+2}}
\: \Geg_{\eta-L-1}^{(L+1)} \left( \frac{1 - \tk^{\, 2}}{1 + \tk^{\, 2}} \right),
\label{eq:M_nl_H}
\end{equation}
where  $\tk$ is  a  dimensionless parameter  and  the normalization  coefficient
$M_{n,l}$ are given by
\begin{equation}
\tk = \frac{\eta}{Z} \, k \qquad \mbox{and} \qquad M_{n,l} = \frac{4^{2L+3} \,
\Gamma(\eta-L) \left[ \Gamma(L+1) \right]^2 \eta^2}{2 \, \pi \, Z \,
\Gamma(\eta+L+1)},
\label{eq:tildek_Mnl_Hydrogen}
\end{equation}
and      where      $\Geg_n^{(\alpha)}(x)$      denotes      the      Gegenbauer
polynomials~\cite{OlvLoz10,  GraRyz07}.   This   gives  the  radial  probability
density function in the momentum space as
\begin{equation}
\gamma_{n,l}^\hyd(k) = M_{n,l} \, \frac{\tk^{\, 2L+2}}{\left( 1 + \tk^{\, 2}
\right)^{2L+4}} \left[ \Geg_{\eta-L+1}^{(L+1)} \left( \frac{1-\tk^{\,
2}}{1+\tk^{\, 2}} \right) \right]^2.
\label{eq:RadensityMH}
\end{equation}
Furthermore,    using     8.939    in~\cite{GraRyz07},    one     can    compute
$\frac{d\gamma_{n,l}^\hyd}{dk}            =           \frac{2           Z}{\eta}
\frac{d\gamma_{n,l}^\hyd}{d\tk}$.

These       expressions       can        thus       be       injected       into
Equations~\ref{eq:RenyiPower}--\ref{eq:triparametricComplexity}  to evaluate
the $(p,\beta,\lambda)$-Fisher--R\'enyi complexity of both $\rho_{n,l}^\hyd$ and
$\gamma_{n,l}^\hyd$.   Due  to  the  special  form  of  the  density,  involving
orthogonal polynomials, this  can be done using for  instance a Gauss-quadrature
method for the integrations~\cite{AbrSte70}.

For  illustration purposes,  we depict  in Figure~\ref{fig:HydrogenicFNCradialR}
the behavior  of the  Fisher information $F_{p,\beta}$,  of the  R\'enyi entropy
power  $N_\lambda$, and  of  the $(p,\beta,\lambda)$-Fisher--R\'enyi  complexity
$C_{p,\beta,\lambda}$  (normalized by the  lower bound)  of the  radial position
density $\rho_{n,l}^\hyd$ of the  $\d$-dimensional hydrogenic system, versus $n$
and $l$, for the parameters  $(p,\beta,\lambda) = (2,1,7)$ and in dimensions $\d
= 3$ and  $12$. Therein, we firstly  observe that, for a given  quantum state of
the system  (so, when $n$ and  $l$ are fixed), the  Fisher information decreases
(see left graph) and the R\'enyi entropy power increases (see center graph) when
$\d$ goes from $3$ to $12$.   This indicates that the oscillatory degree and the
spreading  amount of  the radial  electron  distribution have  a decreasing  and
increasing  behavior,  respectively,  when  the dimension  is  increasing.   The
resulting combined effect, as captured and quantified by the the Fisher--R\'enyi
complexity (see right graph), is such that the complexity has a clear dependence
on the difference  $n-l$ in such a delicate way that  it decreases when $n-l=1,$
but it increases when $n-l$ is bigger than unity as $\d$ is increasing.

To better understand this phenomenon, we  have to look carefully at the opposite
behavior of the Fisher information and the R\'enyi entropy power versus the pair
$(n, l)$.

Indeed, for  the two  dimensionality cases considered  in this work,  the Fisher
information presents  a decreasing  behavior when $l$  is increasing and  $n$ is
fixed,  reflecting essentially  that the  number of  oscillations of  the radial
electron distribution is gradually smaller; keep in mind that $\eta - L = n - l$
is the  degree of  the Laguerre polynomials  which controls the  radial electron
distribution. At the smaller dimension  ($\d=3$), a similar behavior is observed
when $l$ is  fixed and $n$ is increasing, while the  opposite behavior occurs at
the higher dimension ($\d=12$).  This indicates that the radial fluctuations are
bigger in number  as $n$ increases and their amplitudes  depend on the dimension
$\d$ so that they are gradually  smaller (bigger) at the high (small) dimension.
This is  because the dimension, hidden  in both the  hyperquantum numbers $\eta$
and  $L$,  tunes the  coefficients  of the  Laguerre  polynomials  and thus  the
amplitude height of the oscillations.

In the  case of the R\'enyi quantity,  which is a global  spreading measure, the
behavior for  fixed $l$ and $n$  increasing is clearly  increasing, whereas, for
fixed $n$,  it is slowly decreasing  versus $l$; this indicates  that the radial
electron distribution gradually  spreads more and more (less  and less) all over
the space when $n(l)$ is increasing.

\begin{figure}[ht!]
\centerline{\includegraphics[width=\linewidth]{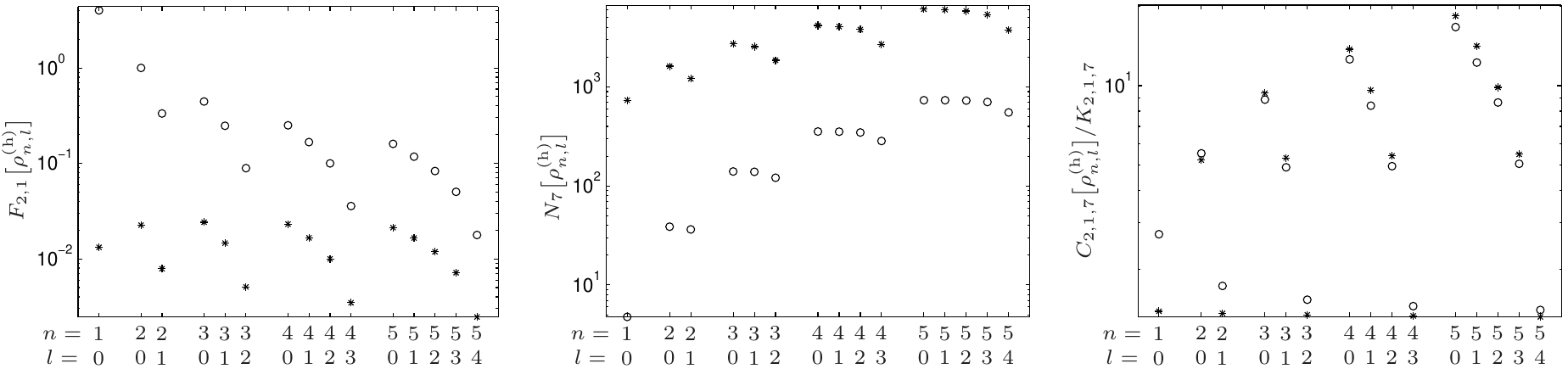}}
\caption{Fisher  information $F_{p,\beta}$ (left  graph), R\'enyi  entropy power
  $N_\lambda$ (center graph), and $(p,\beta,\lambda)$-Fisher--R\'enyi complexity
  $C_{p,\beta,\lambda}$ (right  graph) of the radial  hydrogenic distribution in
  position space with dimensions $\d =  3 (\circ), \: 12 (*)$ versus the quantum
  numbers $n$ and $l$.  The complexity parameters  are $p = 2, \, \beta = 1, \,
  \lambda = 7$.}
\label{fig:HydrogenicFNCradialR}
\end{figure}

Then,  in Figure~\ref{fig:HydrogenicCradialR}, the  parameter dependence  of the
$(p,\beta,\lambda)$-Fisher--R\'enyi   complexity   $C_{p,\beta,\lambda}$   (duly
normalized to  the lower  bound) for the  radial distribution of  various states
$(n,  l)$ of  the  $\d$-dimensional  hydrogenic system  in  position space  with
dimensions $\d = 3$ and $12$,  is investigated for the sets $(p,\beta,\lambda) =
(2,.8,7)$,  \  $(2,1,1)$ (usual  Fisher--Shannon  complexity)  and \  $(5,2,7)$.
Roughly  speaking, the  average behavior  of  the complexity  versus $(n,l)$  is
similar for both  dimensional cases to the  one shown in the right  graph of the
previous  figure. Of  course, for  a  given pair  ($n,l$), the  behavior of  the
complexity in  terms of the  dimension is quantitatively different  according to
the values  of the  parameters. Let us  just point  out, for instance,  that the
comparison of the behavior of $C_{5,2,7}$ versus $\d$ and the corresponding ones
of the other complexities shows that  the complexity with higher value of $p$ is
more sensitive to the radial electron fluctuations with higher amplitudes.

\begin{figure}[ht!]
\centerline{\includegraphics[width=\linewidth]{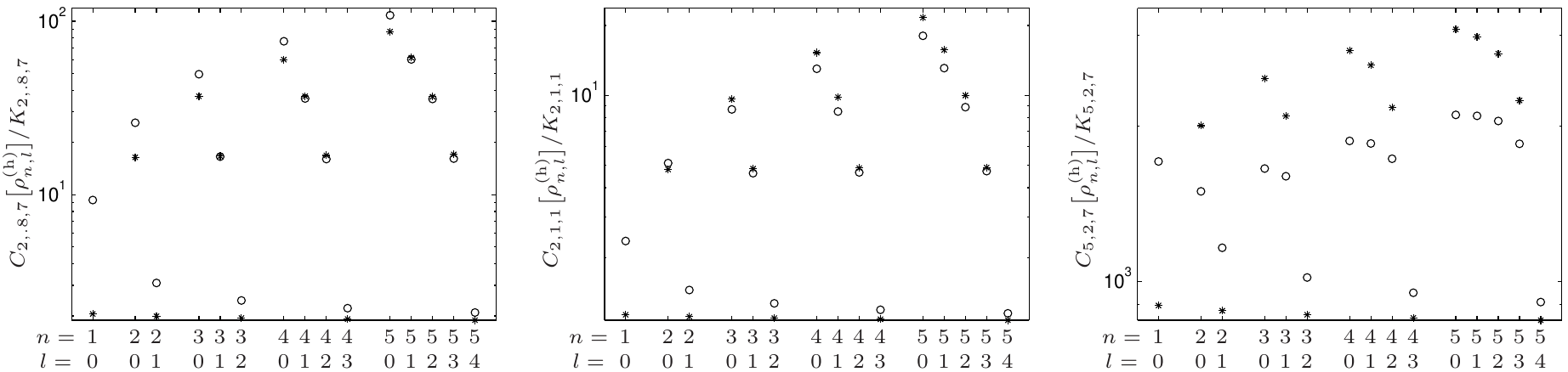}}
\caption{$(p,\beta,\lambda)$-Fisher--R\'enyi complexity (normalized to its lower
  bound), $C_{p,\beta,\lambda}$, with  $ (p,\lambda,\beta) =~(2,0.8,7), (2,1,1),
  (5,2,7)$ for  the radial  hydrogenic distribution in  the position  space with
  dimensions $\d = 3 (\circ)$ and $12 (*) $.}
\label{fig:HydrogenicCradialR}
\end{figure}
 
A  similar study  for  the  previous entropy-  and  complexity-like measures  in
momentum   space  has   been   done  in   Figures~\ref{fig:HydrogenicFNCradialM}
and~\ref{fig:HydrogenicCradialM}. Briefly, we observe that the behavior of these
momentum quantities  are in  accordance with the  analysis of  the corresponding
ones in position space, which has  just been discussed. Note that here again the
difference  $n-l$  determines the  degree  of  the  Gegenbauer polynomials  that
control the momentum density $\gamma_{n,l}^\hyd$, so that the influence of $n,l$
and $\d$ is formally similar to that for the position density $\rho_{n,l}^\hyd$.
Here,  the influence of  $\d$ on  the height  of the  radial oscillation  of the
electron distribution  (through the coefficients of  the Gegenbauer polynomials)
is the same for the two dimensionality cases considered in this work.

Let us  highlight that the $(n,l,\d)$-behavior  of the R\'enyi  power entropy in
momentum  space  is  just  the  opposite  to  the  corresponding  position  one,
manifesting the conjugacy of the two spaces, which is the spread of the position
and momentum electron distributions are opposite.

\begin{figure}[ht!]
\centerline{\includegraphics[width=\linewidth]{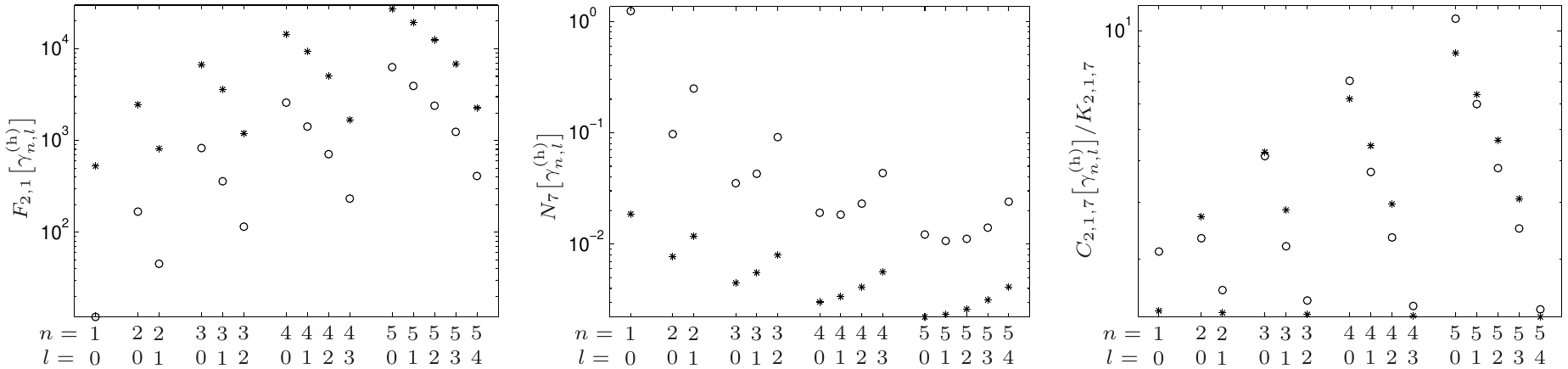}}
\caption{Fisher  information $F_{p,\beta}$ (left  graph), R\'enyi  entropy power
  $N_\lambda$ (center graph), and $(p,\beta,\lambda)$-Fisher--R\'enyi complexity
  $C_{p,\beta,\lambda}$ (right  graph) of the radial  hydrogenic distribution in
  momentum space with dimensions $\d = 3 (\circ), \: 12 (*) $ versus the quantum
  numbers $n$ and $l$.   The complexity parameters are $p = 2,  \, \beta = 1, \,
  \lambda = 7$.}
\label{fig:HydrogenicFNCradialM}
\end{figure}

\begin{figure}[ht!]
\centerline{\includegraphics[width=\linewidth]{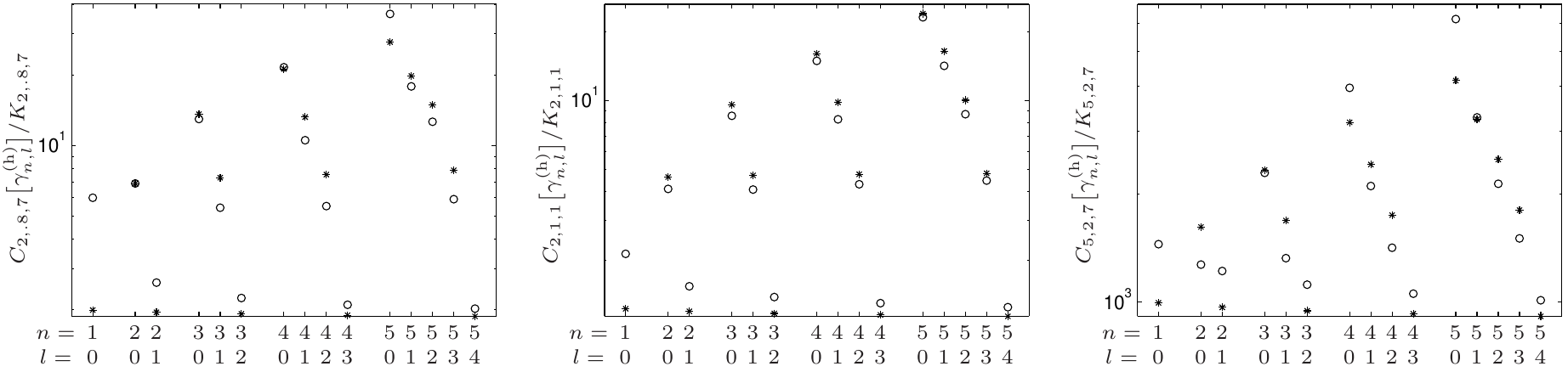}}
\caption{$(p,\beta,\lambda)$-Fisher--R\'enyi complexity (normalized to its lower
  bound),  $C_{p,\beta,\lambda}$,   with  $(p,\lambda,\beta)  =   (2,0.8,7),  \:
  (2,1,1), \:  (5,2,7)$ for the  radial hydrogenic distribution in  the momentum
  space with dimensions $\d = 3 (\circ)$ and $12 (*)$.}
\label{fig:HydrogenicCradialM}
\end{figure}


\subsection{$(p,\beta,\lambda)$-Fisher--R\'enyi  Complexity   and  the  Harmonic
  System}

The bound states of  a $\d$-dimensional harmonic (i.e., oscillator-like) system,
where $V(r) = \frac12 \omega^2 \,  r^2$ (without loss of generality, the mass is
assumed  to   be  unity),  are  known  to   have  the energies
\begin{equation}
E^\osc_n = \omega \left( 2 n + L + \frac{3}{2} \right) \qquad \mbox{with}
\qquad n = 0, 1,\ldots, \quad l = 0, 1,\ldots
 \label{eq:energyO}
\end{equation}
(see    e.g.,~\cite{LouSha60:p1,    LouSha60:p2,    YanVan94}).    The    radial
eigenfunctions writes in terms of the Laguerre polynomials as
\begin{equation}
\Rad_{n,l}^\osc(\tr) = \sqrt{R_{n,l}} \: \omega^{\frac{\d-1}{4}} \, \tr^{\, l} \,
e^{-\frac12 \tr^{\, 2}} \, \Lag_n^{(L+\frac12)} \left( \tr^{\, 2} \right),
\label{eq:R_nl_O}
\end{equation}
where  $\tr$ is  a dimensionless  parameter, and  the  normalization coefficient
$R_{n,l}$ are given by
\begin{equation}
\tr = \sqrt{\omega} \, r \qquad \mbox{and} \qquad R_{n,l} = \frac{2
\sqrt{\omega} \, \Gamma(n+1)}{\Gamma\left( n+L+\frac32 \right)},
\label{eq:tilder_Rnl_Oscillator}
\end{equation}
respectively. Then, the associated radial position density is thus given by
\begin{equation}
\rho_{n,l}^\osc(r) = R_{n,l} \, \tr^{\, 2L+2} \, e^{- \tr^{\, 2}} \, \left[
\Lag_n^{(L+\frac12)} \left( \tr^{\, 2} \right) \right]^2.
\label{eq:RadensityPO}
\end{equation}
As for  the hydrogenic system,  using 8.971 in~\cite{GraRyz07}, one  can compute
$\frac{d\rho_{n,l}^\osc}{dr} = \sqrt{\omega} \frac{d\rho_{n,l}^\osc}{d\tr}$, and
thus the  $(p,\beta,\lambda)$-Fisher--R\'enyi of $\rho_{n,l}^\osc$.  Remarkably,
$\Rad_{n,l}$ is invariant by the modified Hankel transform, so that the momentum
radial density is formally the same as the position radial density.

For  illustration purposes, we  plot in  Figure~\ref{fig:HarmonicFNCradialR} the
behavior of  the Fisher information  $F_{2,1}$, the R\'enyi entropy  power $N_7$
and the $(2,1,7)$-Fisher--R\'enyi complexity  $C_{2,1,7}$ of the radial position
distribution of the  $\d$-dimensional harmonic system for various  values of the
quantum  numbers   $n$  and  $l$   at  the  dimensions   $\d  =  3$   and  $12$.
Figure~\ref{fig:HarmonicCradialR}     depicts     $C_{p,\beta,\lambda}$     duly
renormalized  by its  lower bound,  for  the triplets  of complexity  parameters
$(p,\beta,\lambda) = (2,.8,7), \: (2,1,1)$ and $(5,2,7)$, respectively. In these
graphs, one can make a similar  interpretation as for the hydrogenic case. Note,
however,  that here  the  degree of  the  Laguerre polynomials  involved in  the
distribution $\rho_{n,l}^\osc$ only depends on $n$; this fact makes more regular
the behavior of the  previous information-theoretical measures in the oscillator
case than in the hydrogenic one.  Concomitantly, as $n$ increases, the spreading
of the distribution  also increases. Conversely, parameters $l$  and $\d$ have a
relatively small influence on both the  smoothness of the oscillation and on the
spreading  (compared to  that of  $n$).  Thus,  unsurprisingly, both  the Fisher
information  and  the  R\'enyi  entropy  power  are  weakly  influenced  by  $l$
(especially  at  the  higher   dimension)  and  by  $\d$.   The  Fisher--R\'enyi
complexity,  which quantifies  the combined  oscillatory and  spreading effects,
exhibits a very regular increasing behavior in terms of $n$.

Most interesting is the parameter-dependence  of the complexity.  Indeed, we can
play with the complexity parameter to stress different aspects of the oscillator
density  and  thus to  reveal  differences between  the  quantum  states of  the
system. For  instance, as one can see  in Figure~\ref{fig:HarmonicCradialR}, the
usual Fisher--R\'enyi  complexity is unable  to quantify the  difference between
the states of a  given $n$ versus the orbital number $l$  and the dimension $\d$
(especially  when $n \ge  1$, whereas  the systems  are quite  different).  This
holds even  playing with  $\lambda$ or $\beta$,  while increasing  parameter $p$
(right graph), these  states are distinguishable.  This graph  clearly shows the
potentiality of  the family of  complexities $C_{p,\beta,\lambda}$ to  analyze a
system, especially  thanks to  the full  degree of freedom  we have  between the
complexity parameters $p, \, \beta$ and $\lambda$.

\begin{figure}[ht!]
\centerline{\includegraphics[width=\linewidth]{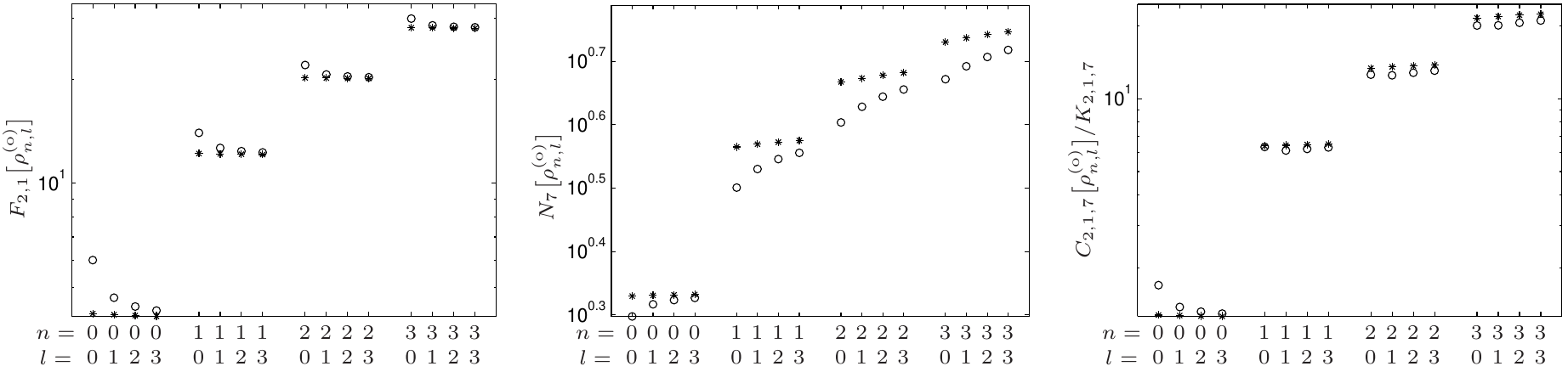}}
\caption{Fisher  information $F_{p,\beta}$ (left  graph), R\'enyi  entropy power
  $N_\lambda$ (center graph), and $(p,\beta,\lambda)$-Fisher--R\'enyi complexity
  $C_{p,\beta,\lambda}$ (right graph) versus $n$ and $l$ for the radial harmonic
  system in  position space with  dimensions $\d  = 3 (\circ),  \ 12 (*)  $. The
  informational parameters are $p = 2, \, \beta = 1, \, \lambda =7$.}
\label{fig:HarmonicFNCradialR}
\end{figure}

\begin{figure}[ht!]
\centerline{\includegraphics[width=\linewidth]{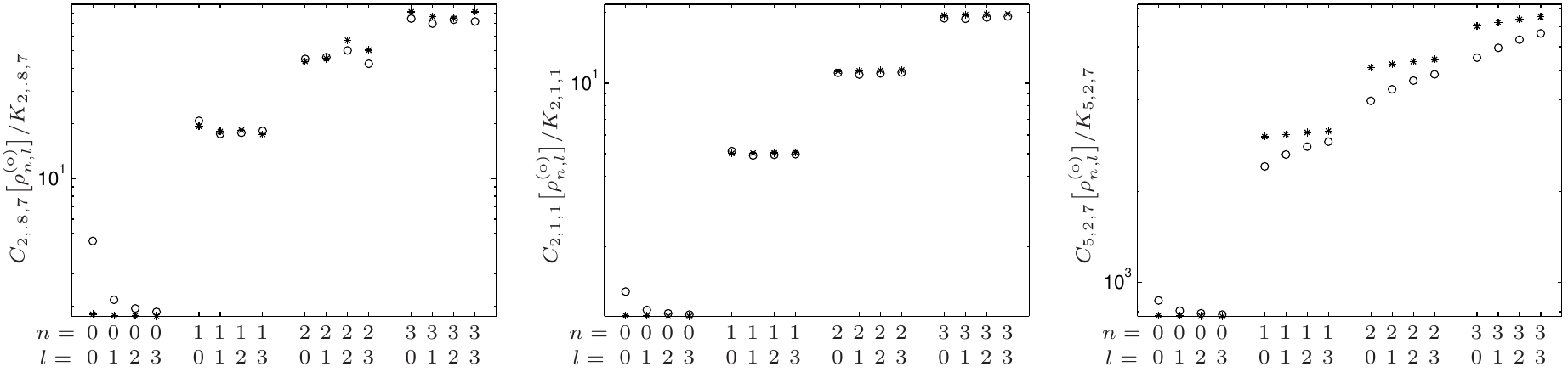}}
\caption{$(p,\beta,\lambda)$-Fisher--R\'enyi complexity (normalized to its lower
  bound)   $C_{p,\beta,\lambda}$    with   $(p,\lambda,\beta)   =   (2,0.8,7),\,
  (2,1,1),\,(5,2,7)$  for  the oscillator  system  in  the  position space  with
  dimensions $\d = 3 (\circ), 12 (*) $.}
\label{fig:HarmonicCradialR}
\end{figure}


\section{Conclusions}
\label{sec:Conclusion}

In  this  paper,  we  have  defined a  three-parametric  complexity  measure  of
Fisher--R\'enyi  type   for  a   univariate  probability  density   $\rho$  that
generalizes all the previously published  quantifiers of the combined balance of
the spreading and oscillatory facets of  $\rho$. We have shown that this measure
satisfies the three fundamental  properties of a statistical complexity, namely,
the invariance  under translation and scaling transformations  and the universal
bounding from  below.  Moreover,  the minimizing distributions  are found  to be
closely  related to  the  stretched  Gaussian distributions.   We  have used  an
approach    based    on   the    Gagliardo--Nirenberg    inequality   and    the
differential-escort  transformation  of $\rho$.  In  fact,  this inequality  was
previously used by  Bercher and Lutwak et al.  to  find a biparametric extension
of the celebrated  Stam inequality which lowerbounds the  product of the R\'enyi
entropy power  and the Fisher  information.  We have extended  this biparametric
Stam   inequality   to  a   three-parametric   one   by   using  the   idea   of
differential-escort deformation of a probability density.

Then, we have numerically analyzed  the previous entropy-like quantities and the
three-parametric complexity  measure for various specific quantum  states of the
two main prototypes of multidimensional  electronic systems subject to a central
potential of  Coulomb (the $\d$-dimensional  hydrogenic atom) and  harmonic (the
$\d$-dimensional  isotropic harmonic  oscillator) character.   Briefly,  we have
found that the  proposed complexity allows to capture  and quantify the delicate
balance  of the  gradient  and the  spreading  contents of  the radial  electron
distribution of  ground and excited states  of the system. The  variation of the
three parameters  of the  proposed complexity allows  one to  stress differently
this balance in the various radial regions of the charge distribution.

The results  found in this work can  be generalized in various  ways that remain
open.  Indeed, the  Gagliardo--Nirenberg  relation is  quite  powerful since  it
involves the $p$-norm of the function $u$, the $q$-norm of its $j$-th derivative
and the $s$-norm of its $m$-th derivative,  where $p, q, s$ and the integers $j,
m$  are  linked  by   inequalities  (see~\cite{Nir59}).  This  leaves  open  the
possibility to  define still more extended (complete)  complexity measures, with
higher-order  (in  terms of  derivative)  measures  of  information.  Even  more
interesting, this inequality-based relation holds  for any dimension $\d \ge 1$;
thus,  it  supports  the  possibility   to  extend  our  univariate  results  to
multidimensional distributions, but with tighter restrictions on the parameters.
The main difficulty in this  case is related with the multidimensional extension
of the validity  domain by using the differential-escort  technique or a similar
one.


\vspace{6pt}

\textbf{Acknowledgments } {The authors  are very grateful  to the CNRS (Steeve  Zozor) and
  the Junta de Andaluc\'ia and the MINECO--FEDER under the grants FIS2014--54497
  and FIS2014--59311P  (J.S.D.) for  partial financial support.   Moreover, they
  are grateful for the warm hospitality  during their stays at GIPSA--Lab of the
  University   of  Grenoble--Alpes  (D.P.C.)    and  Departamento   de  F\'isica
  At\'omica, Molecular y Nuclear of the University of Granada (S.Z.)  where this
  work was partially carried out.}

\textbf{authorcontributions}{The authors contributed equally to this work.}



\appendix


\section{Proof of Proposition~\ref{prop:StamDp}}
\label{app:StamDp}


\subsection{The Case $\lambda \ne 1$}

The result of the proposition is a direct consequence of the Gagliardo--Nirenberg
\linebreak inequality~\cite{Nir59, Agu08, Ber12:06_1},  stated in our context as
follows: let $p  > 1, \, s > q \ge  1$ and $\theta = \frac{p (s-q)}{s  (p + pq -
  q)}$; then, there exists an  optimal strictly positive constant $K$, depending
only on $p, q$ and $s$ such that for any function $u: \Rset \mapsto \Rset_+$,
\begin{equation}
K \left\| \frac{d}{dx} u \right\|_p^\theta \left\| u \right\|_q^{1-\theta} \,
\ge \, \left\| u \right\|_s,
\label{eq:GagliardoNirenberg}
\end{equation}
provided that the involved quantities exist, the equality being achieved for $u$
solution of the differential equation
\begin{equation}
- \frac{d}{dx} \left( \left| \frac{d}{dx} u \right|^{p-2} \frac{d}{dx} u
\right) + u^{q-1} = \gamma u^{s-1},
\end{equation}
where $\gamma > 0$ is such that $\|u\|_s$ is fixed and can be chosen arbitrarily
(it  corresponds  to  a  Lagrange   multiplier,  see  Equations  (26)  and  (27)
in~\cite{Agu08}).   Finding $u$ thus  allows to  determine the  optimal constant
$K$.  Note that, if the equality in~\ref{eq:GagliardoNirenberg} is reached for
$u_\gamma$,  then   it  is  also  reached  for   $\overline{u}_\gamma  =  \delta
u_\gamma(x)$  for any  $\delta >  0$.   One can  see that  $\overline{u}_\gamma$
satisfies the differential equation  $ - \frac{d}{dx} \left( \left| \frac{d}{dx}
    \overline{u} \right|^{p-2} \frac{d}{dx}  \overline{u} \right) + \delta^{p-q}
\, \overline{u}^{q-1}  - \gamma \delta^{p-s} \, \overline{u}^{s-1}  = 0$.  Thus,
function $u$ reaching the equality in Equation~\ref{eq:GagliardoNirenberg} can
also  be chosen as  the solution  of the  differential equation  $- \frac{d}{dx}
\left(  \left| \frac{d}{dx}  u  \right|^{p-2} \frac{d}{dx}  u  \right) +  \kappa
u^{q-1} - \zeta u^{s-1}  = 0$, where $\kappa > 0$ and $\zeta  > 0$ can be chosen
arbitrarily. As we will see later on, a judicious choice allowing to include the
limit case $s \to q$ is to  take $\kappa = \zeta = \frac{\gamma}{s-q}$, i.e., to
chose       function       $u$        reaching       the       equality       in
Equation~\ref{eq:GagliardoNirenberg}  as  the  solution  of  the  differential
equation
\begin{equation}
- \frac{d}{dx} \left( \left| \frac{d}{dx} u \right|^{p-2} \frac{d}{dx} u \right)
+ \gamma \: \frac{u^{q-1} - u^{s-1}}{s-q} = 0,
\label{eq:EDOGN}
\end{equation}
where $\gamma > 0$ can be arbitrarily chosen.


\subsubsection{The Sub-Case $\lambda < 1$}

Following the very same steps than in~\cite{Ber12:06_1}, let us consider first
$$\lambda = \frac{q}{s} < 1.$$  With  $u^s$ integrable, one can normalize it, that
is, writing it under the form $u = \rho^{\frac{1}{s}} = \rho^{\frac{\lambda}{q}}$
with $\rho$  a probability density  function.  Thus, $\|u\|_s  = 1$ and  from the
Gagliardo--Nirenberg     inequality,     $$\left\|     \rho^{\frac{\lambda}{q}-1}
  \frac{d}{dx}   \rho  \right\|_p^\theta  \,   \left\|  \rho^{\frac{\lambda}{q}}
\right\|_q^{1-\theta} \, \ge \, s^\theta K^{-1}.$$ Simple algebra allows to write
the terms  of the left-hand side in terms of the  generalized Fisher information
and of the R\'enyi entropy power, respectively, to conclude that
\begin{equation}
\Big( F_{p,\frac{\lambda}{q} - \frac{1}{p} + 1}[\rho] \Big)^{\frac{\theta}{p} \,
\frac{p \left( \frac{\lambda}{q} - \frac{1}{p} + 1 \right)}{2}} \Big(
N_\lambda[\rho] \Big)^{\frac{1-\theta}{q} \, \frac{1-\lambda}{2}} \ge s^\theta K^{-1}.
\label{eq:ProvisoryStamInf}
\end{equation}
Using   $1-\frac{1}{p}  =  \frac{1}{p^*}$,   let  us   then  denote   $$\beta  =
\frac{\lambda}{q} + \frac{1}{p^*} = \frac{1}{s} + \frac{1}{p^*},$$ and note that
the  conditions  imposed  on  $p,  q$  and  $s$  together  with  $\lambda  >  0$
impose  $$\beta  \in  \left(  \frac{1}{p^*}  \, ;  \,  \frac{1}{p^*}  +  \lambda
\right],$$ once $p$ and $\lambda$ are given.  Simple algebra allows thus to show
that  $\frac{\theta}{p}  \,  \frac{p  \left( \frac{\lambda}{q}  +  \frac{1}{p^*}
  \right)}{2}  =   \frac{1-\theta}{q}  \,  \frac{1-\lambda}{2}   =  \frac{\theta
  \beta}{2} >  0$: the  exponent of  the Fisher information  and of  the entropy
power in  Equation~\ref{eq:ProvisoryStamInf} are thus  equal.  Moreover, $\theta$
being strictly  positive, both sides of Equation~\ref{eq:ProvisoryStamInf} can be
elevated  to   exponent  $\frac{2}{\theta}$  leading   to  the  result   of  the
proposition, where the bound is given by
\begin{equation}
K_{p,\beta,\lambda} = s^2 K^{-\frac{2}{\theta}},
\label{eq:K_lambda_inf_1}
\end{equation}
where $s$ and  $\theta$ can be expressed by their  parametrization in $p, \beta,
\lambda$.  Finally, the differential equation~\ref{eq:EDOMin} satisfied by the
minimizer $u$ comes from  Equation~\ref{eq:EDOGN} noting that $s = \frac{p^*}{\beta
  p^* - 1}$ and $q =  \frac{\lambda p^*}{\beta p^* - 1}$, remembering that $\rho
= u^s$ and thus that $\gamma$ is to be chosen such that $u^s$ sums to unity.


\subsubsection{The Sub-Case $\lambda > 1$}

Consider  now  $$\lambda =  \frac{s}{q}  > 1$$  and  $u  = \rho^{\frac{1}{q}}  =
\rho^{\frac{\lambda}{s}}$, leading to
\begin{equation}
\Big( F_{p,\frac{\lambda}{s} - \frac{1}{p} + 1}[\rho] \Big)^{\frac{\theta}{p} \,
\frac{p \left( \frac{\lambda}{s} - \frac{1}{p} + 1 \right)}{2}} \Big(
N_\lambda[\rho] \Big)^{- \frac{1}{s} \, \frac{1-\lambda}{2}} \ge q^\theta K^{-1}.
\label{eq:ProvisoryStamSup}
\end{equation}
Denoting  now  $$\beta  =  \frac{\lambda}{s}  + \frac{1}{p^*}  =  \frac{1}{q}  +
\frac{1}{p^*},$$ imposing $$\beta \in \left( \frac{1}{p^*} \, ; \, \frac{1}{p^*}
  + 1 \right]$$  once $p$ and $\lambda$ are given.   Simple algebras allows thus
to show that $\frac{\theta}{p} \, \frac{p \left( \frac{\lambda}{s} - \frac{1}{p}
    +  1  \right)}{2} =  -  \frac{1}{s}  \,  \frac{1-\lambda}{2} =  \frac{\theta
  \beta}{2} >  0 $:  again, the exponent  of the  Fisher information and  of the
entropy  power  in  Equation~\ref{eq:ProvisoryStamSup}  are equal.   Here  again,
$\theta > 0$ allowing  to elevate both side of Equation~\ref{eq:ProvisoryStamSup}
to exponent $\frac{2}{\theta}$. The bound is now given by
\begin{equation}
K_{p,\beta,\lambda} = q^2 K^{-\frac{2}{\theta}}
\label{eq:K_lambda_sup_1}
\end{equation}
where $q$ and  $\theta$ can be expressed by their  parametrization in $p, \beta,
\lambda$.    Finally,    as   for   the   previous    case,   the   differential
equation~\ref{eq:EDOMin}   satisfied   by  the   minimizer   $u$  comes   from
Equation~\ref{eq:EDOGN}  noting that  now $q  = \frac{p^*}{\beta  p^*-1}$ and  $s =
\frac{\lambda p^*}{\beta  p^*-1}$, remembering  that now $\rho  = u^q$  and thus
that $\gamma$ is to be chosen such that $u^q$ sums to unity.


\subsection{The Case $\lambda = 1$}

The minimizer for  $\lambda = 1$ can  be viewed as the limiting  case $\lambda \to
1$,  i.e., $s  \to  q$.   

One can  also process as  done by Agueh  in~\cite{Agu08} to determine  the sharp
bound of  the Gagliardo--Nirenberg inequality. To  this end, let  us consider the
minimization problem
\begin{equation}
\inf \left\{ \frac{1}{p} \int_{\Rset} \Big| \frac{d}{dx} u(x) \Big|^p dx -
\frac{1}{q} \int_{\Rset} [u(x)]^q \log u(x) \, dx : \quad
u \ge 0, \quad \int_{\Rset} [u(x)]^q \, dx = 1\right\}
\label{eq:inf_problem}
\end{equation}
for $p  > 1$  and $q  \ge 1$ (see Chapters~5  and 6 in~\cite[]{GelFom63}  justifying the
existence of a  minimum). Hence, there exists an optimal  constant $K$ such that
for any function $u$ such that $u^q$ sums to unity,
\begin{equation}
\frac{1}{p} \int_{\Rset} \left| \frac{d}{dx} u(x) \right|^p \, dx - \frac{1}{q}
\int_{\Rset} [u(x)]^q \log u(x) \, dx \ge K.
\label{eq:inequality_inf_problem}
\end{equation}
Now, fix a  function $u$ and consider $v(x)  = \gamma^{\frac{1}{q}} u(\gamma x)$
for some  $\gamma >  0$. $v^q$ also  sums to unity  and thus  can be put  in the
previous inequality, leading to
\begin{equation}
f_u(\gamma) \equiv \frac{\gamma^{\frac{p}{q}+p-1}}{p} \int_{\Rset} \left|
\frac{d}{dx} u(x) \right|^p \, dx - \frac{1}{q} \int_{\Rset} [u(x)]^q \log u(x)
\, dx - \frac{1}{q^2} \log \gamma \ge K
\label{eq:min_f_gamma}
\end{equation}
for any  $\gamma > 0$.  Thus,  this inequality is necessarily  satisfied for the
$\gamma$  that  minimizes $f_u(\gamma)$.   A  rapid  study  of $f_u$  allows  to
conclude that it is minimum for
\begin{equation}
\gamma = \left( \frac{p}{\displaystyle q ( p + q (p-1) ) \int_{\Rset} \Big|
\frac{d}{dx} u(x) \Big|^p dx} \right)^{\frac{q}{p+q(p-1)}}.
\label{eq:gamma_opt}
\end{equation}
Now, injecting Equation~\ref{eq:gamma_opt} in Equation~\ref{eq:min_f_gamma} gives
\begin{equation}
\frac{1}{p+q(p-1)} \log \int_{\Rset} \Big| \frac{d}{dx} u(x) \Big|^p dx -
\int_{\Rset} [u(x)]^q \log u(x) \, dx \ge \widetilde{K},
\label{eq:modified_log_Sobolev}
\end{equation}
with $\widetilde{K} = q K + \frac{1}{p + q (p-1)} \left( \log\left( \frac{p}{q (
      p + q (p-1) )} \right) - 1 \right)$. Consider now $u_{\min}$ the minimizer
of problem~\ref{eq:inf_problem}.  Obviously, $f_{u_{\min}}(\gamma)$ is minimum
for  $\gamma = 1$,  that gives,  from Equation~\ref{eq:gamma_opt},  \ $\int_{\Rset}
\Big| \frac{d}{dx}  u_{\min}(x) \Big|^p dx = \frac{p}{q  (p + q (p-1)  )}$ \ and
from Equation~\ref{eq:inequality_inf_problem}, being an equality, \ $\int_{\Rset}
[u_{\min}(x)|^q \log  u_{\min}(x) \,  dx =  \frac{1}{p + q  (p-1) }  - q  K$.  \
Injecting these  expressions in Equation~\ref{eq:modified_log_Sobolev}  allows
concluding  that  this  inequality  is  sharp, and  moreover  that  its  minimizer
coincides with that of the minimization problem~\ref{eq:inf_problem}.

Inequality~\ref{eq:StamDp} is  obtained by injecting  $u = \rho^{\frac{1}{q}}$
in  Equation~\ref{eq:modified_log_Sobolev} and  after some  trivial  algebra and
denoting $\beta = \frac{1}{q} + \frac{1}{p^*} \in \left( \frac{1}{p^*} \, ; \, 1
  + \frac{1}{p^*}  \right]$, confirming that  it can be  viewed as a  limit case
$\lambda \to 1$.

Let us now solve  the minimization problem~\ref{eq:inf_problem}, that is, from
the Lagrangian technique~\cite{Bru04}, to minimize $\int_{\Rset} F(x,u,u') dx,$ \
where \ $F(x,u,u')  = \frac{1}{p} \Big| \frac{d}{dx} u(x)  \Big|^p - \frac{1}{q}
[u(x)]^q \log  u(x) \, -  \gamma [u(x)]^q$ and  where $u' = \frac{d}{dx}  u$ and
$\gamma$ is the Lagrange multiplier. The solution of this variational problem is
given by  the Euler--Lagrange equation~\cite{Bru04},  $\frac{\partial F}{\partial
  u} -  \frac{d}{dx} \left(  \frac{\partial F}{\partial u'}  \right) =  0$, that
writes here after a re-parametrization $\delta = \frac{1}{q} + q \gamma$
\begin{equation}
  - \frac{d}{dx} \left( \left| \frac{d}{dx} u \right|^{p-2} \frac{d}{dx} u
\right) - u^{q - 1} \left( \log u + \delta \right) = 0.
\end{equation}
$\delta$  is to  be determined  a  posteriori so  as to  satisfy the  constraint
$\int_\Rset [u(x)]^q dx  = 1$.  Again, one  can easily see that if  the bound in
Equation~\ref{eq:modified_log_Sobolev}  is achieved  for $u_{\min}$,  then  it is
also  achieved  for   $\overline{u}_\delta(x)  =  \sigma  u_{\min}(\sigma^q  x)$
whatever    $\sigma    >    0$.     Reporting   $u_{\min}(x)    =    \sigma^{-1}
\overline{u}_\sigma(\sigma^{-q} x)$  in the differential equation  allows to see
that  $\overline{u}_\sigma$  is a solution of  the  differential  equation  $  -
\frac{d}{dx} \left( \left|  \frac{d}{dx} \overline{u} \right|^{p-2} \frac{d}{dx}
  \overline{u}  \right) -  \sigma^{p+q(p-1)}  \overline{u}^{q -  1} \left(  \log
  \overline{u}  -  \log  \sigma +  \delta  \right)  =  0$.  Choosing  $\sigma  =
\exp(\delta)$ and  rewriting $\sigma^{p+q(p-1)} =  \gamma$, one can  thus choose
the minimizer $u$ as the solution of the differential equation
\begin{equation}
  - \frac{d}{dx} \left( \left| \frac{d}{dx} u \right|^{p-2} \frac{d}{dx} u
\right) - \gamma \, u^{q - 1} \log u = 0,
\label{eq:EDOLogSoboloev}
\end{equation}
where $\gamma$ is to be determined  a posteriori so as to satisfy the constraint
$\int_\Rset [u(x)]^q  dx = 1$. This result is precisely the limit case of the
differential equation~\ref{eq:EDOGN} when $s \to q$.


\section{Proof of Proposition~\ref{prop:Symmetries}}
\label{app:Symmetries}

For      $\lambda      =      1,$     Relations~\ref{eq:RhoInvolution}
and~\ref{eq:KInvolution} induced by Transform~\ref{eq:Involution} of the
indexes are obvious since $\opT_p(\beta,1) = (\beta,1)$.

Then, for $\lambda \ne  1$, Relation~\ref{eq:RhoInvolution} comes from the
fact that the function $u$  solution of Equation~\ref{eq:EDOGN} depends only on $p,
q$ and $s$.   Let us write $(\beta,\lambda)$ and  $\vartheta$ the parameters for
the first situation of the above proof, i.e., $\lambda = \frac{q}{s}$ and $\beta
=    \frac{1}{s}+\frac{1}{p^*}     =    \frac{\lambda}{q}+\frac{1}{p^*}$,    and
$(\overline{\beta},\overline{\lambda})$     and    $\overline{\vartheta}$    the
parameters for  the second  situation, i.e., $\overline{\lambda}  = \frac{s}{q}$
and        $\overline{\beta}         =        \frac{1}{q}+\frac{1}{p^*}        =
\frac{\overline{\lambda}}{s}+\frac{1}{p^*}$.  It is  straightforward to see that
$\overline{\lambda}    =     \frac{1}{\lambda}$    and    $\overline{\beta}    =
\frac{\beta}{\lambda} - \frac{1}{\lambda p^*}  + \frac{1}{p^*} = \frac{\beta p^*
  + \lambda - 1}{\lambda p^*}$, i.e., $(p,\overline{\beta},\overline{\lambda}) =
(p,\opT_p(\beta,\lambda))$,   and,   conversely,   that   $(p,\beta,\lambda)   =
(p,\opT_p(\overline{\beta},\overline{\lambda}))$.   Since  the  optimal  $u$  is
fixed once  $p, q$  and $s$ are  given, one has  $u_{p,\opT_p(\beta,\lambda)} =
u_{p,\beta,\lambda}$.    Finally,   simple    algebra   allows   to   show   that
$\overline{\vartheta}    =   \lambda    \,   \vartheta$    and    $\vartheta   =
\overline{\lambda} \, \overline{\vartheta}$, which finishes the proof.

Now,      Relation~\ref{eq:KInvolution}       immediately      comes      from
Equations~\ref{eq:K_lambda_inf_1}   and~\ref{eq:K_lambda_sup_1}   together   with
$\lambda = \frac{q}{s}$.


\section{Proof of Proposition~\ref{prop:StamDpTilda}}
\label{app:StamDpTilda}


\subsection{The  $(p,\beta,\lambda)$-Fisher--R\'enyi Complexity  is Lowerbounded
  over $\widetilde{\Dset}_p$}\label{app:StamDpTildaExistence}

As   detailed   in   the   text,   consider   a   point   $(\beta,\lambda)   \in
\widetilde{\Dset}_p$.   Thus, there  exists  an  index $\alpha  >  0$ such  that
$\opA_\alpha(\beta,\lambda)  \in  \Lset_p  \cup  \overline{\Lset}_p$.   Applying
Propositions~\ref{prop:StamDp}                                                 and ~\ref{prop:ComplexityStretchedEscort}, we have
\begin{eqnarray*}
C_{p,\beta,\lambda}[\rho] & = & \alpha^2 \, C_{p,\opA_\alpha(\beta,\lambda)}\Big[
\opE_\alpha\big[ \opE_{\alpha^{-1}}[\rho] \big] \Big]\\[4mm]
& \ge & \alpha^2 \, K_{p,\opA_\alpha(\beta,\lambda)} \: \equiv \:
K_{p,\beta,\lambda}.
\end{eqnarray*}

Finally,     denoting    $(\widetilde{\beta}     ,     \widetilde{\lambda})    =
\opA_\alpha(\beta,\lambda)$,  the  minimizers  satisfy  $\opE_{\alpha^{-1}}\big[
\rho_{p,\beta,\lambda}      \big]     =      g_{p,\widetilde{\lambda}}$     (see
Section~\ref{sec:Bercher}),   or   $\opE_{\alpha^{-1}}\big[  \rho_{p,\beta,\lambda}
\big] = g_{p,2-\widetilde{\lambda}}$ (see Section~\ref{sec:BercherSymmetric}), that
is,
\begin{eqnarray*}
\rho_{p,\beta,\lambda} = \left\{\begin{array}{lll}
\opE_{\alpha}\big[ g_{p,\widetilde{\lambda}} \big], & \mbox{if} &
\opA_\alpha(\beta,\lambda) \in \Lset_p,\\[4mm]
\opE_{\alpha}\big[ g_{p,2-\widetilde{\lambda}} \big], & \mbox{if} &
\opA_\alpha(\beta,\lambda) \in \overline{\Lset}_p.
\end{array}\right.
\end{eqnarray*}


\subsection{Explicit Expression for the Minimizers.}

In   the   sequel,   we   determine   the   differential-escort   transformation
$\opE_\alpha[g_{p,\lambda}]$   with   $\lambda  <   1$.    Let   us  denote   by
$\displaystyle  Z_{p,\lambda} =  \int_{\Rset} \left(  1 +  (1-\lambda) |x|^{p^*}
\right)^{\frac{1}{\lambda-1}}   dx  =   \frac{2  \,   B\left(   \frac{1}{p^*}  ,
    \frac{1}{1-\lambda}-\frac{1}{p^*} \right)}{p^* (1-\lambda)^{\frac{1}{p^*}}}$
the    normalization   coefficient    of   the    distribution   $g_{p,\lambda}$~\cite{Ber12:06_1,GraRyz07}).   Hence,  as  defined  in
Definition~(\ref{def:differential-escort_def}), $\opE_\alpha[g_{p,\lambda}](y) = \Big[
g_{p,\lambda}(x(y)) \Big]^\alpha$ with
\begin{eqnarray*}
\frac{dy}{dx} & = & \Big[ g_{p,\lambda}(x) \Big]^{1-\alpha}\\
& = & Z_{p,\lambda}^{\alpha-1} \left( 1 + (1-\lambda) \, |x|^{p^*}
\right)^{\frac{1-\alpha}{\lambda-1}}.
\end{eqnarray*}
Thus, $y(x)$ writes
\begin{eqnarray*}
y(x) & = & Z_{p,\lambda}^{\alpha-1} \, \textrm{sign}(x) \int_0^{|x|} \left( 1 +
(1-\lambda) \, t^{p^*} \right)^{\frac{1-\alpha}{\lambda-1}} dt\\
& = & \kappa_{p,\lambda,\alpha} \, \textrm{sign}(x) \int_0^{\frac{(1-\lambda)
|x|^{p^*}}{1 + (1-\lambda) |x|^{p^*}}} \tau^{\frac{1}{p^*}-1} \left( 1 - \tau
\right)^{\frac{\alpha-1}{\lambda-1}-\frac{1}{p^*} - 1} d\tau
\end{eqnarray*}
when  making the  change of  variables  $\tau =  \frac{(1-\lambda) t^{p^*}}{1  +
  (1-\lambda)  t^{p^*}}$   and  denoting  $\kappa_{p,\lambda,\alpha}   =  \frac{
  Z_{p,\lambda}^{\alpha-1}}{p^* (1-\lambda)^{\frac{1}{p *}}}$.
One can  recognize in the  integral the incomplete beta  function $\displaystyle
\Beta(a,b,x) =  \int_0^x t^{a-1} (1-t)^{b-1}  dt$ defined when $\Real\{a\}  > 0$
and  for  $x  \in  [0  \,  ;  \,  1)$~\cite{OlvLoz10}.   Here,  $a  =
\frac{1}{p^*} > 0$, $b  = \frac{\alpha-1}{\lambda-1} - \frac{1}{p^*}$ and noting
that $\frac{(1-\lambda)  |x|^{p^*}}{1 +  (1-\lambda) |x|^{p^*}} \in  [0 \,  ; \,
1)$. Hence,
\begin{equation}
y(x) = \kappa_{p,\lambda,\alpha} \, \textrm{sign}(x) \, \Beta\left( \frac{1}{p^*}
\, , \, \frac{\alpha-1}{\lambda-1} - \frac{1}{p^*} \, ; \, \frac{(1-\lambda)
|x|^{p^*}}{1+ (1-\lambda) |x|^{p^*}} \right).
\end{equation}
Note that \  ${\displaystyle \frac{y}{\kappa_{p,\lambda,\alpha}}}: \Rset \mapsto
\left(  -  B\left( \frac{1}{p^*}  ,  \frac{\alpha-1}{\lambda-1} -  \frac{1}{p^*}
  \right)  \,   ;  \,  B\left(  \frac{1}{p^*}   ,  \frac{\alpha-1}{\lambda-1}  -
    \frac{1}{p^*} \right) \right)$, where $B(a,b) = \lim_{x \to 1} \Beta(a,b,1)$
is the  beta function~\cite{OlvLoz10,  GraRyz07, AbrSte70}; $B(a,b)$ is thus
infinite when $b \le 0$.

Denoting $\Beta^{-1}$ the inverse of incomplete beta function, we obtain
\begin{equation}
1 + (1-\lambda) |x(y)|^{p^*} = \frac{1}{1-\Beta^{-1}\left( \frac{1}{p^*} \, , \,
\frac{\alpha-1}{\lambda-1} - \frac{1}{p^*} \, ; \,
\frac{|y|}{\kappa_{p,\lambda,\alpha}} \right)}
\end{equation}
and, thus,
\begin{equation}
\opE_\alpha\left[ g_{p,\lambda}\right](y) \propto \left[ 1-\Beta^{-1} \left(
\frac{1}{p^*} \, , \, \frac{\alpha-1}{\lambda-1} - \frac{1}{p^*} \, ; \,
\frac{|y|}{\kappa_{p,\lambda,\alpha}} \right) \right]^{\frac{\alpha}{1-\lambda}}
\un_{\left[ 0 \, ; \, B\left( \frac{1}{p^*} \, , \, \frac{\alpha-1}{\lambda-1} -
\frac{1}{p^*} \right) \right)} \left( \frac{|y|}{\kappa_{p,\lambda,\alpha}}
\right)
\label{eq:DifferentialScortDeformedGaussian}
\end{equation}
Note   that   from   $\Beta(a,-a,x)   =   a^{-1}   \left(\frac{x}{1-x}\right)^a$~\cite{GraRyz07,AbrSte70}),
we naturally recover that $\opE_1\left[ g_{p,\lambda} \right] = g_{p,\lambda}$.

Finally, let us remark that
\begin{equation}
\widetilde{\Dset}_p = \big\{ (\beta,\lambda) \in \Rset_+^{* \, 2}: \: 1-p^*\beta
< \lambda < 1 \big\} \: \cup \: \big\{ (\beta,\lambda) \in \Rset_+^{* \, 2}: \:
\lambda > 1 \big\} \: \cup \: \big\{ (\beta,1), \beta \in \Rset_+^* \big\},
\end{equation}
the first  ensemble being a subset  of $\opA[\Lset_p]$ and the  second one a
subset of $\opA[\overline{\Lset}_p]$. We treat now these three cases separately.


\subsubsection{The Case $1 - p^* \beta < \lambda < 1$}

Following  Appendix~\ref{app:StamDpTildaExistence},  let  us   first  determine
$\alpha$ such  that $\opA_\alpha(\beta,\lambda)  \in \Lset_p$, which is $\alpha$
such that $\alpha \beta = 1 + \alpha (\lambda-1)$. Hence,
\begin{equation}
\alpha = \frac{1}{\beta+1-\lambda} \qquad \mbox{and} \qquad
\opA_\alpha(\beta,\lambda) = \left( \frac{\beta}{\beta+1-\lambda} ,
\frac{\beta}{\beta+1-\lambda} \right).
\label{eq:AlphaLp}
\end{equation}
The fact  that $\beta > 0$ and  $\lambda < 1$ insures  that $\beta+1-\lambda \ne
0$.

From Sections~\ref{sec:Bercher} and~\ref{app:StamDpTildaExistence}, the minimizer of
the complexity is thus given by
\begin{equation}
\rho_{p,\beta,\lambda} = \opE_{\frac{1}{\beta+1-\lambda}} \left[
g_{p,\frac{\beta}{\beta+1-\lambda}} \right].
\label{eq:rho_Descort_lambdainf}
\end{equation}
One   can   easily    see   that   $\frac{\beta}{\beta+1-\lambda}   \in   \left(
  \frac{1}{1+p^*}  \,  ;   \,  1\right)$,  and  thus  we   immediately  get  from
Equation~\ref{eq:DifferentialScortDeformedGaussian},
\begin{equation}
\rho_{p,\beta,\lambda}(x) \propto \left[ 1-\Beta^{-1}\left(
\frac{1}{p^*} \, , \, \frac{\beta-\lambda}{1-\lambda} - \frac{1}{p^*} \, ; \,
\frac{|y|}{\kappa_{p,\alpha \beta,\alpha}} \right) \right]^{\frac{1}{1-\lambda}}
\un_{\left[ 0 \, ; \, B\left( \frac{1}{p^*} \, , \,
\frac{\beta-\lambda}{1-\lambda} - \frac{1}{p^*} \right) \right)}\left(
\frac{|y|}{\kappa_{p,\alpha \beta,\alpha}} \right).
\end{equation}
Noting  that   $\frac{\beta-\lambda}{1-\lambda}  =  \frac{\beta-1}{1-\lambda}  +
\frac{1}{p}$,  it   appears  that  this   density  is  nothing  more   than  the
$(p,\beta,\lambda)$-Gaussian                                                   of
Definition~\ref{def:triparametricGaussian} (remember  that  the families
  of density are defined up to a shift and a scaling).


\subsubsection{The Case $\lambda > 1$}

Following again Appendix~\ref{app:StamDpTildaExistence},  let us first determine
$\alpha$  such that  $\opA_\alpha(\beta,\lambda) \in  \overline{\Lset}_p$, i.e.,
such that $\alpha \beta = \frac{p^* +  1 - [ 1 + \alpha (\lambda-1)]}{p^*}$.  We
thus obtain
\begin{equation}
\alpha = \frac{p^*}{p^*\beta + \lambda - 1} \qquad \mbox{and} \qquad
\opA_\alpha(\beta,\lambda) = \left( \frac{p^* \beta}{p^*\beta + \lambda - 1} \,
, \, 1 + \frac{p^*(\lambda-1)}{p^*\beta + \lambda - 1} \right).
\label{eq:AlphaLpbar}
\end{equation}
The fact that $\beta > 0$ and $\lambda > 1$ insures that $p^*\beta + \lambda - 1
\ne 0$.

From   Section~\ref{sec:Bercher}    and   Appendix~\ref{app:StamDpTildaExistence}, the
minimizers for the complexity expresses
\begin{equation}
\rho_{p,\beta,\lambda} = \opE_{\frac{p^*}{p^*\beta+\lambda-1}} \left[
g_{p,1-\frac{p^*(\lambda-1)}{p^*\beta+\lambda-1}} \right].
\end{equation}
One  can  easily has  that  $1  - \frac{p^*(\lambda-1)}{p^*\beta+\lambda-1}  \in
\left(   1-p^*  \,   ;  \,   1\right)$  and   thus  we   immediately   get  from
Equation~\ref{eq:DifferentialScortDeformedGaussian}
\begin{equation}
\rho_{p,\beta,\lambda}(y) \propto \left[ 1-\Beta^{-1}\left( \frac{1}{p^*} \, ,
\, \frac{\beta-1}{\lambda-1} \, ; \,
\frac{|y|}{\kappa_{p,1-\alpha(\lambda-1),\alpha}} \right)
\right]^{\frac{1}{\lambda-1}} \un_{\left[ 0 \, ; \, B\left( \frac{1}{p^*} \, ,
\, \frac{\beta-1}{\lambda-1} \right) \right)}\left(
\frac{|y|}{\kappa_{p,1-\alpha(\lambda-1),\alpha}} \right).
\end{equation}
The  density is  again  nothing more  than  the $(p,\beta,\lambda)$-Gaussian  of
Definition~\ref{def:triparametricGaussian}.


\subsubsection{The Case $\lambda = 1$}

We  exclude  here  the  trivial  point  $\beta =  1$.   Now,  taking  $\alpha  =
\frac{1}{\beta}$  gives  $\opA_\alpha(\beta,1)=~(1,1)$.   We  know  that  the
minimizer for  $\beta =  1$ is given  by $g_{p,1}(x) =  Z_{p,1}^{-1} \exp\left(-
  |x|^{p^*} \right)$ with \linebreak $\displaystyle Z_{p,1} = \int_{\Rset} \exp(-|x|^{p^*})
dx  = \frac{2 \,  \Gamma\left(\frac{1}{p^*}\right)}{p^*}$~\cite{Ber12:06_1,GraRyz07}.

Following  again Appendix~\ref{app:StamDpTildaExistence},  we have  to determine
\begin{equation}
\opE_{\frac{1}{\beta}}\Big[        g_{p,1}       \Big](y)        =       \left[
  g_{p,1}(x(y))\right]^{\frac{1}{\beta}}       =      Z_{p,1}^{-\frac{1}{\beta}}
\exp\left(- \frac{|x(y)|^{p^*}}{\beta}\right)
\label{eq:rho_Descort_lambdaun}
\end{equation}
with
\begin{eqnarray*}
\frac{dy}{dx} & = & \Big[ g_{p,1} \Big]^{1-\frac{1}{\beta}}\\[4mm]
& = & Z_{p,1}^{\frac{1-\beta}{\beta}} \exp\left(- \frac{\beta-1}{\beta}
|x|^{p^*}\right),
\end{eqnarray*}
and thus
\begin{eqnarray*}
y(x) & = & Z_{p,1}^{\frac{1-\beta}{\beta}} \textrm{sign}(x) \int_0^{|x|} \exp\left(-
\frac{\beta-1}{\beta} t^{p^*}\right) dt.
\end{eqnarray*}
Viewing this integral in the complex plane (here in the real line), one can make
the change  of variables  $\tau = \frac{\beta-1}{\beta}  t^{p^*}$, i.e., \  $t =
\left( \frac{\beta-1}{\beta} \right)^{-\frac{1}{p^*}} \tau^{\frac{1}{p^*}}$ \ to
obtain
\begin{equation}
y(x) = \frac{Z_{p,1}^{\frac{1-\beta}{\beta}} }{p^* \left( \frac{\beta-1}{\beta}
\right)^{\frac{1}{p^*}}} \textrm{sign}(x) \int_0^{\frac{\beta-1}{\beta} |x|^{p^*}}
\tau^{\frac{1}{p^*}-1} \exp(-\tau) \, d\tau,
\end{equation}
where  $\left(  \frac{\beta-1}{\beta}  \right)^{\frac{1}{p^*}}$  is  complex  in
general, real only if $\frac{\beta-1}{\beta}  \ge 0$, i.e., if $\beta \not\in (0
\, ; \, 1)$.  \ One can  recognize in the integral the incomplete gamma function
$\displaystyle \G(a,x) = \int_0^x  t^{a-1} \exp(-t) dt$, defined for $\Real\{a\}
> 0$ and for any complex $x$~\cite{OlvLoz10}. We then obtain,
\begin{equation}
y(x) = \kappa_{p,\beta} \, \textrm{sign}(x) \, \left[ \left(
\frac{\beta-1}{\beta} \right)^{-\frac{1}{p^*}} \G\left(\frac{1}{p^*} \, ; \,
\frac{\beta-1}{\beta} \, |x|^{p^*} \right) \right],
\end{equation}
where  $\kappa_{p,\beta}  = \frac{Z_{p,1}^{\frac{1-\beta}{\beta}}}{p^*}$.   Note
that the term in square brackets is real and positive, and takes its values over
$\Rset_+$ if $\beta > 1$ (remember that we excluded the trivial situation $\beta
= 1$), and  over $\left[ 0 \, ;  \, \Gamma\left(\frac{1}{p^*}\right) \right)$ if
$\beta < 1$.

Denoting $\G^{-1}$ the inverse of the incomplete gamma function, this gives
\begin{equation}
\frac{1}{\beta} |x(y)|^p = \frac{1}{\beta-1} \G^{-1} \left( \frac{1}{p^*} \, ;
\, \left( \frac{\beta-1}{\beta} \right)^{\frac{1}{p^*}} \,
\frac{|y|}{\kappa_{p,1}}\right)
\end{equation}
defined for  $\frac{|y|}{\kappa_{p,1}} <  \frac{\Gamma(1/p^*)}{ \un_{(0 \,  ; \,
    1)}(\beta)}$ with the convention $1/0 = +\infty$. We thus achieve
\begin{equation}
\rho_{p,\beta,1}(y) \propto \exp\left( \frac{1}{1-\beta} \G^{-1} \left(
\frac{1}{p^*} \, ; \, \left( \frac{\beta-1}{\beta} \right)^{\frac{1}{p^*}} \,
\frac{|y|}{\kappa_{p,1}}\right) \right) \un_{\left[ 0 \, ; \,
\frac{\Gamma(1/p^*)}{\un_{(0;1)}(\beta)} \right)} \left(
\frac{|y|}{\kappa_{p,1}} \right).
\end{equation}
We again recover the $(p,\beta,\lambda)$-Gaussian.


\subsection{Symmetry through the Involution $\opT_p$.}

For $\lambda = 1,$ the result is trivial since $\opT_p(\beta,1) = (\beta,1)$ (see
Equation~\ref{eq:Involution}).

Now, for $\lambda \ne 1$, let us denote $(\overline{\beta},\overline{\lambda}) =
\opT_p(\beta,\lambda)  =  \left( \frac{p^*  \beta  +  \lambda-1}{p^* \lambda}  ,
  \frac{1}{\lambda}\right)$ the involutary transform of $(\beta,\lambda)$.  Some
simple  algebra allows to  show that  if $1  - \beta  p^* <  \lambda <  1$, then
$\overline{\lambda} > 1$, and reciprocally.   Thus, it is straightforward to see
that     $q_{p,\opT(\beta,\lambda)}    =    q_{p,\beta,\lambda}$     and    that
$\frac{1}{|1-\overline{\lambda}|} = \frac{\lambda}{|1-\lambda|}$, leading to
\begin{equation}
  g_{p,\opT_p(\beta,\lambda)} \propto \Big[ g_{p,\beta,\lambda} \Big]^\lambda.
\end{equation}

Now, if  $\lambda <  1$, the  optimal bound is  given by  $K_{p,\beta,\lambda} =
\alpha^2  \,  K_{p,\alpha \beta  ,  \alpha \beta}$  (see~Equations~\ref{eq:AlphaLp}
and~\ref{eq:AffineComplexity}).   Then,  $\overline{\lambda}  >  1$  and  thus
$K_{p,\opT_p(\beta,\lambda)}  =  \overline{\alpha}^2  \,  K_{p,\overline{\alpha}
  \overline{\beta}    ,   1    +    \overline{\alpha}   (\overline{\lambda}-1)}$
(see~Equations~\ref{eq:AlphaLpbar}  and~\ref{eq:AffineComplexity},  where $\alpha$
is  here  denoted  by  $\overline{\alpha}$ and  $(\beta,\lambda)$  is  obviously
replaced   by    $(\overline{\beta},\overline{\lambda})$).    Simple   algebraic
manipulations allow  us to see that  $\overline{\alpha} = \frac{\lambda}{\beta}$
and that  $\opT_p(\alpha\beta,\alpha\beta) = (\overline{\alpha} \overline{\beta}
,      1      +      \overline{\alpha}      (\overline{\lambda}-1))$,      hence
$K_{p,\opT_p(\beta,\lambda)}   =  \left(   \frac{\lambda}{\beta}   \right)^2  \,
K_{p,\opT_p(\alpha\beta,\alpha\beta)}      =      (\lambda     \alpha)^2      \,
K_{p,\alpha\beta,\alpha\beta}$ from  Proposition~\ref{prop:Symmetries}.  We then
obtain again  $K_{p,\opT_p(\beta,\lambda)} = \lambda^2  \, K_{p,\beta,\lambda}$.
The  case $\lambda  >  1$ is  treated  in a  similar way,  leading  to the  same
conclusion.


\subsection{Explicit Expression of the Lower Bound.}

Let us first  consider the case $\lambda <  1$. Thus, $\zeta_{p,\beta,\lambda} =
\beta$               (see               Equation~\ref{eq:zeta}).               From
Equations~\ref{eq:AlphaLp}and~\ref{eq:rho_Descort_lambdainf}                      and
Equation~\ref{eq:AffineComplexity}, we have
\begin{eqnarray*}
K_{p,\beta,\lambda} & = & \alpha^2 \, K_{p,\alpha \beta,\alpha \beta}\\[4mm]
& = & \frac{(\alpha \beta)^2 \, K_{p,\alpha \beta,\alpha \beta}}{\beta^2}
\end{eqnarray*}
that        is,         noting        that        $\alpha         \beta        =
\frac{\zeta_{p,\beta,\lambda}}{\zeta_{p,\beta,\lambda} + |1 - \lambda|}$,
\begin{equation}
K_{p,\beta,\lambda} = \frac{\left(
\frac{\zeta_{p,\beta,\lambda}}{\zeta_{p,\beta,\lambda} + |1-\lambda|} \right)^2
\, K_{p,\frac{\zeta_{p,\beta,\lambda}}{\zeta_{p,\beta,\lambda} + |1-\lambda|} ,
\frac{\zeta_{p,\beta,\lambda}}{\zeta_{p,\beta,\lambda} +
|1-\lambda|}}}{\zeta_{p,\beta,\lambda}^2},
\label{eq:Kopt_zeta}
\end{equation}
when   $\lambda    >   1$.     Thus,   $\zeta_{p,\beta,\lambda}   =    \beta   +
\frac{\lambda-1}{p^*}$ (see Equation~\ref{eq:zeta}).  Denoting $(\overline{\beta} ,
\overline{\lambda}) = \opT_p(\beta,\lambda)$ (see Equation~\ref{eq:Involution}) and
noting  that   $\overline{\lambda}  =   \frac{1}{\lambda}  <  1$   and  applying
successively     Equation~\ref{eq:KInvolution}    (see     previous    subsection),
Equations~\ref{eq:AlphaLp},           \ref{eq:rho_Descort_lambdainf}           and~\ref{eq:AffineComplexity}     (where    $\alpha$    is     denoted    here
$\overline{\alpha}$   and    $(\beta,\lambda)$   is   obviously    replaced   by
$(\overline{\beta} , \overline{\lambda})$), we have
\begin{eqnarray*}
K_{p,\beta,\lambda}
& = & \frac{1}{\lambda^2} \, K_{p,\overline{\beta},\overline{\lambda}}\\[4mm]
& = & \frac{(\overline{\alpha} \overline{\beta})^2 \, K_{p,\overline{\alpha}
\overline{\beta} , \overline{\alpha} \overline{\beta}}}{\lambda^2
\overline{\beta}^2}.
\end{eqnarray*}
It  is straightforward  to  see  that $\lambda^2  \overline{\beta}^2  = \beta  +
\frac{\lambda-1}{p^*}  =  \zeta_{p,\beta,\lambda}$  and that  $\overline{\alpha}
\overline{\beta}  = \frac{p^*  \beta  + \lambda-1}{p^*  \beta  + \lambda  - 1  +
  p^*(\lambda-1)}   =  \frac{\zeta_{p,\beta,\lambda}}{\zeta_{p,\beta,\lambda}  +
  |\lambda-1|}$ so that Equation~\ref{eq:Kopt_zeta} still holds.

The  case   $\lambda  =  1$  can  be   viewed  as  the  limit   case,  or  using
Equations~\ref{eq:rho_Descort_lambdaun} and \ref{eq:AffineComplexity}   to  conclude
that Equation~\ref{eq:Kopt_zeta} still holds. It remains to evaluate $l^2 K_{p,l,l}
=   l^2    C_{p,l,l}(g_{p,l})$   with   $l   \le   1$.     The   evaluation   of
$\sqrt{N_l(g_{p,l})}$ and  $\sqrt{F_{p,l}(g_{p,l})}$ was conducted  for instance
in~\cite{LutYan05}, which gives with our notations, for $l < 1$
\begin{equation}
l^2 \, K_{p,l,l} = \left[ \frac{2}{p^*} \left( \frac{p^* l}{1-l}
\right)^{\frac{1}{p^*}} \left( \frac{p^* l}{(p^*+1) l - 1}
\right)^{\frac{l}{1-l}+\frac{1}{p}} B\left( \frac{1}{p^*} , \frac{1}{1-l} -
\frac{1}{p^*} \right) \right]^2
\end{equation}
and
\begin{equation}
K_{p,1,1} = \left[ \frac{2 \, e^{\frac{1}{p^*}} \Gamma\left( \frac{1}{p^*}
\right)}{p^{* \, {\frac{1}{p}}}} \right]^2.
\end{equation}
Noting  that  $\frac{1}{1-l}-\frac{1}{p^*} =  \frac{l}{1-l}  + \frac{1}{p}$  and
taking    $l    =    \frac{\zeta_{p,\beta,\lambda}}{\zeta_{p,\beta,\lambda}    +
  |1-\lambda|}$, we achieve the wanted result from Equation~\ref{eq:Kopt_zeta}.




\begin{thebibliography}{10}

\bibitem{Sen11}
Sen, K.D.
\newblock \relax {\em Statistical Complexity. Application in Electronic
  Structure};
\newblock Springer Verlag: New York, NY, USA, 2011.

\bibitem{LopMan95}
{L\'opez-Ruiz}, R.; Mancini, H.L.; Calbet, X.
\newblock A statistical measure of complexity.
\newblock \relax {\em Phys. Lett. A} {\bf 1995}, \relax {\em
  209},~321--326.

\bibitem{Lop05}
{L\'opez-Ruiz}, R.
\newblock Shannon information, {LMC} complexity and {R\'e}nyi entropies: A
  straightforward approach.
\newblock \relax {\em Biophys. Chem.} {\bf 2005}, \relax {\em
  115},~215--218.

\bibitem{ChaMou05}
Chatzisavvas, K.C.; Moustakidis, C.C.; Panos, C.P.
\newblock Information entropy, information distances, and complexity in atoms.
\newblock \relax {\em J. Chem. Phys.} {\bf 2005}, \relax {\em
  123},~174111.

\bibitem{SenPan07}
Sen, K.D.; Panos, C.P.; Chatzisavvas, K.C.; Moustakidis, C.C.
\newblock Net {F}isher information measure versus ionization potential and
  dipole polarizability in atoms.
\newblock \relax {\em Phys. Lett. A} {\bf 2007}, \relax {\em
  364},~286--290.

\bibitem{BiaRud10}
Bialynicki-{B}irula, I.; Rudnicki, {\L}.
\newblock Entropic uncertainty relations in quantum physics. In \relax {\em
  Statistical Complexity. Application in Electronic Structure}; Sen, K.D., Ed.;
  Springer: Berlin, Germay, 2010.

\bibitem{DehLop10:12}
Dehesa, J.S.; L\'opez-Rosa, S.; Manzano, D.
\newblock Entropy and complexity analyses of {$D$}-dimensional quantum systems.
  In \relax {\em Statistical Complexities: Application to Electronic
  Structure}; Sen, K.D., Ed.; Springer: Berlin, \linebreak Germany, 2010.

\bibitem{Hua13}
Huang, Y.
\newblock Entanglement detection: Complexity and {S}hannon entropic criteria.
\newblock \relax {\em IEEE Trans. Inf. Theor.} {\bf 2013},
  \relax {\em 59},~6774--6778.

\bibitem{EbeMol00}
Ebeling, W.; Molgedey, L.; Kurths, J.; Schwarz, U.
\newblock Entropy, complexity, predictability and data analysis of time series
  and letter sequences. In \relax {\em Theory of Disaster}; Springer Verlag: Berlin, Germany, 2000.

\bibitem{AngAnt08:04}
Angulo, J.C.; Antol{\'i}n, J.
\newblock Atomic complexity measures in position and momentum spaces.
\newblock \relax {\em J. Chem. Phys.} {\bf 2008}, \relax {\em
  128},~164109.

\bibitem{RosOsp06}
Rosso, O.A.; Ospina, R.; Frery., A.C.
\newblock Classification and verification of handwritten signatures with time
  causal information theory quantifiers.
\newblock \relax {\em PLoS One} {\bf 2016}, \relax {\em 11},~e0166868.

\bibitem{TorSan17}
Toranzo, I.V.; S{\'a}nchez-Moreno, P.; Rudnicki, {\L}.; Dehesa, J.S.
\newblock One-parameter {F}isher-{R\'e}nyi complexity: Notion and hydrogenic
  applications.
\newblock \relax {\em Entropy} {\bf 2017}, \relax {\em 19},~16.

\bibitem{AngRom00}
Angulo, J.C.; Romera, E.; Dehesa, J.S.
\newblock Inverse atomic densities and inequalities among density functionals.
\newblock \relax {\em J. Math. Phys.} {\bf 2000}, \relax {\em
  41},~7906--7917.

\bibitem{DehLop10:07}
Dehesa, J.S.; L{\'o}pez-Rosa, S.; Mart{\'\i}nez-{F}inkelshtein, A.;
  {Y\'a\~nez}, R.J.
\newblock Information theory of {D}-dimensional hydrogenic systems: Application
  to circular and {R}ydberg states.
\newblock \relax {\em Int. J. Quantum Chem.} {\bf 2010},
  \relax {\em 110},~1529--1548.

\bibitem{LopEsq10}
{L\'opez-Rosa}, S.; Esquievel, R.O.; Angulo, J.C.; Antol{\'\i}n, J.; Dehesa,
  J.S.; {Flores-Gallegos}, N.
\newblock Fisher information study in position and momentum spaces for
  elementary chemical reactions.
\newblock \relax {\em J. Chem. Theor. Comput.} {\bf 2010},
  \relax {\em 6},~145--154.

\bibitem{RomSan06}
Romera, E.; {S\'anchez-Moreno}, P.; Dehesa, J.S.
\newblock Uncertainty relation for {F}isher information of {$D$}-dimensional
  single-particle systems with central potentials.
\newblock \relax {\em J. Math. Phys.} {\bf 2006}, \relax {\em
  47},~103504.

\bibitem{SanZoz11}
{S\'anchez-Moreno}, P.; Zozor, S.; Dehesa, J.S.
\newblock Upper bounds on {S}hannon and {R\'enyi} entropies for central
  potential.
\newblock \relax {\em J. Math. Phys.} {\bf 2011}, \relax {\em
  52},~022105.

\bibitem{ZozPor11}
Zozor, S.; Portesi, M.; {S\'anchez-Moreno}, P.; Dehesa, J.S.
\newblock Position-momentum uncertainty relation based on moments of arbitrary
  order.
\newblock \relax {\em Phys. Rev. A} {\bf 2011}, \relax {\em 83},~052107.

\bibitem{MarPla00}
Martin, M.T.; Plastino, A.R.; Plastino, A.
\newblock Tsallis-like information measures and the analysis of complex
  signals.
\newblock \relax {\em Phys. A Stat. Mech. Appl.} {\bf 2000}, \relax {\em 275},~262--271.

\bibitem{PorPla96}
Portesi, M.; Plastino, A.
\newblock Generalized entropy as measure of quantum uncertainty.
\newblock \relax {\em Phys. A Stat. Mech. Appl.} {\bf 1996}, \relax {\em 225},~412--430.

\bibitem{MasPan98}
Massen, S.E.; Panos, C.P.
\newblock Universal property of the information entropy in atoms, nuclei and
  atomic clusters.
\newblock \relax {\em Phys. Lett. A} {\bf 1998}, \relax {\em
  246},~530--533.

\bibitem{GueSan11}
Guerrero, A.; {Sanchez-Moreno}, P.; Dehesa, J.S.
\newblock Upper bounds on quantum uncertainty products and complexity measures.
\newblock \relax {\em Phys. Rev. A} {\bf 2011}, \relax {\em 84},~042105.

\bibitem{DehSan06}
Dehesa, J.S.; S{\'a}nchez-Moreno, P.; Y{\'a\~n}ez, R.J.
\newblock Cr{\'a}mer-{R}ao information plane of orthogonal hypergeometric
  polynomials.
\newblock \relax {\em J. Comput. Appl. Math.} {\bf
  2006}, \relax {\em 186},~523--541.

\bibitem{AntAng09}
Antol{\'i}n, J.; Angulo, J.C.
\newblock Complexity analysis of ionization processes and isoelectronic series.
\newblock \relax {\em Int. J. Quantum Chem} {\bf 2009},
  \relax {\em 109},~586--593.

\bibitem{AngAnt08}
Angulo, J.C.; Antol{\'\i}n, J.; Sen, K.D.
\newblock {F}isher-{S}hannon plane and statistical complexity of atoms.
\newblock \relax {\em Phys. Lett. A} {\bf 2008}, \relax {\em
  372},~670--674.

\bibitem{RomDeh04}
Romera, E.; Dehesa, J.S.
\newblock The {F}isher-{S}hannon information plane, an electron correlation
  tool.
\newblock \relax {\em J. Chem. Phys.} {\bf 2004}, \relax {\em
  120},~8906--8912.

\bibitem{PueTor17}
Puertas-Centeno, D.; Toranzo, I.V.; Dehesa, J.S.
\newblock The biparametric {F}isher-{R\'e}nyi complexity measure and its
  application to the multidimensional blackbody radiation.
\newblock \relax {\em J. Stat. Mech. Theor. Exp.}
  {\bf 2017}, \relax {\em 2017},~043408.

\bibitem{SobPue17}
Sobrino-Coll, N.; Puertas-Centeno, D.; Toranzo, I.V.; Dehesa, J.S.
\newblock Complexity measures and uncertainty relations of the high-dimensional
  harmonic and hydrogenic systems.
\newblock \relax {\em J. Stat. Mech. Theor. Exp.}
  {\bf 2017}, \relax {\em 2017},~083102.

\bibitem{PueTor17:preprint}
Puertas-Centeno, D.; Toranzo, I.V.; Dehesa, J.S.
\newblock Biparametric complexities and the generalized {P}lanck radiation law.
\newblock \relax {\em arXiv} {\bf 2017}, arXiv:1704.08452v.

\bibitem{Sha48}
Shannon, C.E.
\newblock A mathematical theory of communication.
\newblock \relax {\em Bell Syst. Tech. J.} {\bf 1948}, \relax
  {\em 27},~623--656.

\bibitem{Fis22}
Fisher, R.A.
\newblock On the mathematical foundations of theoretical statistics.
\newblock \relax {\em Phil. Trans. R. Soc. A} {\bf
  1922}, \relax {\em 222},~309--368.

\bibitem{RudTor16}
Rudnicki, {\L}.; Toranzo, I.V.; S{\'a}nchez-Moreno, P.; Dehesa., J.S.
\newblock Monotone measures of statistical complexity.
\newblock \relax {\em Phys. Lett. A} {\bf 2016}, \relax {\em
  380},~377--380.

\bibitem{Ren61}
R{\'e}nyi, A.
\newblock On measures of entropy and information.
\newblock \relax In Proceeding of the 4th Berkeley Symposium on
  Mathematical Statistics and Probability, Berkeley, CA, USA, 20 June--30 July 1960; pp. 547--561.

\bibitem{LutYan05}
Lutwak, E.; Yang, D.; Zhang, G.
\newblock Cram{\'e}r-{R}ao and moment-entropy inequalities for {R\'e}nyi
  entropy and generalized {F}isher information.
\newblock \relax {\em IEEE Trans. Inf. Theor.} {\bf 2005},
  \relax {\em 51},~473--478.

\bibitem{Ber12:06_1}
Bercher, J.F.
\newblock On a {$(\beta,q)$}-generalized {F}isher information and inequalities
  invoving {$q$-G}aussian distributions.
\newblock \relax {\em J. Math. Phys.} {\bf 2012}, \relax {\em
  53},~063303.

\bibitem{LutLv12}
Lutwak, E.; Lv, S.; Yang, D.; Zhang, G.
\newblock Extension of {F}isher information and {S}tam's inequality.
\newblock \relax {\em IEEE Trans. Inf. Theor.} {\bf 2012},
  \relax {\em 58},~1319--1327.

\bibitem{Sta59}
Stam, A.J.
\newblock Some inequalities satisfied by the quantities of information of
  {F}isher and {S}hannon.
\newblock \relax {\em Inf. Control} {\bf 1959}, \relax {\em
  2},~101--112.

\bibitem{CovTho06}
Cover, T.M.; Thomas, J.A.
\newblock \relax {\em Elements of Information Theory}, 2nd ed.;
\newblock John Wiley \& Sons: Hoboken, NJ, USA, 2006.

\bibitem{Kay93}
Kay, S.M.
\newblock \relax {\em Fundamentals for Statistical Signal Processing:
  Estimation Theory};
\newblock Prentice Hall: Upper Saddle River, NJ, USA, 1993.

\bibitem{LehCas98}
Lehmann, E.L.; Casella, G.
\newblock \relax {\em Theory of Point Estimation}, 2nd ed.;
\newblock Springer-Verlag: New York, NY, USA, 1998.

\bibitem{Bou58}
Bourret, R.
\newblock A note on an information theoretic form of the uncertainty principle.
\newblock \relax {\em Inf. Control} {\bf 1958}, \relax {\em
  1},~398--401.

\bibitem{Lei59}
Leipnik, R.
\newblock Entropy and the uncertainty principle.
\newblock \relax {\em Inf. Control} {\bf 1959}, \relax {\em
  2},~64--79.

\bibitem{VigBer03}
Vignat, C.; Bercher, J.F.
\newblock Analysis of signals in the {F}isher-{S}hannon information plane.
\newblock \relax {\em Phys. Lett. A} {\bf 2003}, \relax {\em 312},~27--33.

\bibitem{SanLop08}
Sa{\~n}udo, J.; L{\'o}pez-Ruiz, R.
\newblock Statistical complexity and {F}isher-{S}hannon information in the
  {H}-atom.
\newblock \relax {\em Phys. Lett. A} {\bf 2008}, \relax {\em
  372},~5283--5286.

\bibitem{DehLop09}
Dehesa, J.S.; L{\'o}pez-Rosa, S.; Manzano, D.
\newblock Configuration complexities of hydrogenic atoms.
\newblock \relax {\em Eur. Phys. J. D} {\bf 2009}, \relax {\em
  55},~539--548.

\bibitem{LopSan12}
L{\'o}pez-Ruiz, R.; Sa{\~n}udo, J.; Romera, E.; Calbet, X.
\newblock Statistical complexity and {F}isher-{S}hannon information:
  Application. In \relax {\em Statistical Complexity. Application in
  Electronic Structure}; Springer Verlag: New York, NY, \linebreak USA, 2012.

\bibitem{Man12}
Manzano, D.
\newblock Statistical measures of complexity for quantum systems with
  continuous variables.
\newblock \relax {\em Phys. A Stat. Mech. Appl.} {\bf 2012}, \relax {\em 391},~6238--6244.

\bibitem{GelTsa04}
Gell-Mann, M.; Tsallis, C., Eds.
\newblock \relax {\em Nonextensive Entropy: Interdisciplinary Applications};
\newblock Oxford University Press: Oxford, UK, 2004.

\bibitem{Tsa09}
Tsallis, C.
\newblock \relax {\em Introduction to Nonextensive Statistical Mechanics---Approaching a Complex World};
\newblock Springer Verlag: New York, NY, USA, 2009.

\bibitem{PueRud17}
Puertas-Centeno, D.; Rudnicki, L.; Dehesa, J.S.
\newblock {LMC-R\'e}nyi complexity monotones, heavy tailed distributions and
  stretched-escort deformation.
\newblock In preparation,  2017.

\bibitem{Agu06}
Agueh, M.
\newblock Sharp {G}agliardo-{N}irenberg inequalities and mass transport theory.
\newblock \relax {\em J. Dyn. Differ. Equ.} {\bf
  2006}, \relax {\em 18},~1069--1093.

\bibitem{Agu08}
Agueh, M.
\newblock Sharp {G}agliardo-{N}irenberg inequalities via {$p$-L}aplacian type
  equations.
\newblock \relax {\em Nonlinear Differ. Equ. Appl.} {\bf
  2008}, \relax {\em 15},~457--472.

\bibitem{CosHer03}
Costa, J.A.; {Hero III\@}, A.O.; Vignat, C.
\newblock On solutions to multivariate maximum {$\alpha$}-entropy problems.
\newblock In Proceedings of the 4th International Workshop on Energy Minimization
  Methods in Computer Vision and Pattern Recognition, Lisbon, Portugal, 7--9 July 2003; pp. 211--226.

\bibitem{JohVig07}
Johnson, O.; Vignat, C.
\newblock Some results concerning maximum {R\'e}nyi entropy distributions.
\newblock \relax {\em Ann. Instit. Henri Poincare B Probab. 
  Stat.} {\bf 2007}, \relax {\em 43},~339--351.

\bibitem{NanMai07}
Nanda, A.K.; Maiti, S.S.
\newblock {R\'e}nyi information measure for a used item.
\newblock \relax {\em Inf. Sci.} {\bf 2007}, \relax {\em
  177},~4161--4175.

\bibitem{PanDit51}
Panter, P.F.; Dite, W.
\newblock Quantization distortion in pulse-count modulation with nonuniform
  spacing of levels.
\newblock \emph{Proc. IRE} {\bf 1951}, \relax {\em
  39},~44--48.

\bibitem{Llo82}
Loyd, S.P.
\newblock Least squares quantization in {PCM}.
\newblock \relax {\em IEEE Trans. Inf. Theor.} {\bf 1982},
  \relax {\em 28},~129--137.

\bibitem{GerGra92}
Gersho, A.; Gray, R.M.
\newblock \relax {\em Vector Quantization and Signal Compression};
\newblock Kluwer: Boston, MA, USA, 1992.

\bibitem{Cam65}
Campbell, L.L.
\newblock A coding theorem and {R\'enyi's} entropy.
\newblock \relax {\em Inf. Control} {\bf 1965}, \relax {\em
  8},~423--429.

\bibitem{Hum81}
Humblet, P.A.
\newblock Generalization of the {H}uffman coding to minimize the probability of
  buffer overflow.
\newblock \relax {\em IEEE Trans. Inf. Theor.} {\bf 1981},
  \relax {\em 27},~230--232.

\bibitem{Bae06}
Baer, M.B.
\newblock Source coding for quasiarithmetic penalties.
\newblock \relax {\em IEEE Trans. Inf. Theor.} {\bf 2006},
  \relax {\em 52},~4380--4393.

\bibitem{Ber09}
Bercher, J.F.
\newblock Source coding with escort distributions and {R\'e}nyi entropy bounds.
\newblock \relax {\em Phys. Lett. A} {\bf 2009}, \relax {\em
  373},~3235--3238.

\bibitem{BobChi15}
Bobkov, S.G.; Chistyakov, G.P.
\newblock Entropy Power Inequality for the {R\'e}nyi Entropy.
\newblock \relax {\em IEEE Trans. Inf. Theor.} {\bf 2015},
  \relax {\em 61},~708--714.

\bibitem{Par06}
Pardo, L.
\newblock \relax {\em Statistical Inference Based on Divergence Measures};
\newblock Chapman \& Hall: Boca Raton, FL, USA,  2006.

\bibitem{Har01}
Harte, D.
\newblock \relax {\em Multifractals: Theory and Applications}, 1st ed.;
\newblock Chapman \& Hall/CRC: Boca Raton,  FL, USA, 2001.

\bibitem{JizAri04}
Jizba, P.; Arimitsu, T.
\newblock The world according to {R\'enyi}: Thermodynamics of multifractal
  systems.
\newblock \relax {\em Ann. Phys.} {\bf 2004}, \relax {\em 312},~17--59.

\bibitem{BecSch93}
Beck, C.; Sch{\"o}gl, F.
\newblock \relax {\em Thermodynamics of Chaotic Systems: An Introduction}.
\newblock Cambridge University Press: Cambridge, UK, 1993.

\bibitem{Bia06}
Bialynicki-{B}irula, I.
\newblock Formulation of the uncertainty relations in terms of the {R\'e}nyi
  entropies.
\newblock \relax {\em Phys. Rev. A} {\bf 2006}, \relax {\em 74},~052101.

\bibitem{ZozVig07}
Zozor, S.; Vignat, C.
\newblock On classes of non-{G}aussian asymptotic minimizers in entropic
  uncertainty principles.
\newblock \relax {\em Phys. A Stat. Mech. Appl.} {\bf 2007}, \relax {\em 375},~499--517.

\bibitem{ZozVig07:09}
Zozor, S.; Vignat, C.
\newblock Forme entropique du principe d'incertitude et cas d'{\'e}galit{\'e}
  asymptotique.
\newblock In Proceedings of the Colloque GRETSI, Troyes, France, 11--14 Septembre 2007. (In French)

\bibitem{ZozPor08}
Zozor, S.; Portesi, M.; Vignat, C.
\newblock Some extensions to the uncertainty principle.
\newblock \relax {\em Phys. A Stat. Mech. Appl.} {\bf 2008}, \relax {\em 387},~4800--4808.

\bibitem{ZozBos14}
Zozor, S.; Bosyk, G.M.; Portesi, M.
\newblock General entropy-like uncertainty relations in finite dimensions.
\newblock \relax {\em J. Phys. A} {\bf 2014}, \relax {\em
  47},~495302.

\bibitem{JizDun15}
Jizba, P.; Dunningham, J.A.; Joo, J.
\newblock Role of information theoretic uncertainty relations in quantum
  theory.
\newblock \relax {\em Ann. Phys.} {\bf 2015}, \relax {\em 355},~87--115.

\bibitem{JizMa16}
Jizba, P.; Ma, Y.; Hayes, A.; Dunningham, J.A.
\newblock One-parameter class of uncertainty relations based on entropy power.
\newblock \relax {\em Phys. Rev. E} {\bf 2016}, \relax {\em
  93},~060104.

\bibitem{Ham78}
Hammad, P.
\newblock Mesure d'ordre $\alpha$ de l'information au sens de {F}isher.
\newblock \relax {\em Rev. Stat. Appl.} {\bf 1978}, \relax
  {\em 26},~73--84. (In French)

\bibitem{PenPla98}
Pennini, F.; Plastino, A.R.; Plastino, A.
\newblock R{\'e}nyi entropies and {F}isher information as measures of
  nonextensivity in a {T}sallis setting.
\newblock \relax {\em Phys. A Stat. Mech. Appl.} {\bf 1998}, \relax {\em 258},~446--457.

\bibitem{ChiPen00}
Chimento, L.P.; Pennini, F.; Plastino, A.
\newblock {N}audts-like duality and the extreme {F}isher information principle.
\newblock \relax {\em Phys. Rev. E} {\bf 2000}, \relax {\em
  62},~7462--7465.

\bibitem{CasChi02}
Casas, M.; Chimento, L.; Pennini, F.; Plastino, A.; Plastino, A.R.
\newblock {F}isher information in a {T}sallis non-extensive environment.
\newblock \relax {\em Chaos Solitons Fractals} {\bf 2002}, \relax {\em
  13},~451--459.

\bibitem{PenPla07}
Pennini, F.; Plastino, A.; Ferri, G.L.
\newblock Semiclassical information from deformed and escort information
  measures.
\newblock \relax {\em Phys. A Stat. Mech. Appl.} {\bf 2007}, \relax {\em 383},~782--796.

\bibitem{Ber12:06_2}
Bercher, J.F.
\newblock On generalized {C}ram{\'e}r-{R}ao inequalities, generalized {F}isher
  information and characterizations of generalized {$q$-G}aussian
  distributions.
\newblock \relax {\em J. Phys. A} {\bf 2012}, \relax {\em
  45},~255303.

\bibitem{Ber13:08}
Bercher, J.F.
\newblock Some properties of generalized {F}isher information in the context of
  nonextensive thermostatistics.
\newblock \relax {\em Phys. A Stat. Mech. Appl.} {\bf 2013}, \relax {\em 392},~3140--3154.

\bibitem{Ber10}
Bercher, J.F.
\newblock On escort distributions, $q$-gaussians and {F}isher information.
\newblock In \relax Proceedings of the 30th International
  Workshop on Bayesian Inference and Maximum Entropy Methods in Science and Engineering, Chamonix, France, 4--9 July 2010; pp. 208--215.

\bibitem{Dev86}
Devroye, L.
\newblock \relax {\em Non-Uniform Random Variate Generation};
\newblock Springer: New York, NY, USA, 1986.

\bibitem{Kor17}
Korbel, J.
\newblock Rescaling the nonadditivity parameter in {T}sallis thermostatistics.
\newblock \relax {\em Phys. Lett. A} {\bf 2017}, \relax {\em
  381},~2588--2592.

\bibitem{OlvLoz10}
Olver, F.W.J.; Lozier, D.W.; Boisvert, R.F.; Clark, C.W.
\newblock \relax {\em {NIST} Handbook of Mathematical Functions};
\newblock Cambridge University Press: Cambridge, UK,  2010.

\bibitem{AbrSte70}
Abramowitz, M.; Stegun, I.A.
\newblock \relax {\em Handbook of Mathematical Functions with Formulas, Graphs,
  and Mathematical Tables};
\newblock Dover: New York, NY, USA, 1970.

\bibitem{GraRyz07}
Gradshteyn, I.S.; Ryzhik, I.M.
\newblock \relax {\em Table of Integrals, Series, and Products}, 7th ed.;
\newblock Academic Press: San Diego, CA, USA, 2007.

\bibitem{PruBry90:v3}
Prudnikov, A.P.; Brychkov, Y.A.; Marichev, O.I.
\newblock \relax {\em Integrals and Series, Volume 3: More Special Functions};
\newblock Gordon and Breach: New York, NY, USA, 1990.

\bibitem{Nie79}
Nieto, M.M.
\newblock Hydrogen atom and relativistic pi-mesic atom in {N}-space dimensions.
\newblock \relax {\em Am. J. Phys.} {\bf 1979}, \relax {\em
  47},~1067--1072.

\bibitem{YanVan94}
{Y\'a\~nez}, R.J.; {van Assche}, W.; Dehesa, J.S.
\newblock Position and momentum information entropies of the {$D$}-dimensional
  harmonic oscillator and hydrogen atoms.
\newblock \relax {\em Phys. Rev. A} {\bf 1994}, \relax {\em
  50},~3065--3079.

\bibitem{Ave02}
Avery, J.S.
\newblock \relax {\em Hyperspherical Harmonics and Generalized {S}turmians};
\newblock Kluwer Academic: Dordrecht, \linebreak The Netherlands,~2002.

\bibitem{YanVan99}
{Y\'a\~nez}, R.J.; {van Assche}, W.; {Gonz\'alez-F\'erez}, R.;
  {S\'anchez-Dehesa}, J.
\newblock Entropic integrals of hyperspherical harmonics and spatial entropy of
  {$D$}-dimensional central potential.
\newblock \relax {\em J. Math. Phys.} {\bf 1999}, \relax {\em
  40},~5675--5686.

\bibitem{LouSha60:p1}
Louck, J.D.; Shaffer, W.H.
\newblock Generalized orbital angular momentum of the {$n$}-fold degenerate
  quantum-mechanical oscillator. Part {I}. The twofold degenerate oscillator.
\newblock \relax {\em J. Mol. Spectrosc.} {\bf 1960}, \relax
  {\em 4},~285--297.

\bibitem{LouSha60:p2}
Louck, J.D.; Shaffer, W.H.
\newblock Generalized orbital angular momentum of the {$n$}-fold degenerate
  quantum-mechanical oscillator. Part {II}. The {$n$}-fold degenerate
  oscillator.
\newblock \relax {\em J. Mol. Spectrosc.} {\bf 1960}, \relax
  {\em 4},~298--333.

\bibitem{Nir59}
Nirenberg, L.
\newblock On elliptical partial differential equations.
\newblock \relax {\em Ann. Scuola Norm. Super. Pisa Cl.
  Sci.} {\bf 1959}, \relax {\em 13},~115--169.

\bibitem{GelFom63}
Gelfand, I.M.; Fomin, S.V.
\newblock \relax {\em Calculus of Variations};
\newblock Prentice Hall: Englewood Cliff, NJ, USA,  1963.

\bibitem{Bru04}
{Van Brunt}, B.
\newblock \relax {\em The Calculus of Variations};
\newblock Springer Verlag: New York, NY, USA, 2004.

\end{thebibliography}
\end{document}